\tikzstyle{every picture} = [>=latex]
\def\cf#1{{\EuScript#1}}
\let\sem\setminus
\def\TM#1#2{{\cf T\cf M}_{#2}(#1)}
\def\SC#1{{\cf S\cf C}(#1)}
\def\Clos{\mbox{\rm Clos}}
\def\TMC#1#2#3{{\cf T\cf M}_{#2}^{#3}(#1)}
\newcommand{\MSO}{\ensuremath{\mathrm{MSO}}\xspace}
\newcommand{\MSOi}{\ensuremath{\mathrm{MSO}_1}\xspace}
\newcommand{\CMSOi}{\ensuremath{\mathrm{CMSO}_1}\xspace}
\newcommand{\MSOii}{\ensuremath{\mathrm{MSO}_2}\xspace}
\def\prebox#1{\mathop{\mbox{\sl #1}}}
\keywords{tree-depth; clique-width; shrub-depth; MSO logic; transduction}
\begin{document}

\title{Shrub-depth: Capturing Height of Dense Graphs}

\author[R.~Ganian]{Robert Ganian}
\address{Algorithms and Complexity Group, TU Wien, Vienna, Austria}
 \email{rganian@gmail.com}
\author[P.~Hlin\v{e}n\'y]{Petr Hlin\v{e}n\'y}
\address{Faculty of Informatics, Masaryk University, Brno, Czech Republic}
 \email{hlineny@fi.muni.cz}
\author[J.~Ne{\v s}et{\v r}il]{Jaroslav Ne{\v s}et{\v r}il}
\address{Computer Science Inst.\ of Charles University (IUUK), Praha, Czech Republic}
 \email{nesetril@iuuk.mff.cuni.cz}
\author[J.~Obdr\v{z}\'alek]{Jan Obdr\v{z}\'alek}
\address{Faculty of Informatics, Masaryk University, Brno, Czech Republic}
 \email{obdrzalek@fi.muni.cz}
\author[P.~Ossona~de~Mendez]{Patrice~Ossona~de~Mendez}
\address{CNRS UMR 8557, \'Ecole des Hautes \'Etudes en Sciences Sociales, Paris, France}
 \email{pom@ehess.fr}

 \thanks{R.~Ganian, P.~Hlin\v{e}n\'y, J.~Ne{\v s}et{\v r}il and
	J.~Obdr\v{z}\'alek have been
	supported by the Institute for Theoretical Computer Science (CE-ITI),
	Czech Science Foundation project No. P202/12/G061.
 	J.~Ne{\v s}et{\v r}il and P.~Ossona~de~Mendez have been supported by the project
	LL1201 CORES of the Ministry of Education of the Czech republic.
    Robert Ganian also acknowledges support from the Austrian Science Fund (FWF, project P31336).
    }

\begin{abstract}
  The recent increase of interest in the graph invariant
  called {\em tree-depth} and in its applications in algorithms and logic on
  graphs led to a natural question:
  is there an analogously useful ``depth'' notion also for dense graphs
  (say; one which is stable under graph complementation)?
  To this end, in a 2012 conference paper,
  a new notion of {\em shrub-depth} has been introduced,
  such that it is related to the established notion of clique-width in a similar way as
  tree-depth is related to tree-width.
  Since then shrub-depth has been successfully used in several research papers.
  Here we provide an in-depth review of the definition and basic properties
  of shrub-depth, and we focus on its logical aspects which turned
  out to be most useful.
  In particular, we use shrub-depth to give a characterization of the lower
  $\omega$ levels of the MSO$_1$ transduction hierarchy of simple graphs.
% 
% \medskip{\bf ACM subject classification:}
% Theory of computation $\to$ Logic, Finite Model Theory;
% Mathematics of computing $\to$ Graph theory.
\end{abstract}

\maketitle

\section{Introduction}
\label{sec:introduction}

In this paper, we are interested in a structural graph parameter that is intermediate
between clique-width and tree-depth, sharing the nice properties of
both. Clique-width, originated by Courcelle et al in~\cite{CourcelleER1991,co00}, is the older of the two notions. 
In several aspects, the theory of graphs of bounded clique-width is similar to
the one of bounded tree-width. Indeed, bounded tree-width implies bounded clique-width. 
However, unlike tree-width, graphs of bounded clique-width
include arbitrarily large cliques and other dense graphs,
and the value of clique-width does not change much when complementing the edge set of a graph.
Clique-width is not closed under taking subgraphs or minors, 
only under taking induced subgraphs. 
As we will see later, clique-width is also closely related to trees and 
monadic second-order logic of graphs.

The notion of tree-depth of a graph, coined by Ne{\v s}et{\v r}il
and Ossona~de~Mendez \cite{no06}, is equivalent or similar to some older notions 
such as the {\em vertex ranking number} and the minimum height of an 
{\em elimination tree} \cite{bodlaenderetal98,dkkm94,sch90}, etc. 
Graphs of small tree-depth are related to trees of small height, 
and they enjoy strong ``finiteness'' properties
(finiteness of cores, existence of non-trivial automorphisms if the graph is
large, well-quasi-ordering by subgraph inclusion).
The tree-depth notion received almost immediate attention, as it plays 
a central role in the theory of graph classes of bounded expansion \cite{no08,sparsity}.
However, graphs of small tree-depth are necessarily very sparse and the
notion behaves badly with respect to, say, graph complementation.

\smallskip

Our search for a structural concept ``between clique-width and tree-depth''
\cite{ghnoor12} has originally been inspired by algorithmic considerations:
graphs of bounded parameters such as clique-width allow efficient
solvability of various problems which are difficult (e.g. NP-hard)
in general, e.g.~\cite{cmr00,egw01,gk03,gho11}.
Highly regarded results in this area are those which, instead of solving one
problem, give a solution to a whole class of problems (called
\emph{algorithmic metatheorems}). 
The perhaps most famous result of this kind is Courcelle's theorem~\cite{cou90}, 
which states that every graph property expressible in the \MSOii logic of graphs 
can be solved in time ${\cf O}(|G|.f(\phi, k))$ where
$f$ is a computable function, meaning that the problem is
\emph{fixed-parameter tractable} 
(FPT for short). 
For clique-width, a result similar to Courcelle's theorem holds; \MSOi model
checking is FPT on graphs parameterized by clique-width~\cite{cmr00}.

However, an issue with these results is that, as showed by Frick and
Grohe~\cite{fg04} for \MSO model checking of the class of all trees, the
function $f$ of Courcelle's algorithm is, unavoidably,
non-elementary in the parameter $\phi$ (unless P=NP). 
This brings the following question: are there interesting graph classes
in which the runtime dependency on the formula $\phi$ is better?  
For instance, in 2010, Lampis~\cite{lam12} gave an FPT algorithm for \MSOii model checking on graphs
of bounded vertex cover with elementary (doubly-exponential) dependence on the formula.
Subsequently, in 2012, Gajarsk\'y and Hlin\v{e}n\'y showed~\cite{gh15}
that there exists a linear-time FPT algorithm for \MSOii model checking
of graphs of bounded tree-depth, again with elementary dependence on the formula. 
Their result is essentially best possible, as shown soon after by Lampis~\cite{lam14}.
In order to extend that result towards \MSOi model checking of (some classes of)
dense graphs, one would first need to adjust the clique-width concept towards 
``bounded depth'' (as with tree-depth), which is not a simple task.

The aforementioned paper \cite{gh15} was not the first one explicitly
raising the issue of restricting clique-width towards bounded depth in the literature.
In 2012, for example, independently Elberfeld, Grohe and Tantau made the following 
remark regarding the expressive power of graph FO logic~\cite{EGT12}:
{\it One idea is to develop an adjusted notion of clique-width that has the 
same relation to clique-width as tree-depth has to tree-width.}
Our concept of {shrub-depth}~\cite{ghnoor12} has provided a
quick positive answer also to the question of~\cite{EGT12}.
Clique-width-like graph decompositions of limited depth have also been used
as a tool by Blumensath and Courcelle in~\cite{bc14} (under the name
``$\otimes$-decompositions'').
However, some of their technical results which may be interesting 
in our context have not been published anywhere.

\smallskip

In \cite{ghnoor12}, two new structural depth parameters of graphs have been
introduced: \emph{shrub-depth} (Definition~\ref{def:shrub-depth}) and SC-depth (Definition~\ref{def:SC-depth}), which are asymptotically equivalent to each other.
Since their emergence these have been successfully used in several
research papers, and shrub-depth in particular 
is a subject of ongoing interest in the finite model theory of graphs.

For instance, the aforementioned~\cite{gh15} (its full journal version, to be
precise) has also extended \MSOii model checking tractability on graphs of bounded 
tree-depth to \MSOi on graph classes of bounded shrub-depth, again with an 
elementary runtime dependence on the checked formula.
Furthermore, \cite{gh15} has generalized the result of~\cite{EGT12}
to prove that the expressive power of FO and MSO$_1$ is the same on classes
of bounded shrub-depth.

In a recent paper by Gajarsk{\'{y}}, Kreutzer, Ne\v{s}et\v{r}il, Ossona de
Mendez, Pilipczuk, Siebertz and Toru{'{n}}czyk~\cite{DBLP:conf/icalp/GajarskyKNMPST18}, the
concept of shrub-depth has been successfully used to obtain an analog of low
tree-depth decompositions for transductions of bounded expansion classes.

On another topic, Hlin\v{e}n{\'{y}}, Kwon, Obdr\v{z}{\'{a}}lek and
Ordyniak~\cite{hkoo16} have shown that the tree-depth and shrub-depth
concepts of graphs are tightly related to each other via the so called vertex-minors.
Regarding alternative and generalized views of shrub-depth,
DeVos, Kwon and Oum [unpublished] in an ongoing work elaborate on the concept of
branch-depth of matroids, and prove that a derived new concept of 
rank-depth of graphs is asymptotically equivalent to shrub-depth.

\paragraph{Paper organization. }
Since the core initial paper on shrub-depth~\cite{ghnoor12} has appeared only as a
short conference version, we take an opportunity here to give a detailed
review of this concept and to provide full proofs of the results
of~\cite{ghnoor12} enhanced in light of the current state-of-the-art.
After preliminary definitions in Section~\ref{sec:definitions}, this overview
of shrub-depth and its structural properties 
(such as Theorems~\ref{thm:shrubsc}, \ref{thm:Path-TM} and~\ref{thm:clforbidden})
constitute Section~\ref{sec:shrub} of this paper.
The subsequent Section~\ref{sec:transdhier} focuses on logical aspects of
shrub-depth, which have so far been of greatest interest, and presents our
main results with their proofs.
We start with proving that the concept of shrub-depth of a graph class 
is stable -- meaning that the shrub-depth value does not grow,
under MSO$_1$ interpretations (Theorem~\ref{thm:shrub-depth-interpretability})
and also under non-copying MSO$_1$ transductions (Theorem~\ref{thm:nc-transduction}).
% (more precisely, the shrub-depth value does not grow under 
% any non-copying MSO$_1$ transduction).
From that we derive (Theorem~\ref{thm:hierarchy}) that
the integer values of shrub-depth define the lower $\omega$ levels 
of the MSO$_1$ transduction hierarchy of simple graphs, 
which partially answers an open question raised by
Blumensath and Courcelle in~\cite{bc10}.
We conclude with some remarks and open questions in Section~\ref{sec:conclu}.

\section{Common Definitions}
\label{sec:definitions}
%%%%%%%%%%%%%%%%%%%%%%%%%%%%%%%%%%%%%%%%%%%%%%%%%%%%%%%%%%%%%%%%%%%%%%%%%%

We assume the reader is familiar with the standard notation of graph theory. 
In particular, our graphs are finite, undirected and
simple (i.e. without loops or multiple edges). For a graph $G=(V,E)$ we use
$V(G)$ to denote its vertex set and $E(G)$ to denote the set of its edges. 
We write $G\simeq H$ to say that graphs $G$ and $H$ are isomorphic,
and similarly we use $G\subseteq H$ to say that $G$ is a subgraph of~$H$
(not necessarily induced). 
An isomorphism of a graph to itself is also called an {\em automorphism}.
We will also use \emph{labelled graphs}, where each vertex is assigned one or more
of a fixed finite set of labels (in this case, isomorphism implicitly
preserves the labels).

A forest $F$ is a graph without cycles, and a
tree $T$ is a forest with a single connected component. We will consider
mainly \emph{rooted forests (trees)}, in which every connected component has
a designated vertex called the \emph{root}.  The {\em height} of a vertex
$x$ in a rooted forest $F$ is the length of a path from the root (of the
component of $F$ to which $x$ belongs) to $x$.  The {\em height}\,\footnote{ There is a conflict in the
  literature about whether the height of a rooted tree should be measured by
  the ``root-to-leaves distance'' or by the ``number of levels'' (a
  difference of $1$ on finite trees). We adopt the convention that the
  height of a single-node tree is $0$ (i.e., the former view).} of the
rooted forest $F$ is the maximum height of the vertices of $F$. Let $x,y$ be
vertices of $F$. The vertex $x$ is an {\em ancestor} of $y$, and $y$ is a
{\em descendant} of $x$, in $F$ if $x$ belongs to the path of $F$ linking
$y$ to the corresponding root; we denote this as $y\leq x$ in F.
If $x$ is an ancestor of $y$ and $xy\in E(T)$, 
then $x$ is called a {\em parent} of $y$, and $y$ is a {\em child} of~$x$.
The least common ancestor of $x$ and $z$ in $F$ is denoted by $x\wedge z$.

\subsection{Width and depth measures}
\label{sub:widths}

The so called width measures play an important role in structural graph
theory and in its algorithmic applications.
A prototypical width parameter is the {\em tree-width} of a graph \cite{gmii}
introduced by Robertson and Seymour together with the related {\em path-width}.
We refer to \cite{die05} for missing definitions and basic properties.

The primary interest of our paper is in two other, seemingly unrelated, 
structural width measures which we define now.

\begin{defi}[Clique-width \cite{CourcelleER1991,co00}]
\label{def:cw}
A \emph{$k$-expression} is an algebraic expression
having the following four operations on vertex-labelled graphs using $k$
labels:
\begin{itemize}
\item create a new vertex with a single label $i$;
\item take the disjoint union of two labelled graphs;
\item add all edges between vertices of label $i$ and label $j$ ($i\not=j$); and
\item relabel all vertices with label $i$ to label $j$.
\end{itemize}
The {\em clique-width} ${\rm cw}(G)$ of a graph $G$ equals the minimum $k$ such that
(some labelling of) $G$ is the value of a $k$-expression.
\end{defi}

Clique-width may be low even on graph classes for which the tree-width is
unbounded, such as complete graphs or complete bipartite graphs (the 
clique-width of which is 2).
Note that Definition~\ref{def:cw} demands each vertex to carry only one
label, while one can allow multiple labels as well.
Another possible modification is to allow $i=j$ in the third step.
Both these relaxations, while changing values of clique-width for some particular
graphs, are nevertheless asymptotically equivalent to the standard 
clique-width notion of Definition~\ref{def:cw}.

One can, furthermore, define {\em linear clique-width} (see, e.g., \cite{hmp12}) which has the additional
restriction that the union operator is allowed to take only a single vertex
as the right-hand operand (i.e., the expression tree is a caterpillar---this
is conceptually related to path-width).

A close alternative of clique-width is represented by the NLC classes
introduced by Wanke~\cite{Wan94}.
${\rm NLC}_m$ consists of all graphs that can be obtained
from single vertices with single labels in $\{1,\dots,m\}$ 
using the two following operations:
\begin{itemize}
  \item disjoint union of two graphs $G_1$ and $G_2$, with addition of all edges
  between vertices of $G_1$ with label $i$ and vertices of $G_2$ with label $j$
  whenever $(i,j)$ belongs to a given fixed subset $S$ of
  $\{1,\dots,m\}\times\{1,\dots,m\}$;
  \item relabelling of the vertices according to some map 
	$\{1,\dots,m\}\to\{1,\dots,m\}$.
\end{itemize}
The {\em NLC-width} of a graph is the minimum $m$
such that the graph belongs to ${\rm NLC}_m$. It has been proved in
\cite{joh98} that the NLC-width and the clique-width (cw) of a graph $G$ are
related by $\text{NLC-width}(G)\leq\text{cw}(G)\leq 2\cdot\text{NLC-width}(G)$.

At last, we briefly mention that another graph measure asymptotically equivalent 
to clique-width is {\em rank-width}~\cite{OS06}.
Similarly, linear clique-width is asymptotically equivalent 
to linear rank-width \cite{gan10}.

\smallskip
The second structural measure of our interest is tree-depth.

\begin{defi}[Tree-depth \cite{no06}]
\label{def:tree-depth}
The {\em closure} ${\Clos}(F)$ of a forest $F$
is the graph obtained from $F$ by making every vertex adjacent to all of its
ancestors.
The {\em tree-depth} ${\rm td}(G)$ of a graph $G$ is one more than the minimum
height of a rooted forest $F$ such that $G\subseteq{\Clos}(F)$.
\end{defi}
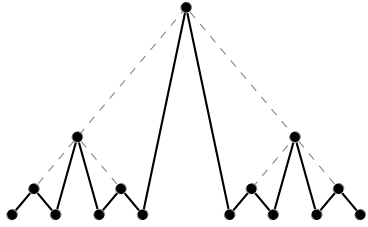
\begin{figure}[t]
{\hfill
\begin{tikzpicture}[scale=.9]
\tikzstyle{every node}=[draw, shape=circle, minimum size=4pt,inner sep=0pt, fill=black]
    \draw (0,6) node (a) {}
        -- ++(230:2.5cm) [color=gray,dashed] node (b) {}
        -- ++(230:1.0cm) node (c) {}
        -- ++(230:0.5cm) node (d) {};
    \draw  (a) {}
        -- ++(310:2.5cm) [color=gray,dashed] node (e) {}
        -- ++(230:1.0cm) node (f) {}
        -- ++(230:0.5cm) node (g) {};
    \draw  (b) {}
        -- ++(310:1.0cm) [color=gray,dashed] node (h) {}
        -- ++(230:0.5cm) node (i) {};
    \draw  (e) {}
        -- ++(310:1.0cm) [color=gray,dashed] node (j) {}
        -- ++(230:0.5cm) node (k) {};
    \draw  (c) {}
        -- ++(310:0.5cm) [color=gray,dashed] node (l) {} ;
    \draw  (f) {}
        -- ++(310:0.5cm) [color=gray,dashed] node (m) {} ;
    \draw  (h) {}
        -- ++(310:0.5cm) [color=gray,dashed] node (n) {} ;
    \draw  (j) {}
        -- ++(310:0.5cm) [color=gray,dashed] node (o) {} ;
    \draw (d) -- (c) -- (l) -- (b) -- (i) -- (h) -- (n) -- (a) -- (g)
	-- (f) -- (m) -- (e) -- (k) -- (j) -- (o) [thick];
\end{tikzpicture}
\hfill}
\caption{The path of length $n$ has tree-depth $\log_2(n+2)$,
	as in the depicted decomposition.}
\label{fig:td-path}
\end{figure}

Definition~\ref{def:tree-depth} is illustrated in Figure~\ref{fig:td-path}.
For a proof of the following proposition, as well as for a more
extensive study of tree-depth, we refer the reader to~\cite{sparsity}. 

\begin{prop}
\label{prop:td-basics}
  Let $G$ and $H$ be graphs. Then the following are true:
  \begin{enumerate}[label={\it\alph*)}]%[a)]
  \item If $H$ is a minor of $G$, then ${\rm td}(H)\leq{\rm td}(G)$.
  \item If $L$ is the length of a longest path in $G$, then
  $\lceil \log_2 (L+2)\rceil\leq {\rm td}(G)\leq L+1$.
  \item If ${\rm tw}(G)$ and ${\rm pw}(G)$ denote the tree-width
  and path-width of a graph~$G$, then
  ${\rm tw}(G)\leq{\rm pw}(G)\leq {\rm td}(G)-1$.
  \end{enumerate}
\end{prop}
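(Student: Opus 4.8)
The plan is to treat the three items separately, in each case reducing to a direct manipulation of rooted elimination forests.

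For part~(a) it suffices to show that tree-depth does not increase under any one of the three minor operations. If $G\subseteq\Clos(F)$ then every subgraph of $G$ is also contained in $\Clos(F)$, which handles vertex and edge deletion with no change to $F$. For the contraction of an edge $uv\in E(G)$, fix an elimination forest $F$ of minimum height with $G\subseteq\Clos(F)$; since $uv$ is an edge of $\Clos(F)$, the vertices $u,v$ are comparable in $F$, say $v\le u$. Let $F'$ be obtained from $F$ by deleting $v$, re-attaching every child of $v$ as a child of $u$, and renaming $u$ to the contracted vertex $w$. Then $\mathrm{height}(F')\le\mathrm{height}(F)$, and one checks $G/uv\subseteq\Clos(F')$: an edge of $G/uv$ not incident to $w$ comes from an edge of $G$ between two vertices whose ancestor/descendant relation is unaffected by the surgery, while an edge incident to $w$ comes from an edge of $G$ incident to $u$ or $v$, and in each case the other endpoint (an ancestor of $u$, a vertex strictly between $u$ and $v$, a descendant of $v$, or another vertex comparable to $u$) remains comparable to $w$ in $F'$. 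Hence ${\rm td}(G/uv)\le{\rm td}(G)$, and (a) follows by induction on the number of minor operations.

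For part~(b), the upper bound uses a depth-first search forest $F$ of $G$: undirected DFS produces no cross edges, so every edge of $G$ joins a vertex to one of its $F$-ancestors, i.e. $G\subseteq\Clos(F)$, while every root-to-vertex branch of $F$ is a path in $G$ and so has at most $L$ edges; thus $\mathrm{height}(F)\le L$ and ${\rm td}(G)\le L+1$. For the lower bound, let $P\subseteq G$ be a path with $L$ edges; by (a) applied to the subgraph $P$ we have ${\rm td}(P)\le{\rm td}(G)$, so it suffices to show that a graph with $G\subseteq\Clos(F)$ for a forest $F$ of height $h$ contains no path with more than $2^{h+1}-2$ edges, equivalently more than $2^{h+1}-1$ vertices. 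This is an induction on $h$: a longest path lies in a single component of $\Clos(F)$ with root $r$, and deleting $r$ breaks that component into the closures of the subtrees hanging at the children of $r$, each of height at most $h-1$; removing $r$ from the path yields at most two subpaths, each lying in such a closure and hence, by induction, on at most $2^{h}-1$ vertices, so the path has at most $2(2^{h}-1)+1=2^{h+1}-1$ vertices. With $h={\rm td}(G)-1$ this gives $L+2\le 2^{h+1}$, and taking the ceiling yields $\lceil\log_2(L+2)\rceil\le{\rm td}(G)$.

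For part~(c), the inequality ${\rm tw}(G)\le{\rm pw}(G)$ holds because a path decomposition is a tree decomposition. For ${\rm pw}(G)\le{\rm td}(G)-1$, take $F$ with $G\subseteq\Clos(F)$ and $\mathrm{height}(F)={\rm td}(G)-1=:h$, order the leaves of $F$ left-to-right (component by component, each component in DFS order), and to each leaf $x$ assign the bag $B_x$ consisting of $x$ together with all its $F$-ancestors, so $|B_x|\le h+1$. Listing the $B_x$ in this leaf order is a path decomposition: a vertex $v$ lies in $B_x$ exactly for the leaves $x$ in the subtree rooted at $v$, which form a contiguous block; and an edge $uv$ of $\Clos(F)$ has comparable endpoints, say $v\le u$, hence $\{u,v\}\subseteq B_x$ for any leaf $x$ below $v$. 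The width is $\max_x|B_x|-1\le h$, as required. The main obstacle in the whole proof is the contraction case of (a): one must verify that the local surgery on $F$ preserves containment in the closure for every possible position of the endpoints of an edge of $G$ relative to $u$ and $v$, while simultaneously not raising the height; the remaining cases and parts become routine once the appropriate forest is chosen (a DFS forest for the upper bounds, the ancestor-bag forest for path-width).
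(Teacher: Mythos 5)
The paper itself does not prove this proposition (it defers to the sparsity monograph), so your argument has to stand on its own. Parts (b) and (c) are correct and standard: the DFS forest gives the upper bound in (b), the halving induction on the height of $F$ gives the lower bound, and the ancestor bags give (c). The genuine problem is the contraction step of part (a). Your surgery re-attaches every child of $v$ as a child of $u$, and you assert that the ancestor/descendant relation of any pair of vertices outside $\{u,v\}$ is unaffected. This fails whenever $v$ is not a child of $u$: a descendant $x$ of $v$ and a vertex $p$ lying strictly between $u$ and $v$ on the tree path are comparable in $F$, but in $F'$ the subtree containing $x$ hangs directly from $u$ and bypasses $p$, so $x$ and $p$ become incomparable. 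Concretely, take $G=K_4$; since ${\rm td}(K_4)=4$, any minimum-height elimination forest is a rooted path $u,p,v,x$ from the root $u$ to the leaf $x$. Contract the edge $uv$ and apply your surgery: in $F'$ the vertices $p$ and $x$ are siblings under $w$, yet $px$ is an edge of $G/uv$, so $G/uv\not\subseteq\Clos(F')$.

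The repair is small but necessary: delete $v$ and re-attach its children to the \emph{parent} of $v$ rather than to $u$, and let the contracted vertex $w$ occupy the position of $u$. Then every vertex outside $\{u,v\}$ keeps exactly its old set of ancestors minus possibly $v$, so comparability of all such pairs is preserved; the height does not increase because the moved subtrees rise by exactly one level; and every $G$-neighbour of $u$ or of $v$ remains comparable to $w$, since a neighbour of $v$ is either an ancestor of $v$ (hence on the root path, which also contains $u$) or a descendant of $v$ (hence still a descendant of $u$ after the move). With this modification part (a) is correct, and the lower bound in part (b) -- which only invokes the subgraph case of (a), the part you did prove correctly -- is unaffected.
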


\subsection{\MSO logic on graphs}

We now briefly introduce \emph{monadic second order logic (\MSO)} over 
graphs and the concepts of MSO interpretation and transduction. 
We refer interested readers to, e.g., Courcelle and Engelfriet~\cite{ce12}
for further reading. In general,
\MSO is the extension of first-order logic by quantification over sets.
In our paper we deal with the following particular flavour:
\begin{defi}[\MSOi and \CMSOi logic of graphs]
\label{def:CMSO1graphs}
  The language of \MSOi consists of expressions built from the following
  elements:
  \begin{itemize}%\vspace*{-1ex}
  \item variables $x,y,\ldots$ for vertices, and $X,Y$ for sets of vertices,
  \item equality for variables, quantifiers $\forall,\exists$ ranging over
   vertices and vertex sets, and the standard Boolean connectives,
  \item the predicates $x\in X$ and $\prebox{edge}(x,y)$ with their standard
   meaning.
  \end{itemize}
One may also use an arbitrary number of unary predicates
on the vertex set (as vertex labels).
The language of \CMSOi ({\em counting \MSOi}), moreover, 
adds the predicates $\!\mod\!_{a,b}$, such that
$\!\mod\!_{a,b}(X)$ holds true if and only if $|X|\!\!\mod b=a$.
\end{defi}

\MSOi logic can be used to express many interesting graph
properties, such as 3-colourability and dominating set. 
We also briefly mention \MSOii logic of graphs, which additionally includes 
quantification over edge sets and can express properties which are not 
definable in \MSOi (e.g., Hamiltonicity).

From an algorithmic perspective, \MSO logic is particularly useful as the
language for describing tractable problems in algorithmic metatheorems
(e.g., for the aforementioned graphs of bounded clique-width
\cite{cmr00} or tree-width \cite{cou90}).
In this respect, we consider the $\cf L$-{\em model checking} problem in which the
input is a graph~$G$, the parameter is a formula $\phi$ of the considered logic $\cf L$ (such as \MSOi), and the question is whether $G\models\phi$.

\smallskip
A powerful tool, both in theory and in algorithmic metatheorems,
is the ability to ``efficiently translate'' an instance of the model
checking problem over a given class, into an instance of the problem over another class
(for which we, perhaps, already have an efficient model checking algorithm). 
We start with simple interpretations of undirected graphs.

\begin{defi}%[Intrepretation]
\label{def:interpretation}
  A {\em simple \MSOi graph interpretation} is a pair $I=(\nu,\mu)$ of
  \MSOi formulae (with $1$ and $2$ free first order variables, respectively),
  such that $\mu$ is symmetric
  (i.e.,~$G\models\mu(x,y)\leftrightarrow\mu(y,x)$ in every graph $G$).%
\footnote{We remark that while the question whether $\mu$ is symmetric
is generally undecidable, we may simply force it to be symmetric, e.g., by
using $\mu(x,y)\vee\mu(y,x)$.
}
  To each graph $G$, the interpretation $I$ associates a graph $I(G)$ %(by standard abuse of notation),
  which is defined as follows:
\begin{itemize}%\vspace*{-1ex}
  \item The vertex set of $I(G)$ (the {\em domain} of~$I$ in~$G$)
  is the set of all vertices $v$ of $G$ such that $G\models \nu(v)$;
  \item the edge set of $I(G)$ is the set of all the pairs $\{u,v\}$ of
  vertices of $G$ such that $G\models \nu(u)\wedge\nu(v)\wedge\mu(u,v)$.
\end{itemize}
A simple \CMSOi graph interpretation is defined analogously.
\end{defi}

For example, a complete graph can be interpreted in any graph (with the same
number of vertices) by letting $\nu\equiv\mu\equiv true$,
and the complement of a graph has an interpretation using $\nu\equiv true$ and
$\mu(x,y)\equiv\neg\prebox{edge}(x,y)$.

Note that, to each \CMSOi formula $\phi$, an interpretation
$I=(\nu,\mu)$ naturally and efficiently assigns a formula $I(\phi)$
such that $G\models I(\phi) \iff I(G)\models \phi$ holds.
Having classes $\cf G,\cf H$ of finite graphs, we say that $I$ is a
{\em simple interpretation of $\cf G$ in $\cf H$}
if the following holds:
for every $G\in\cf G$ there is $H\in\cf H$ such that $I(H)\simeq G$, and
for every $H\in\cf H$ it holds that $I(H)\in\cf G$.

\smallskip

A more general concept of a ``logical translation'' is that of transductions.
Briefly saying, in an addition to a simple interpretation, this allows
to add to a graph arbitrary ``parameters'' (as unary predicates) and to make
several disjoint copies of the graph.
A thorough discussion of this concept can be found in~\cite{ce12},
but we prefer to keep the paper simple and accessible to 
a wide audience of graph theorists,
and so we give a simplified version of the definition from~\cite{bc10}.

Still, before proceeding to Definition~\ref{def:transduction}, we have to
briefly extend the notion of interpretation towards 
finite relational structures with finite signatures.
A {\em relational structure} $\cf S=(U,\>R_1^{\cf S},\dots,R_a^{\cf S})$
of the signature $\sigma=\{R_1,\dots,R_a\}$ consists of a
universe (a finite set) $U$ and a (finite) list of relations
$R_1^{\cf S},\dots,R_a^{\cf S}$ over~$U$.
For instance, for graphs, $U=V(G)$ is the vertex set and $R_1^{G}=E(G)$ 
is the binary symmetric relation of edges of~$G$.
The language of \CMSOi logic 
of relational structures of the signature $\{R_1,\dots,R_a\}$
is as in Definition~\ref{def:CMSO1graphs} with the predicates
$R_1,\dots,R_a$ (instead of $\prebox{edge}$).
The scope of Definition~\ref{def:interpretation} 
of a simple graph interpretation $I=(\nu,\mu)$ is then naturally 
generalized by allowing $\nu$ and $\mu$ to be \CMSOi formulae
over relational structures of the signature $\sigma=\{R_1,\dots,R_a\}$.
For each structure $\cf S$ of the signature $\sigma$,
the interpretation $I(\cf S)$ is, in our case, a simple graph (again
possibly with arbitrarily assigned vertex labels).

\begin{defi}[\MSOi and \CMSOi transduction]
\label{def:transduction}
A {\em basic \MSOi graph transduction} $\tau_1$ is a triple $(\chi,\nu,\mu)$ such that
$(\nu,\mu)=I$ is a simple \MSOi graph interpretation, and $\chi$ is an \MSOi sentence.
The transduction $\tau_1$ maps a relational structure $\cf S$ to a graph $I(\cf S)$,
denoted here by $\tau_1(\cf S)$, if $\cf S\models\chi$,
and $\tau_1(\cf S)$ is undefined if $\cf S\not\models\chi$.

The $k$-copy operation maps a graph $G$ to the relational structure $G^{\times k}$ such that
$V(G^{\times k})=V(G)\times\{1,\dots,k\}$, the subset $V(G)\times\{i\}$ for
$i=1,2\dots,k$ induces a copy of the graph~$G$ (there are no edges between distinct
copies), and $V(G^{\times k})$ is additionally equipped with a binary relation
$\sim$ and unary $P_1,\dots,P_k$ such that;
$(u,i)\sim(v,j)$ for $u,v\in V(G)$ iff $u=v$, and $P_i=\{(v,i): v\in V(G)\}$.

The expansion of a graph $G$ by $p$ unary predicates maps $G$ to the set of all
structures obtained by an expansion of $V(G)$ by $p$ new unary predicates
(as vertex labels).

Altogether, a many-valued map $\tau$ is an {\em \MSOi graph transduction}
if it can be written as $\tau=\tau_1\circ\gamma\circ\varepsilon$, 
where $\tau_1$ is a basic graph transduction,
$\gamma$ is a $k$-copy operation for some $k$,
and $\varepsilon$ is the expansion by $p$ unary predicates for some~$p$.
Specially, if $k=1$, then we call $\tau$ a {\em non-copying transduction}.

A \CMSOi transduction is defined analogously.
\end{defi}

Note, once again, that the result of a transduction $\tau$ of one graph is
generally a set of graphs, due to the involved expansion map.
For a graph class $\cf H$, the {\em transduction $\tau$ of
the class $\cf H$} is the union of the particular transduction results,
precisely, $\tau(\cf H):=\bigcup_{G\in\cf H}\tau(G)$.

\section{Capturing Height of Dense Graphs}
\label{sec:shrub}
%%%%%%%%%%%%%%%%%%%%%%%%%%%%%%%%%%%%%%%%%%%%%%%%%%%%%%%%%%%%%%%%%%%%%%%%%%

The concept of tree-depth is commonly used to capture the ``height'' 
of other graphs than just trees.
Actually, tree-depth can be seen as a bounded-height analogue of tree-width.
However, as discussed already in the introduction,
the main drawback of tree-depth (as well as of tree-width) is its incapability 
to handle dense graphs and some simple graph operations like the complement.
Since, on the other hand, clique-width handles dense ``uniform'' graphs and
the complement operation smoothly, it makes good sense to try to modify
its definition towards capturing ``height'' in addition to ``width''.

Unfortunately, such a direct modification of clique-width seems not possible,%
\footnote{For example, simply trying to restrict the underlying expressing
 tree in Definition~\ref{def:cw} brings the necessity of disjoint unions of
 an arbitrary arity which, in turn, ``weakens'' the definition too much.
 This is precisely the point at which the NLC approach
 (Subsection~\ref{sub:widths}) with explicitly adding edges only between the
 graphs participating in a disjoint union operation turns out better.}
and one has to look at other related width measures, namely to the so called 
neighbourhood diversity and the aforementioned NLC-width for an inspiration.

Before we continue, notice that the requirement to smoothly handle dense
graphs and the graph complement operation, naturally means that a
new measure cannot be stable under taking non-induced subgraphs.

\subsection{Shrub-depth}

To motivate the coming definition of shrub-depth, we recall the {neighbourhood
diversity} parameter introduced by Lampis~\cite{lam12} in an algorithmic context:
Two vertices $u,v$ are {\em twins} in a graph $G$ if $N_G(u)\sem\{v\}=N_G(v)\sem\{u\}$.
The {\em neighbourhood diversity} of $G$ is the smallest $m$ such
that $V(G)$ can be partitioned into $m$ sets such that in each part the
vertices are pairwise twins.
This basically means that $V(G)$ can be coloured by $m$
exclusive labels such that the existence of an edge $uv$ depends 
solely on the colours of $u$ and~$v$.

To stress that the considered labels are exclusive, we shall instead 
call them {\em colours}.
Inspired by attempts to generalize neighbourhood diversity, e.g, 
in \cite{Ganian15,gaj12}, we come to the idea of enriching the
diversity colouring with a bounded number of ``layers''.
This results in the following formalization:

\begin{defi}[Tree-model]
\label{def:tree-model} 
Let $m$ and $d$\, be non-negative integers. 
A {\em tree-model of $m$ colours and depth $d$} for a graph $G$ is a pair
$(T,S)$ of a rooted tree $T$ (of~height~$d$) and a set
$S\subseteq\{1,2,\ldots,m\}^2\times \{1,2,\ldots,d\}$ (called a {\em
signature} of the tree-model) such that 
\begin{enumerate}
\item the length of each root-to-leaf path in $T$ is exactly $d$,
\item the set of leaves of $T$ is exactly the set $V(G)$ of vertices of $G$,
\item each leaf of $T$ is assigned one of the colours $\{1,2\dots,m\}$, 
and
\item\label{it:tree-model-edge} 
for any $i,j,\ell$ it holds $(i,j,\ell)\in S$ iff $(j,i,\ell)\in S$
(symmetry in the colours),
and for any two vertices $u,v\in V(G)$ such that $u$ is coloured $i$ and $v$
is coloured $j$ and the distance between $u,v$ in $T$ is $2\ell$,
the edge $uv$ exists in $G$ if and only if $(i,j,\ell)\in S$.
\end{enumerate}
Note that point \eqref{it:tree-model-edge} effectively says that the existence of
a $G$-edge between $u,v\in G$ depends solely on the colours of $u,v$ and
the depth of the least common ancestor $u\wedge v$ in $T$. 
We hence, for convenience, call $T$ itself a {\em tree-model of~$G$}, 
assuming that the signature set $S$ is implicitly associated with~$T$.

The class of all graphs having such a tree-model of $m$ colours and depth $d$ 
is denoted by $\TM dm$.
\end{defi}

\begin{figure}[tb]
\begin{center}
\scalebox{1}[-1]{\includegraphics[width=.6\hsize]{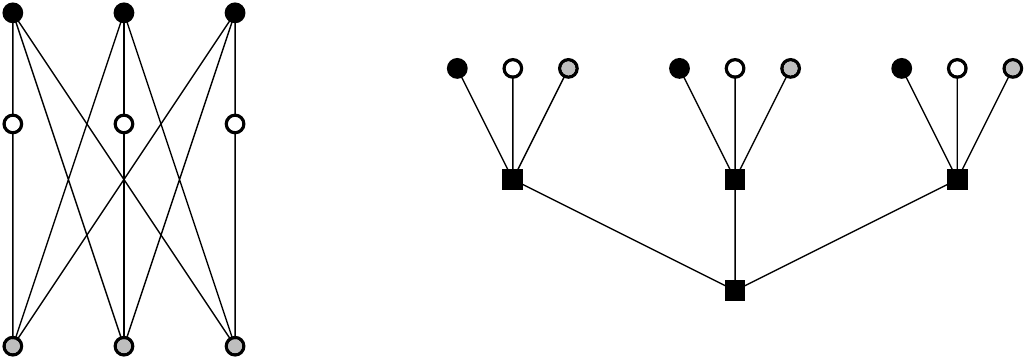}}
\end{center}
\caption{The graph obtained from $K_{3,3}$ by subdividing a matching belongs to
$\TM 23$.}
\label{fig:TMillustration}
\end{figure}

For instance, $K_n\in\TM11$ and $K_{n,n}\in\TM12$.
More generally, $\TM1m$ is exactly the class of graphs of neighbourhood
diversity at most~$m$.
For a more involved example, imagine an arbitrarily large collection of 
graphs $G_1,G_2,\dots,G_s\in\TM1m$, such that $\{G_1,\dots,G_s\}$ is
partitioned into $m'$ groups.
Let $H$ be a graph obtained from a disjoint union $G_1\cup\dots\cup G_s$
by adding, say, all edges between distinct graphs from the groups 1 and 3,
all edges from graphs in the group 2 to graphs in the groups 5 and 7, etc.
Then $H\in\TM2{m\cdot m'}$.
This ``hierarchical'' example can be easily generalized to higher values of~$d$.
Yet another illustrations can be found in Figures~\ref{fig:TMillustration}
and~\ref{fig:P8tm}.

It is easy to see that each class $\TM dm$ is closed under complements 
and induced subgraphs (which is our desire), 
but neither under disjoint unions, nor under subgraphs.
If $G$ has a tree-model $T$ and $H$ is any induced subgraph of~$G$,
then the corresponding induced subtree of $T$ immediately gives a tree-model for~$H$.
Note also that one coloured tree $T$ can be a tree-model of
several graphs (on the same vertex set), depending on the associated signatures.

Another interesting observation is the relation of a tree-model to a certain
generalization of the NLC classes from Subsection~\ref{sub:widths}:
imagine that the definition of NLC$_m$ is allowed to make disjoint union of
an arbitrary number of graphs (but still with a uniform rule for adding
edges between them), and the depth of the construction tree is bounded by~$\leq d$.
If we, furthermore, forbid the relabelling operation, then the result coincides
with the class $\TM dm$.
Even if relabellings are allowed in NLC$_m$, 
we can encode all label changes in the leaf colours
thanks to the bounded depth of the construction (at the price of increasing~$m$).

The depth of a tree-model generalizes tree-depth of a graph as
follows (while the other direction is obviously unbounded, e.g., for cliques):

\begin{prop}
\label{pro:td-to-tree-model}
If $G$ is of tree-depth $d$, then $G\in\TM{d}{2^{d}}$.
If, moreover, $G$ is connected, then also $G\in\TM{d-1}{2^{d}}$.
\end{prop}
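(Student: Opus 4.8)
The statement relates tree-depth to the depth of a tree-model. Recall the definitions: if $G$ has tree-depth $d$, then by Definition~\ref{def:tree-depth} there is a rooted forest $F$ of height $d-1$ with $G \subseteq \Clos(F)$. I want to build from $F$ a tree-model $(T,S)$ of appropriate depth and number of colours. The natural idea is to take $T$ to be (essentially) $F$ itself, turned into a single tree by adding a new root, with the vertices of $G$ sitting as the leaves. But $F$ need not be a ``leveled'' tree: a vertex $v \in V(G)$ may appear at a height strictly less than the maximum, whereas a tree-model requires every root-to-leaf path to have length exactly equal to the depth, and requires $V(G)$ to be precisely the leaf set. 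So the first step is to massage $F$ into this shape.

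\textbf{Step 1: normalizing the forest.} I would first add a common root $r$ to all components of $F$, obtaining a rooted tree $F^+$ of height $d$ (height $d-1$ plus one for the new root; note all original vertices are now at height $\geq 1$). Then, to make it leveled and to push all graph-vertices to the leaves, I would replace each vertex $v \in V(F^+)$ that is a graph-vertex by a ``column'': attach below $v$ a new path of the appropriate length so that a fresh leaf $\ell_v$ (standing for $v \in V(G)$) sits at depth exactly $d$. Concretely, if $v$ has height $h$ in $F^+$, hang a path of $d-h$ new internal nodes under $v$ ending in the leaf $\ell_v$. After doing this for every graph-vertex, the internal (non-graph) vertices of $F^+$ together with these padding paths form the tree $T$, whose leaf set is exactly $\{\ell_v : v \in V(G)\}$, identified with $V(G)$, and every root-to-leaf path has length $d$. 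This handles conditions (1) and (2) of Definition~\ref{def:tree-model}.

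\textbf{Step 2: defining colours and the signature.} The point of $\Clos(F)$ is that $uv \in E(G)$ implies $u,v$ are comparable in $F$, i.e.\ one is an ancestor of the other; and whether the edge is present depends on which ancestor-descendant pair they form. In the tree-model $T$, two leaves $\ell_u, \ell_v$ have their least common ancestor $\ell_u \wedge \ell_v$ at some depth $\ell \in \{0,1,\dots,d\}$ (depth $0$ being the root $r$, depth $d$ being impossible for distinct leaves), so their $T$-distance is $2(d-\ell)$; the edge-rule must read off whether $uv \in E(G)$ from the colours and this depth. Since $G \subseteq \Clos(F)$, for $u \neq v$ comparable in $F^+$ with, say, $u$ a descendant of $v$, the edge $uv \in E(G)$ only if $v$ lies on the $r$-to-$u$ path, and then $v$ sits at the depth $\ell_u \wedge \ell_v = v$'s height; moreover if $u,v$ are incomparable then the least common ancestor is some internal node strictly above both and the edge must be absent. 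To capture exactly which ancestors of $u$ are $G$-neighbours of $u$, I colour each graph-vertex $u$ by the set $A_u \subseteq \{1,\dots,d\}$ of heights (in $F^+$) of those ancestors $v$ of $u$ in $F^+$ with $uv \in E(G)$ --- wait, more carefully, I colour $u$ by the pair consisting of $u$'s own height and this subset of ``neighbour-heights along the ancestor path''; the number of such colours is at most $d \cdot 2^{d}$, but since the height is redundant (it can be recovered from the padding, or absorbed) one can squeeze this to $2^{d}$ colours with a little care --- this is exactly where the bound $2^{d}$ must be checked. Then the signature $S$ is forced: $(i,j,\ell) \in S$ iff, decoding colours $i,j$, one of the two leaves is at height $\ell$ (so it is the ancestor in $F^+$) and the other one's neighbour-set contains $\ell$; symmetry in $i,j$ is built in by this ``one of the two'' phrasing, and the case of incomparable vertices is automatically handled because then neither leaf is at the depth $\ell$ of their common ancestor.

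\textbf{Step 3: the connected case.} When $G$ is connected, the forest $F$ realizing tree-depth $d$ may be taken to be a tree (a connected graph has a connected closure only from a tree; more precisely one shows the optimal $F$ for a connected $G$ is a single rooted tree of height $d-1$). Then we do not need the extra common root $r$: we can take $T$ to be already rooted at the root of $F$, and the depths now range in $\{0,\dots,d-1\}$, padding leaves to depth $d-1$, giving $G \in \TM{d-1}{2^{d}}$. The colour count is unchanged because we still need subsets of a $d-element$-ish set of ancestor-positions (now $\{0,1,\dots,d-1\}$, which has $d$ elements, giving $2^d$ subsets).

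\textbf{Main obstacle.} The routine part is the bookkeeping in Steps 1--2 and the verification that the induced edge-rule is exactly $E(G)$. The genuinely delicate point is getting the colour bound down to exactly $2^{d}$ rather than $d \cdot 2^{d}$ or $2^{d+1}$: one must observe that a graph-vertex $u$ at height $h$ in $F^+$ has at most $h$ strict ancestors among the original vertices (heights $1,\dots,h$, since the common root $r$ at height $0$ is not a graph-vertex and contributes no edges), so the relevant ``neighbour-height'' set is a subset of $\{1,\dots,h\} \subseteq \{1,\dots,d\}$ --- but naively recording also $h$ doubles the count. The trick is that we do \emph{not} need to record $h$ explicitly in the colour: in $T$, whether a leaf $\ell$ ``plays the ancestor role'' at depth $\ell$ can be detected by asking whether $\ell$'s neighbour-set, viewed as living in the original forest, is consistent with $\ell$ being at depth exactly that value --- equivalently, one pads cleverly and lets the signature do a single uniform test, so that $d \cdot 2^{d}$ genuinely collapses to $2^{d}$. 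Verifying this collapse carefully, and confirming it also yields $2^{d}$ (not $2^{d-1}$ or $2^{d+1}$) in the connected case, is the one place the argument needs real attention rather than mechanical checking.
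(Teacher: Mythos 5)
Your Steps 1 and 2 are essentially the paper's construction: add a common root, pad every vertex of $G$ down to a leaf at depth exactly $d$, and colour each vertex by its height together with the set of heights of its $G$-neighbours among its ancestors. The problem is the one step you yourself flag as needing ``real attention'' --- the colour count --- which you resolve incorrectly. First, the arithmetic: a graph-vertex $u$ at height $h$ in your $F^+$ has its graph-vertex strict ancestors at heights $1,\dots,h-1$, not $1,\dots,h$, so its neighbour-height set satisfies $A_u\subseteq\{1,\dots,h-1\}$, and the pair colouring $(h,A_u)$ already uses only $\sum_{h=1}^{d}2^{h-1}=2^d-1\le 2^d$ colours. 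There is no $d\cdot 2^d$ or $2^{d+1}$ to collapse and no further trick is needed; this telescoping sum is exactly how the paper gets the bound (it records the co-height $j$ together with a set $I\subseteq\{1,\dots,d-1-j\}$ of ancestor distances and computes $2^{d-1}+2^{d-2}+\dots+1<2^d$). The same computation gives $2^d-1$ in the connected case, where $T=F$ has height $d-1$ and the ancestor positions range over a set of size $h$.

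Second, and more seriously, the trick you propose instead --- dropping the height from the colour and letting the signature detect who ``plays the ancestor role'' by a consistency test on the neighbour-set --- is unsound. Two vertices at different heights can carry identical neighbour-sets, and the signature sees only the two colours and the depth $\ell$ of the least common ancestor; it therefore cannot distinguish the configuration ``one endpoint sits at height $\ell$ and is the other's ancestor'' from ``both endpoints lie strictly below $\ell$ and are incomparable''. Concretely, take $F$ with root $w$, children $u$ and $x$, and $v$ a child of $x$, and let $E(G)=\{wv\}$. Then $A_w=A_u=\emptyset$ and $A_v=\{1\}$, and after padding, the leaf pairs $(\ell_w,\ell_v)$ and $(\ell_u,\ell_v)$ both have their least common ancestor at depth $1$ and carry the same colour pair $(\emptyset,\{1\})$ --- yet $wv$ is an edge of $G$ and $uv$ is not, so no signature is consistent with both. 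The height must stay in the colour; once it does, the $2^d$ bound follows from the count above and the rest of your argument, including the connected case, goes through exactly as in the paper.
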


\begin{proof}
Let $U$ be an inclusion-minimal rooted forest of height $d-1$ such that 
$G\subseteq\Clos(U)$, and let $T$ be a rooted tree obtained by 
adding a new root $r$ connected to the former roots of $U$, and $d'=d$.
If $G$ is connected, then $U$ already is a tree, and then we set $T=U$ and
$d'=d-1$.

For $u\in V(T)$ we set a colour $c(u)=(j,I)$ such that $dist_T(r,u)=d'-j$
and $I=\{i: \{u,anc_i(u)\}\in E(G)\}$, where $anc_i(u)$ denotes the ancestor
of $u$ in $T$ at distance $i$ from~$u$.
Notice that $I\subseteq\{1,\dots,d-1-j\}$ (because of the height of $U$),
and so the total number of distinct
$c(u)$ over all $u\in V(U)$ is $2^{d-1}+2^{d-2}+\dots+1<2^{d}$.
Let $T'$ be obtained from $T$ as follows:
For every node $u\in V(U)$ such that $dist_T(r,u)<d'$, 
we add to $u$ a new path with the other end denoted by $u'$ 
such that $dist_{T'}(r,u')=d'$, and set $c(u')=c(u)$.

We claim that this $T'$ with the colours $c(v)$ in the leaves of $T'$ is the
desired tree-model of $G$.
Let $G'$ be the graph defined on the leaves of $T'$ as follows;
$\{u,v\}\subseteq V(G')$ is an edge of $G'$ iff, for $c(u)=(j_1,I_1)$,
$c(v)=(j_2,I_2)$ and $j_1<j_2$, it holds $dist_{T'}(u,v)=2j_2$
and $j_2-j_1\in I_1$.
Then clearly $G'\simeq G$.
\end{proof}

When dealing with tree-models of graph classes 
(e.g., in model checking or in transductions), 
the depth parameter $d$ is asymptotically much more important than the number of colours $m$.
With this in mind, it is useful to work with a more streamlined notion which
only requires a single parameter $d$, and to this end, we introduce the following:

\begin{defi}[Shrub-depth]
\label{def:shrub-depth}
A class $\cf G$ of graphs has {\em shrub-depth} $d$
if there exists $m$ such that $\cf G\subseteq\TM dm$,
while for all natural $m'$ it is the case that $\cf G\not\subseteq\TM{d-1}{m'}$.
In a wider sense, $\cf G$ is of {\em bounded shrub-depth}
if there exist integers $d,m$ such that $\cf G\subseteq\TM dm$.
\end{defi}
Note that Definition~\ref{def:shrub-depth} is asymptotic as it makes sense 
only for infinite graph classes; the shrub-depth of a single finite graph is 
always at most one ($0$ for empty or one-vertex graphs).
Furthermore, it makes no sense to say ``the class of all graphs of shrub-depth~$d$''.

For instance, the class of all cliques has shrub-depth $1$.
On the other hand, it will follow from Theorem~\ref{thm:Path-TM} 
that the class of all paths has unbounded (infinite) shrub-depth.
Now we argue that this new notion is indeed ``intermediate'' between
tree-depth and clique-width (and even linear clique-width).

\noindent\begin{minipage}{\textwidth}
\begin{prop}
\label{pro:td-shrub-cw}
Let $\cf G$ be a graph class and $d$ an integer. Then:
\begin{enumerate}[label={\it\alph*)}]%[a)]
\item
If $\cf G$ is of tree-depth $\leq d$, then $\cf G$ is of shrub-depth~$\leq d$.
% (cf.\ Proposition~\ref{pro:td-to-tree-model}).
\item
If $\cf G$ is of bounded shrub-depth, then $\cf G$ is of bounded
linear clique-width.
\end{enumerate}
The converse statements are not true in general.
\end{prop}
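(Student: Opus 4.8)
The plan is to establish both inclusions and then exhibit witnessing classes for strictness. For part (a), I would simply invoke Proposition~\ref{pro:td-to-tree-model}: if every $G \in \cf G$ has tree-depth $\leq d$, then each such $G$ lies in $\TM{d}{2^{d}}$, so taking $m = 2^{d}$ gives $\cf G \subseteq \TM{d}{m}$, which by Definition~\ref{def:shrub-depth} means $\cf G$ has shrub-depth at most $d$. (One should be slightly careful with the exact constant and with the ``connected'' refinement, but the bounded statement follows directly; a uniform bound on $m$ is all that is needed.) For part (b), I would take a graph $G \in \TM{d}{m}$ with tree-model $(T,S)$ and convert $T$ into a linear clique-width expression. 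The key idea is that since $T$ has bounded height $d$, the number of ``types'' of subtrees visible at any level is bounded as a function of $m$ and $d$; one processes the leaves of $T$ in left-to-right order, and at each internal node of $T$ one needs only to remember, for the subtree built so far, enough label information to know how future leaves (attached at various least common ancestors) must be joined to it. Since only $d$ levels of least-common-ancestor depth matter, and the adjacency rule depends only on the pair of colours and that depth, a bounded set of labels suffices. This yields a linear $k$-expression with $k = k(m,d)$, hence bounded linear clique-width.

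For the ``converse statements are not true'' claim, I would argue two things. First, bounded shrub-depth does not imply bounded tree-depth: the class of all cliques $\{K_n : n \in \mathbb{N}\}$ has shrub-depth $1$ (indeed $K_n \in \TM{1}{1}$) but unbounded tree-depth, since $\mathrm{td}(K_n) = n$. Second, bounded linear clique-width does not imply bounded shrub-depth: here I would use the class of all paths, which has linear clique-width at most $3$ (a standard fact, related to path-width), yet by Theorem~\ref{thm:Path-TM} the class of all paths has unbounded shrub-depth. This completes all four parts of the proposition.

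The main obstacle is the construction in part (b): turning a depth-$d$ tree-model into a \emph{linear} (not merely general) clique-width expression. The subtlety is that the disjoint-union structure of $T$ at an internal node may have unbounded arity, whereas a clique-width expression uses binary union; processing leaves one at a time (as in linear clique-width / the caterpillar restriction) forces one to maintain, as labels, a summary of ``which future join operations still need to fire'' for every already-constructed vertex. I would handle this by encoding in a vertex's current label its colour together with the list of depths (among the bounded set $\{1,\dots,d\}$) at which it still has ancestors in $T$ that have not yet had all their descendants generated — equivalently, its position on the current ``active path'' of $T$. Bounding the number of such labels by a function of $m$ and $d$, and checking that each of the four clique-width operations (introduce, union-with-right-single-vertex, add-edges-by-label-pair, relabel) can be scheduled to realize exactly the edges prescribed by $S$, is the technical heart of the argument; everything else is routine.
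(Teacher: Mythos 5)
Your proposal is correct and follows essentially the same route as the paper: part (a) via Proposition~\ref{pro:td-to-tree-model}, part (b) by processing the leaves of the tree-model left to right while labels record a vertex's colour together with its position relative to the current active root-to-leaf path (the paper uses labels $(c,\ell)$ where $\ell$ is the depth of the least common ancestor with the next leaf, giving an $m(d+1)$-expression), and the same witnesses for strictness (cliques, and paths via Theorem~\ref{thm:Path-TM}).
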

\end{minipage}

\begin{proof}
a) This follows from Proposition~\ref{pro:td-to-tree-model},
and the converse cannot be true in general because of, e.g., the class of
all cliques.

b) We remark that it is trivial to see that $\cf G$ is of bounded clique-width.
Here we even show how to straightforwardly translate a tree-model
with $m$ colours and depth $d$ into a linear (caterpillar-shaped) $m(d+1)$-expression:
Let $v_1,\dots,v_n$ be any (usual) left-to-right ordering of the leaves of a
tree-model $T$ of some $G\in {\cf G}$.
The expression is constructed inductively for $i=1,\dots,n$ as follows:
\begin{itemize}
\item a vertex $v_i$ is created and added with a (currently unique) colour $(c,0)$
where $c=c(u_i)$ is its colour in $T$,
\item whenever colour $c$ is to be adjacent to colour $c'$ at distance $2d$ in
the model $T$, the expression adds all edges between the colours $(c,0)$ and
$(c',d)$, and
\item for $2d'$ being the distance from $v_i$ to $v_{i+1}$ in $T$,
the expression changes all colours $(c,d)$ with $d<d'$ to $(c,d')$.
\end{itemize}
A counterexample to the converse claim is, e.g., the class of all paths
by Theorem~\ref{thm:Path-TM}.
\end{proof}

The relation between classes of bounded shrub-depth and of bounded
tree-depth is even deeper than shown above.
The operation of a {\em local complementation} in a graph takes any vertex
$v$ and replaces the subgraph induced on the neighbours of $v$ with its edge-complement.
A graph $G$ is a {\em vertex-minor} of a graph $H$ if $G$ is an induced
subgraph of a graph $H'$ such that $H'$ is obtained from $H$ by a sequence
of {local complementations}.
As shown in \cite{hkoo16}, the class of vertex-minors of all graphs of
tree-depth at most $d$ has shrub-depth at most~$d$,
and every class of shrub-depth $d$ can be constructed as vertex-minors of
graphs of tree-depth $d'$ where $d'$ depends (only) on~$d$.

\subsection{SC-depth}

A significant drawback of the notion of shrub-depth is the aforementioned
fact that it does not make sense to ask about the shrub-depth of 
a single finite graph.
Here we propose a remedy for this problem in the form of another, very simple
and single-parameter based, definition of a depth-like parameter 
which turns out to be asymptotically equivalent to shrub-depth.
(Although, several years of research experience since \cite{ghnoor12} have
also shown many clear advantages of the shrub-depth notion.)

Let $G$ be a graph and let $X\subseteq V(G)$.
We denote by $\overline{G}^X$ the graph $G'$ with vertex set $V(G)$ where
$x\neq y$ are adjacent in $G'$ if 
(i) either $\{x,y\}\in E(G)$ and $\{x,y\}\not\subseteq X$, or
(ii) $\{x,y\}\not\in E(G)$ and $\{x,y\}\subseteq X$.
In other words, $\overline{G}^X$ is the graph obtained from $G$ 
by complementing the edges on~$X$.

\begin{defi}[SC-depth\footnote{As the ``Subset-Complementation'' depth.}]
\label{def:SC-depth}
We define inductively the class $\SC n$ as follows:
\begin{itemize}
  \item We let $\SC0=\{K_1\}$;
  \item if $G_1,\dots,G_p\in\SC n$ and $H= G_1\dot\cup\dots\dot\cup G_p$
  denotes the disjoint union of the $G_i$'s,
  then for every subset $X$ of vertices of $H$, we
  have $\overline{H\,}^X\in\SC{n+1}$.
\end{itemize}
The {\em SC-depth} of $G$ is the minimum integer $n$ such that $G\in\SC n$.
\end{defi}

\begin{figure}[tb]
\begin{center}
\includegraphics[width=0.98\hsize]{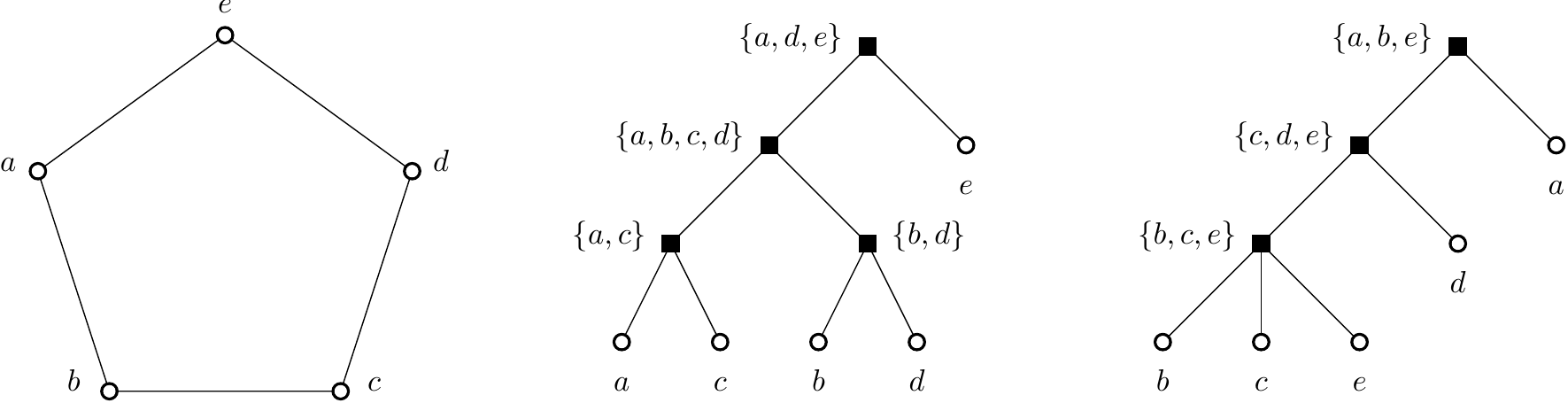}
\caption{A graph $G$ and two possible SC-depth representations by
depicted trees.}
\label{fig:kdtree}
\end{center}
\end{figure}

The SC-depth of a graph $G$ is thus the minimum height of a rooted tree~$Y$,
such that the leaves of $Y$ form the vertex set of $G$, and each internal
node $v$ is assigned a subset $X$ of the descendant leaves of $v$. 
Then the graph corresponding to $v$ in $Y$ is the complement on $X$, of the disjoint
union of the graphs corresponding to the children of $v$ (see Figure~\ref{fig:kdtree}).

The reason we introduce both the asymptotically equivalent SC-depth and shrub-depth 
measures here is that each brings a unique perspective on the classes of graphs 
we are interested in (see e.g.~\cite{hkoo16}).

\begin{thm}
\label{thm:shrubsc}
Let $\cf G$ be a class of graphs. Then the following are equivalent:
\begin{itemize}
  \item \mbox{There exist integers $d$,$m$ such that $\cf G\subseteq \TM dm$ 
	($\cf G$ has bounded shrub-depth).}
  \item There exists an integer $k$ such that $\cf G\subseteq \SC k$ 
	($\cf G$ has bounded SC-depth).
\end{itemize}
More precisely, $\TM dm\subseteq \SC{dm(m+1)}$
and $\SC k\subseteq \TM k{2^k}$.
\end{thm}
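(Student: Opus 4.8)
The plan is to establish the two quantitative inclusions $\TM dm\subseteq\SC{dm(m+1)}$ and $\SC k\subseteq\TM k{2^k}$ separately. Since each of them bounds one parameter of a single graph by a function of the other, together they immediately yield the stated equivalence of \emph{bounded} shrub-depth and \emph{bounded} SC-depth of a class $\cf G$, with no monotonicity lemmas required.

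\smallskip\noindent\textit{The inclusion $\SC k\subseteq\TM k{2^k}$.} This is the easy direction and amounts to bookkeeping along root-to-leaf paths. Given $G\in\SC k$, I fix a witnessing rooted tree $Y$ of height at most $k$ whose leaves are $V(G)$ and each of whose internal nodes $v$ carries a subset $X_v$ of its descendant leaves, so that the SC-construction produces $G$ at the root. Padding every leaf downward by unary nodes with empty subsets (which act as the identity and do not change the constructed graph) lets me assume all root-to-leaf paths of $Y$ have length exactly $k$; I take $T:=Y$ as the tree of the tree-model. For a leaf $a$ and $\delta\in\{0,\dots,k-1\}$ let $w_\delta(a)$ be the ancestor of $a$ at depth $\delta$, and colour $a$ by the binary vector $b(a)=(b_0(a),\dots,b_{k-1}(a))$ with $b_\delta(a)=1$ iff $a\in X_{w_\delta(a)}$; this uses at most $2^k$ colours. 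For distinct leaves $a,b$ whose least common ancestor sits at depth $\delta_0$ --- so their distance in $T$ is $2\ell$ with $\ell=k-\delta_0$ --- the SC-construction inserts the edge $ab$ exactly when an odd number of the common ancestors at depths $0,\dots,\delta_0$ contain both $a$ and $b$ in their subsets, and whether $a$ (resp.\ $b$) belongs to the subset at the depth-$\delta$ common ancestor is precisely $b_\delta(a)$ (resp.\ $b_\delta(b)$). Hence adjacency of $a,b$ depends only on $b(a),b(b)$ and $\ell$, through the parity of $\sum_{\delta=0}^{k-\ell}b_\delta(a)\,b_\delta(b)$; declaring this parity to be the membership rule yields a symmetric signature $S$, and $(T,S)$ is a tree-model of $G$ of depth $k$ with $2^k$ colours.

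\smallskip\noindent\textit{The inclusion $\TM dm\subseteq\SC{dm(m+1)}$.} I start from a tree-model $(T,S)$ of $G$ of height $d$ with $m$ colours, and for $\ell\in\{1,\dots,d\}$ let $M_\ell$ be the symmetric $\{0,1\}$-matrix on the colour set with $M_\ell[i][j]=1$ iff $(i,j,\ell)\in S$. I build the SC-tree $Y$ by replacing every internal node $v$ of $T$, say at depth $\delta$, by a short chain of subset-complementation nodes: the chain for depth $\delta$ complements, one node after another, on the sets $\{\,\text{leaves below }v\text{ whose colour lies in }A\,\}$ as $A$ runs over a family $\mathcal A_\delta$ of colour sets depending only on $\delta$ (the roots of the children's subtrees of $v$ become the children of the bottom node of the chain; if $\mathcal A_\delta=\emptyset$ we still keep one node with empty subset to perform the disjoint union). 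Such a chain toggles the adjacency of every pair of leaves below $v$ according to the symmetric $\{0,1\}$-matrix $N_\delta$ whose $(i,j)$-entry is the number of $A\in\mathcal A_\delta$ with $\{i,j\}\subseteq A$, reduced mod $2$. Consequently, leaves $a,b$ of colours $i,j$ whose least common ancestor sits at depth $\delta_0$ (level $\ell=d-\delta_0$) become adjacent in the constructed graph iff $\big(\sum_{\delta=0}^{\delta_0}N_\delta\big)[i][j]=1$, because exactly the chains at that least common ancestor and above it act on the pair. It thus suffices to choose the $\mathcal A_\delta$ so that $\sum_{\delta=0}^{\delta_0}N_\delta=M_{d-\delta_0}$ for every $\delta_0\in\{0,\dots,d-1\}$; this telescopes, forcing $N_0=M_d$ and $N_\delta=M_{d-\delta}+M_{d-\delta+1}$ (entrywise mod $2$) for $\delta\ge1$, and all of these are genuine symmetric $\{0,1\}$-matrices. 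Finally, any symmetric $\{0,1\}$-matrix $N$ on $m$ colours is realized by such a family: take $A=\{i,j\}$ for every off-diagonal $1$-entry $\{i,j\}$, then correct the diagonal with singletons $A=\{i\}$, using at most $\binom m2+m=\tfrac12 m(m+1)$ sets. Hence each of the $d$ chains on a root-to-leaf path has length at most $\tfrac12 m(m+1)$, the height of $Y$ is at most $d\cdot\tfrac12 m(m+1)\le dm(m+1)$, and the graph constructed from $Y$ is $G$, so $G\in\SC{dm(m+1)}$.

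\smallskip
The step I expect to be the genuine obstacle is the middle one of the second inclusion. The naive idea --- realize each $M_\ell$ by complementing on colour classes \emph{localized} to the level-$\ell$ nodes of $T$ --- over-toggles the edges already built inside the subtrees, and repairing this by propagating correction complementations down the tree blows the height up quadratically in $d$. The point to get right is that one should \emph{not} localize: running the same depth-indexed family of complementations at every node of a given depth and letting the contributions telescope turns the constraints $\sum_{\delta\le\delta_0}N_\delta=M_{d-\delta_0}$ into a triangular system solvable for free, which both keeps the height linear in $d$ and accounts cleanly for the unavoidable over-reach. Everything else --- padding trees to uniform height, verifying that empty-subset unary nodes act as the identity, and the parity computations along root-to-leaf paths --- is routine.
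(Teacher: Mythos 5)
Your proof is correct. The direction $\SC k\subseteq\TM k{2^k}$ is essentially identical to the paper's argument (binary membership vectors along root-to-leaf paths, adjacency read off from the parity of the number of complement sets containing both leaves). The direction $\TM dm\subseteq\SC{dm(m+1)}$, however, takes a genuinely different route. The paper proceeds by induction on $d$: at each level it first complements, \emph{inside each child subtree}, the colour pairs belonging to the signature at that level, and then repeats the same complementations globally, so that the within-subtree toggles cancel and only the cross-subtree edges survive; this costs $2\binom{m+1}{2}=m(m+1)$ complementations per level. You instead build the whole SC-tree in one pass, attaching to every node of depth $\delta$ a chain realizing a single depth-indexed toggle matrix $N_\delta$, and you absorb the ``over-reach'' onto deeper pairs by solving the triangular system $\sum_{\delta\le\delta_0}N_\delta=M_{d-\delta_0}$ over $\mathbb F_2$, i.e.\ $N_0=M_d$ and $N_\delta=M_{d-\delta}+M_{d-\delta+1}$. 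Your realization of each symmetric matrix by pair-sets plus diagonal-correcting singletons uses $\binom{m+1}{2}$ complementations per level, so you in fact land in $\SC{\frac12 dm(m+1)}$, a factor-two improvement over the stated bound (and over the paper's inductive double-complementation). What the paper's version buys is a more transparent correctness check (the cancellation at each inductive step is immediate); what yours buys is the sharper constant and a non-inductive, closed-form description of the SC-tree. Both are valid proofs of the theorem as stated.
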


\begin{proof}
We prove the forward implication by induction on $d$.
In the degenerate base case $d=0$, it is trivially the case that $\TM 0m=\{K_1\}=\SC 0$.
Assume now $G\in \TM{d+1}m$ for some $d\geq0$.
By Definition~\ref{def:tree-model}, there exist
an integer $c\geq1$ and graphs $G_1,\dots,G_c\in\TM dm$
(actually subgraphs of $G$ induced by the leaf sets of the root-subtrees in the
respective tree-model of $G$) such that the following holds:
$G$ results from the disjoint union $G_1\cup\dots\cup G_c$
by adding those edges $uv$ for which $u$ and $v$ belong to distinct graphs
among $G_1,\dots,G_c$, and for the pair $i,j$ of colours of $u,v$,\,
$(i,j,d+1)$ belongs to the signature~$S$.

By the induction assumption, we have got $G_1,\dots,G_c\in\SC{k_0}$ 
for some integer~$k_0$.
For each of these graphs $G_\ell$, $\ell\in\{1,\dots,c\}$,
we successively complement the edges on the following subsets of vertices:
\begin{itemize}
\item for each $i$ such that $(i,i,d+1)\in S$,
on the set $X_\ell^i\subseteq V(G_\ell)$ of the vertices of $G_\ell$ of colour~$i$,
\item for each $i<j$ such that $(i,j,d+1)\in S$, on the set $X_\ell^i\cup X_\ell^j$
(defined as above), then on the set $X_\ell^i$ itself and then on $X_\ell^j$ itself.
\end{itemize}
Observe that at most $m+3{m\choose2}$ complement operations are applied to
each $G_\ell$, and this number can be reduced down to
$m+{m\choose2}={m+1\choose2}$ by skipping possible repeated complements.
Denoting by $G'_\ell$ the graph obtained in this way from $G_\ell$
we get, by Definition~\ref{def:SC-depth}, that
$G'_1,\dots,G'_c\in\SC{k_1}$ where $k_1=k_0+{m+1\choose2}$.

Effectively, in each $G_\ell$ we have complemented the edges whose colour
pairs (together with third $d+1$) belong to~$S$.
In the next step we make the disjoint union $G':=G'_1\cup\dots\cup G'_c$
and repeat the same complementation procedure on this global level.
Namely:
\begin{itemize}
\item for each $i$ such that $(i,i,d+1)\in S$, on the set $X^i\subseteq V(G')$ 
of the vertices of $G'$ of colour~$i$,
\item for each $i<j$ such that $(i,j,d+1)\in S$, on the set $X^i\cup X^j$,
then on $X^i$ and then on~$X^j$.
\end{itemize}
Denoting the resulting graph by $G''$, we similarly get
$G''\in\SC{k_2}$ where $k_2=k_1+{m+1\choose2}=k_0+m(m+1)$.
It remains to routinely verify that $G''\simeq G$.

\smallskip
As for the backward implication, we directly construct a tree-model for each
graph $G\in \SC k$.
By Definition~\ref{def:SC-depth}, $G\in \SC k$ can be constructed along
a rooted tree $T$ such that the leaf set of $T$ is $V(G)$
and each internal node $t$ of $T$ is associated with a complement set
$X_t$ (which is a subset of the descendant leaves).
We assign the leaf colours as follows.
Let $v\in V(G)$ be a leaf of $T$, and $t_0=v,t_1,\dots,t_k=r$ be the path from 
$v$ to the root $r$ of $T$.
We colour $v$ with the binary vector $(a_i)_{i=1}^k$ such that
$a_i=1$ iff $v\in X_{t_i}$.

By Definition~\ref{def:SC-depth}, $uv$ forms an edge of $G$, if and only if
the pair $\{u,v\}$ belongs to an odd number of the complement sets $X_t$
over the whole~$T$.
This can easily be determined from the colours of $u$ and $v$, and from the
depth of their least common ancestor in~$T$.
Consequently, $G\in \TM k{2^k}$.
\end{proof}

\subsection{Long paths}

For graphs of small tree-depth a characteristic property is the absence of
long paths as subgraphs, cf.~Proposition~\ref{prop:td-basics}\,b).
This is obviously false for classes of small shrub-depth since those, in
particular, include all cliques and bicliques.
However, one can restrict induced paths in every class $\TM dm$,
as follows.

\begin{thm}\label{thm:Path-TM}
Let $\ell=3\cdot 2^m-4$ and $P_\ell$ denote the path of length $\ell$,
i.e., on $\ell+1$ vertices.
Then $P_\ell\in \TM {2m+1}m$, but for any $d$ we have
$P_{\ell+1}\not\in \TM dm$.
\end{thm}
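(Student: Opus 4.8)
The statement has two parts: an upper bound (a concrete tree-model construction for $P_\ell$ with $m$ colours and depth $2m+1$, where $\ell = 3\cdot 2^m - 4$) and a lower bound (no tree-model with $m$ colours, regardless of depth, can encode $P_{\ell+1}$). For the upper bound, I would proceed by induction on $m$. The base case $m=1$ gives $\ell = 2$, i.e.\ $P_2$, which should fit in $\TMl{3}{1}$: a path on $3$ vertices has a tree-model on one colour of depth at most $3$ (place the two endpoints at distance $4$ so they are non-adjacent, the middle vertex closer to each). For the inductive step, I want to assemble $P_\ell$ with $\ell = 3\cdot 2^m-4$ out of a bounded number of copies of the previously-constructed path $P_{\ell'}$ with $\ell' = 3\cdot 2^{m-1}-4$. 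The arithmetic $3\cdot 2^m - 4 = 2(3\cdot 2^{m-1}-4) + 4$ suggests that one long path of this kind splits into two shorter ones plus a short ``bridge'' of a constant number of extra vertices; the new colour (the $m$-th) would be used to mark the few vertices near the splitting point so that exactly the right bridging edges are added by the signature at the new top level, while the extra two levels of depth ($2m+1 = 2(m-1)+1 + 2$) accommodate the new root plus the structure needed to glue the pieces. The key point to verify is that within each half, all required adjacencies are already handled at lower levels, and that the colour classes at the gluing level induce exactly the bridging edges of the path and nothing spurious between the two halves.

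The cleaner and more likely route for the upper bound, given the shape of the bound, is actually to exploit Proposition~\ref{pro:td-to-tree-model} together with the tree-depth of a path: by Proposition~\ref{prop:td-basics}(b), $P_\ell$ has tree-depth exactly $\lceil \log_2(\ell+2)\rceil$. With $\ell = 3\cdot 2^m - 4$ we get $\ell + 2 = 3\cdot 2^m - 2$, so the tree-depth is $\lceil \log_2(3\cdot 2^m - 2)\rceil = m+2$ (since $2^{m+1} < 3\cdot 2^m - 2 \le 2^{m+2}$ for $m\ge 1$). Then Proposition~\ref{pro:td-to-tree-model} applied to the connected graph $P_\ell$ of tree-depth $m+2$ gives $P_\ell \in \TMl{m+1}{2^{m+2}}$ — but this has too many colours and too little depth-saving. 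So I would instead use the explicit path-decomposition picture (Figure~\ref{fig:td-path}): the optimal tree-depth decomposition of a path is a balanced binary tree, and I would directly build a tree-model from it, colouring leaves by (their position-type in the binary tree) and checking that the constant number of colour types can be compressed down to $m$ by using the extra depth $2m+1$ rather than $m+1$. I expect the bookkeeping to confirm $\ell = 3\cdot 2^m-4$ is exactly the threshold where $m$ colours suffice with depth roughly $2m$; the precise constant $3\cdot 2^m - 4$ should drop out of the recursion $\ell_m = 2\ell_{m-1} + 4$, $\ell_1 = 2$.

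For the lower bound, the plan is a pigeonhole/Ramsey-type argument on colourings. Suppose $P_{\ell+1}$ had a tree-model with $m$ colours and some depth $d$. Consider the vertices of $P_{\ell+1}$ in their path order $w_0, w_1, \dots, w_{\ell+1}$. The tree $T$ partitions these $\ell+2$ leaves, and along the path each consecutive pair $w_iw_{i+1}$ is an edge while each non-consecutive pair is a non-edge; whether a pair is an edge is determined by the two colours and the depth of the least common ancestor. I would argue that a path which is too long forces, via a counting argument on the branching structure of $T$, two vertices $w_i, w_j$ with $j > i+1$ that have the same colour as some $w_k, w_{k'}$ pair of consecutive vertices and the same LCA depth — producing a contradiction between ``must be an edge'' and ``must be a non-edge.'' The quantitative heart is to show that any tree-model on $m$ colours whose graph contains an induced path on $N$ vertices forces $N \le 3\cdot 2^m - 3$; this is presumably done by induction on $m$, splitting the path at the leaves hanging off the top branching node and observing that within each root-subtree at most two colours can appear ``freely'' on the induced subpath before the signature is forced, with a constant-size overlap across the split. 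The main obstacle is getting this split-and-count to yield exactly the constant $3\cdot 2^m - 3$ and not merely some exponential bound: I anticipate needing to carefully track, at the top level of $T$, how a long induced path can zig-zag between the root-subtrees (the edges between subtrees are governed by a single signature entry per colour pair), bounding the number of ``crossings'' and the length of each ``run'' inside one subtree, and then closing the induction. This exact-constant lower bound is the step I expect to be the crux of the whole proof.
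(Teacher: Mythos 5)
Your overall architecture coincides with the paper's: both directions go by induction on $m$, driven by the recursion $\ell_m=2\ell_{m-1}+4$, $\ell_1=2$, which you correctly identify. Your first plan for the upper bound (two copies of the shorter path glued through a constant-size bridge carried by the new colour, at the cost of two extra levels of depth) is essentially the paper's construction, where the bridge is a single leaf $w$ of colour $m+1$ hung on its own long branch under a new root and made adjacent to one marked endpoint in each copy. Two caveats: the induction must carry the invariant that each endpoint of the current path is a leaf with no siblings, so that the new-colour vertex can be attached as a sibling of an endpoint without creating spurious edges -- you do not mention needing any such invariant; and your base case is not literally realizable, since if the middle vertex of $P_2$ is at tree-distance $2$ from both endpoints then all three leaves share a parent and the endpoints are also at distance $2$ (in the correct one-colour, depth-$3$ model the middle vertex is \emph{farther} from each endpoint than the endpoints are from each other). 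Your alternative ``cleaner route'' via tree-depth does not lead anywhere, as you yourself observe.

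The genuine gap is the lower bound, and you flag it yourself: ``this exact-constant lower bound is the step I expect to be the crux.'' A pigeonhole on (colour, LCA-depth) pairs yields a contradiction only for some length, not the sharp threshold, and ``bounding the number of crossings'' between root-subtrees will not close the induction by itself. The missing idea is the paper's contrapositive reduction: assuming $P_{2\ell+5}\in\TM{d}{m+1}$ with a minimum-depth model $T$, one fixes a single signature triple $(i,j,d)$ realizing an edge between distinct root-subtrees and considers the subgraph $J$ consisting of exactly the edges realized by that triple. Since a path contains no $K_{2,2}$ and no cycle, a short case analysis forces $J$ to be tiny (a path of length at most $4$, or two components each isomorphic to $K_2$ or $K_{1,2}$), and in the relevant cases every leaf coloured $i$ or $j$ is incident to $J$. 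Deleting $V(J)$ (or restricting to the long side and patching with one dummy leaf) then simultaneously preserves a subpath of length at least $\ell+1$ and reduces the number of colours to $m$, giving $P_{\ell+1}\in\TM{d}{m}$ and hence the exact constant. It is this ``isolate the edges of one signature entry, show they span a bounded vertex set, delete it to drop a colour'' step -- not a count of zig-zags -- that your plan lacks, so as written the split-and-count cannot be completed.
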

In particular, there exist no $d,m$ such that $\TM dm$ would contain all
paths.

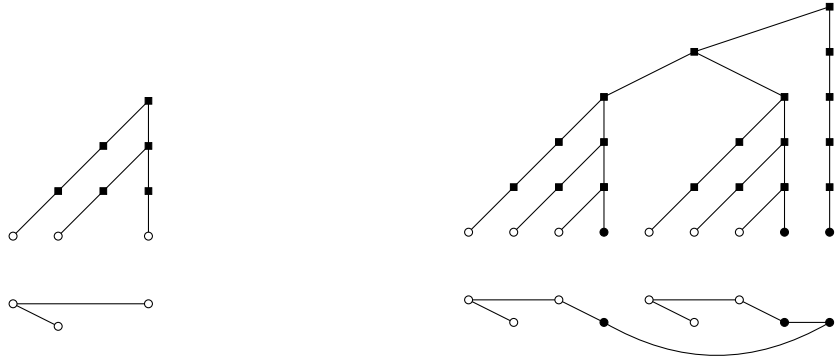
\begin{figure}[ht]
{\hfill
\begin{tikzpicture}[scale=0.6]
\tikzstyle{every node}=[draw, shape=rectangle, minimum size=2.5pt,inner sep=1pt, fill=black]
\tikzstyle{cbl}=[shape=circle, fill=black, minimum size=3pt]
\tikzstyle{cwh}=[shape=circle, fill=white, minimum size=3pt]
\draw (2,4) node {} -- (2,3) node {} -- (2,2) node {} -- (2,1) node[style=cwh] {};
\draw (2,4) -- (1,3) node {} -- (0,2) node {} -- (-1,1) node[style=cwh] {};
\draw (2,3) -- (1,2) node {} -- (0,1) node[style=cwh] {};
\tikzstyle{every node}=[draw, style=cwh, inner sep=1pt]
\draw  (0,-1) node {} -- (-1,-0.5) node {} -- (2,-0.5) node {} ;
\draw  (0,-1.8) node[color=white] {};
\end{tikzpicture}
\hfill\hfill
\begin{tikzpicture}[scale=0.6]
\tikzstyle{every node}=[draw, shape=rectangle, minimum size=2.5pt,inner sep=1pt, fill=black]
\tikzstyle{cbl}=[shape=circle, fill=black, minimum size=3pt]
\tikzstyle{cwh}=[shape=circle, fill=white, minimum size=3pt]
\draw (8,6) node {} -- (8,5) node {} -- (8,4) node {} -- (8,3) node {}
	 -- (8,2) node {} -- (8,1) node[style=cbl] {};
\draw (8,6) -- (5,5) node {} -- (3,4) node {} -- (2,3) node {} -- (1,2) node {}
	 -- (0,1) node[style=cwh] {};
\draw (3,4) -- (3,3) node {} -- (2,2) node {} -- (1,1) node[style=cwh] {};
\draw (3,3) -- (3,2) node {} -- (2,1) node[style=cwh] {};
\draw (3,2) node {} -- (3,1) node[style=cbl] {};
\draw (5,5) -- (7,4) node {} -- (6,3) node {} -- (5,2) node {}
	 -- (4,1) node[style=cwh] {};
\draw (7,4) -- (7,3) node {} -- (7,2) node {} -- (6,1) node[style=cwh] {};
\draw (7,3) -- (6,2) node {} -- (5,1) node[style=cwh] {};
\draw (7,2) node {} -- (7,1) node[style=cbl] {};
\tikzstyle{every node}=[draw, style=cwh, inner sep=1pt]
\draw  (1,-1) node {} -- (0,-0.5) node {} -- (2,-0.5) node {}
	-- (3,-1) node[style=cbl] {} to [out=-30,in=-150] (8,-1) ;
\draw  (5,-1) node {} -- (4,-0.5) node {} -- (6,-0.5) node {}
	-- (7,-1) node[style=cbl] {} -- (8,-1) node[style=cbl] {} ;
\end{tikzpicture}
\hfill}
\medskip
\caption{Tree-models (top) for small paths (below), 
	as used in the proof of	Proposition~\ref{thm:Path-TM}.
	Left: with $1$ colour and depth $3$ for $P_2$.
	Right: $2$ colours and depth $5$ for~$P_8$.}
\label{fig:P8tm}
\end{figure}

\begin{proof}
We start with the construction of $P_\ell\in \TM {2m+1}m$
that is, of an appropriate tree-model $T_m$ of~$P_\ell$,
by induction on~$m$.
We shall maintain a special property that each end of $P_\ell$
is represented in $T_m$ by a leaf which has no siblings, 
i.e., its parent is of degree~$2$.
As the base case, we use the tree-model $T_1$ of $m=1$ colours and depth $2m+1=3$
from the left-hand side of Figure~\ref{fig:P8tm}.
(Note that although $P_2\in\TM21$, we use an extra level in $T_1$ 
to achieve our property.)

We now construct $T_{m+1}$ for $m\geq1$.
Let $u$ and $v$ be the ends of $P_\ell$, and recall that
each of $u,v$ has no siblings in $T_m$.
We create a sibling $u_1$ of $u$ in $T_m$ and assign $u_1$ a new colour $m+1$. 
This intermediate tree-model $U_m$ can represent $P_{\ell+1}$
with the ends $u_1,v$ and, see Figure~\ref{fig:P8tm} right,
the desired model $T_{m+1}$ follows:
\begin{itemize}
\item for $U_m$ and its disjoint copy $U_m'$, 
add a common ancestor $q$ of their roots,
\item create a rooted path of length $2m+3$, with the root $r$ and the only
leaf $w$ of colour $m+1$, and make $q$ another son of~$r$.
\end{itemize}
Clearly, $T_{m+1}$ is a tree-model of $m+1$ colours and depth
$2m+3=2(m+1)+1$, and it can represent the edges $u_1w$ and $u_1'w$ but not
$u_1u_1'$.
Thus $T_{m+1}$ makes a tree-model of $P_{\ell'}$ for 
$\ell'=2(\ell+1)+2=3\cdot 2^{m+1}-4$.

\smallskip

In the converse direction we start with an easy observation for
$m=1$; $P_3\not\in\TM d 1$ for any~$d$ (this follows from the folklore
fact that the path on $4$ vertices is not a cograph, too).
The proof can then be finished by induction over $m\geq1$,
provided that we establish the following contrapositive claim:
if $P_{2\ell+5}\in\TM d{m+1}$ for any $\ell,m,d\geq1$, then $P_{\ell+1}\in\TM dm$.

So fix $\ell$ and $m$, 
and assume $G:=P_{2\ell+5}\in\TM d{m+1}$ and $T$ is a corresponding tree-model
of $m+1$ colours and minimum possible height~$d$.
In this proof we denote by $T_x$ the subtree of $T$ rooted at a node~$x$.
As $d$ is minimum and $P_{2\ell+5}$ is connected, 
there exist distinct sons $u,v$ of the root of $T$
and colours $i,j$ (possibly equal),
such that $T_u$ includes at least one leaf with colour $i$
and $T_v$ at least one leaf with colour $j$,
and the colour pair $(i,j)$ at distance $2d$ determines an edge.

We let $J\subseteq G$ denote the subgraph formed only by those edges which
are determined by the colour pair $(i,j)$ at distance $2d$ in $T$,
i.e., $xy\in E(J)$ iff the colours of $x,y$ are $i,j$ in $T$ and the only
common ancestor of $x,y$ is the root of~$T$.

If $i=j$, then we claim that there cannot be two non-incident edges in~$J$.
Indeed, this would necessarily mean that $J$ contains $K_{2,2}$, but
$K_{2,2}\not\subseteq G$.
Hence $J$ is $K_2$ or $K_{1,2}$ and there exist at most three vertices of
colour $i$ altogether, and in either case one subpath in $G-V(J)$
is of length at least $\lceil(2\ell+5-4)/2\rceil=\ell+1$.
Hence $T-V(J)$ gives a tree-model of $P_{\ell+1}$ of $m$ labels.

We now examine the other possibility $i\not=j$.
First, we observe that if $x_1y_1,x_2y_2$ are non-incident edges of $J$ 
such that $x_1,x_2$ are of the same colour, then the only   
common ancestor of $x_1,x_2$ is the root of~$T$.
Otherwise, we would get a contradiction that $K_{2,2}\subseteq J$.
Second, we argue that there cannot be three pairwise non-incident edges
$x_1y_1,x_2y_2,x_3y_3$ in $J$ (where $x_1,x_2,x_3$ are of the same colour).
If this happened, then (say) for the vertex $y_1$ at least two
of the vertices~$x_1,x_2,x_3$ would have only one common ancestor with $y_1$, 
the root of~$T$.
Consequently, $y_1$ would have at least two neighbours in the set
$\{x_1,x_2,x_3,y_1,y_2,y_3\}$, and the same would symmetrically hold for
all the members of this set, contradicting the fact that $J$ is acyclic.

Therefore, $J$ is a path of length at most $4$,
or $J$ consists of two components isomorphic to $K_2$ or $K_{1,2}$.
Moreover, if there exists a leaf $z$ of colour $i$ or $j$ in $T$ which is not
incident to an edge of $J$, then $J$ has no two non-incident edges
and all such leaves (of colour $i$ or $j$) not incident to $E(J)$ 
are of the same colour, as can be easily checked.

We first consider the case that $J$ has one component.
If it is $K_2$ or $K_{1,2}$ then,
by the previous arguments, all the leaves of $T$ coloured $i$ (say) are incident to the
one or two edges of~$J$.
As above (in the case of $i=j$) we can now argue that $T-V(J)$ gives a
tree-model of $P_{\ell+1}$ of $m$ labels.
If, on the other hand, $J$ is $P_3$ or $P_4$,
then all the leaves of $T$ coloured $i$ or $j$ are incident to the edges of~$J$.
We form a new tree-model $T'$ by removing from $T$ all the leaves
of colours $i,j$ (i.e., incident to the edges of $J$) 
and adding arbitrarily one new leaf of colour $i$.
Then $T'$ of $m$ labels models a path $P_{2\ell+1}$ (or $P_{2\ell+2}$).

We are left with the case of $J$ consisting of two components, such that all
the leaves of $T$ coloured $i$ or $j$ are incident to the edges of~$J$.
If any of the subpaths of $G-V(J)$ is of length at least $\ell+1$, then we
are again done.
Otherwise, we can choose one component $J_1$ of $J$ such that
$G-V(J_1)$ contains a subpath $G'$ of length at least $\ell+3$.
We denote by $J_2$ the other component of $J$ (presumably $J_2\subseteq G'$),
and form a new tree-model $T'$ by restricting $T$ to the leaves from $G'$,
removing the leaves of $J_2$ and adding arbitrarily one leaf of colour $i$
(recall that no vertex of $G'-V(J_2)$ has colour $i$ or $j$).
Hence $T'$ of $m$ labels models a path $P_{\ell+1}$ (or $P_{\ell+2}$).
\end{proof}

The combinatorial result in Theorem~\ref{thm:Path-TM} has interesting
relations also to logical questions (see Section~\ref{sec:transdhier}).
For instance, in respect of the research of \MSO-orderable graphs
by Blumensath and Courcelle~\cite{bc14},
note that in the class of all finite paths one can easily define
a linear ordering by an \MSOi formula.
Hence it immediately follows from a characterization given in 
\cite[Theorem~5.31]{bc14} 
that the class of all finite paths cannot have bounded shrub-depth.
The advantage of our Theorem~\ref{thm:Path-TM} (occurirng already
in~\cite{ghnoor12}) is that it gives exact combinatorial bounds.
Furthermore, Theorem~\ref{thm:Path-TM} together with
Theorem~\ref{thm:shrub-depth-interpretability} implies
the result~\cite[Theorem~5.31]{bc14} that infinite graph classes 
of bounded shrub-depth are not \MSOi-orderable.

\begin{figure}[tb]
\begin{center}
\includegraphics[width=.6\textwidth]{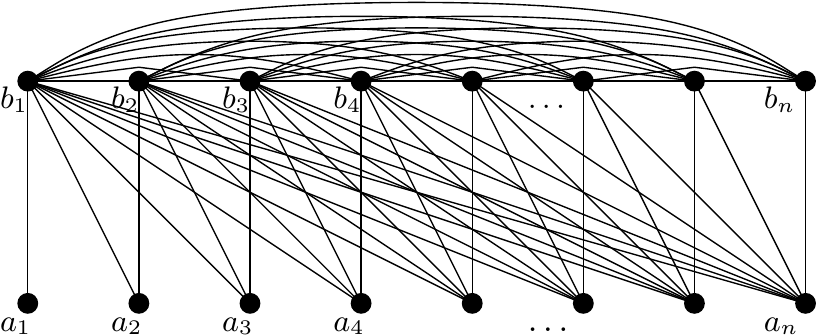}
\caption{An example of a graph class not containing any induced subpaths 
of length~$3$, which has unbounded shrub-depth.
In fact, these graphs are even the so called threshold graphs (a~special case of small linear
clique-width) -- view the vertices in the backward order
$a_n,b_n,a_{n-1},b_{n-1},\dots,a_1,b_1$.}
\label{fig:halft}
\end{center}
\end{figure}

Note, however, that graph classes of bounded
shrub-depth are not asymptotically related to those excluding long induced subpaths;
in the opposite direction the situation here is very different 
than in Proposition~\ref{prop:td-basics}\,b).
As an example, we mention the graph class from Figure~\ref{fig:halft}
which contains no induced subpaths of length~$3$.
One can give a direct combinatorial proof that this class is of unbounded
shrub-depth (similarly as for Theorem~\ref{thm:Path-TM}), 
but we skip it here since this fact follows from the aforementioned result of~\cite{bc14}
(the graph of Figure~\ref{fig:halft} is FO-orderable)
or, alternatively, from a combination of results of~\cite{hkoo16}.

\subsection{Induced subgraphs characterization}

Lastly in this section,
we provide yet another characterization of the classes defined previously.
In a nutshell, we are going to show that each of these classes can be
characterized by a finite list of {\em forbidden induced subgraphs}.
A nice consequence of this finding is that membership in each of the classes
can be tested in polynomial time.
The tool we use here is well-quasi-ordering.

A class or property is said to be {\em hereditary}
if it is closed under taking induced subgraphs.
A {\em well-quasi-ordering} (or {\em wqo}) of a set $X$
is a quasi-ordering on $X$ such that for any infinite sequence of elements 
$x_1, x_2,\dots$ of $X$ there exist $i<j$ with  $x_i\leq x_j$. In other words,
a wqo is a quasi-ordering that does not contain an infinite strictly decreasing
sequence or an infinite set of incomparable elements.
We are going to use the following well-known result:

\begin{thm}[Ding~\cite{din92}]
\label{thm:Ding}
Let $m\in\mathbb N$ be an integer and $C$ be a finite set of colours.
The class of the graphs not containing a path on $m$ vertices as a
subgraph and with vertices coloured by $C$ is well-quasi-ordered
under the colour-preserving induced subgraph order.
\end{thm}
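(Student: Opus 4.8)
The plan is to derive the statement from the fact that \emph{graphs of bounded tree-depth are well-quasi-ordered by the colour-preserving induced-subgraph order}, which I would prove by induction on the tree-depth bound. For the reduction, note that a graph with no $P_m$ subgraph has every path on at most $m-1$ vertices, i.e.\ of length at most $m-2$, so by Proposition~\ref{prop:td-basics}\,b) its tree-depth is at most $m-1$. Thus the class in the theorem is contained in the class $\cf G_{m-1}^C$ of all $C$-coloured graphs of tree-depth $\le m-1$, and since any subclass of a wqo is a wqo, it suffices to show that $\cf G_d^C$ is a wqo for every integer $d$ and every finite colour set $C$.

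For the base case $d\le 1$, the class $\cf G_1^C$ consists exactly of the edgeless $C$-coloured graphs, i.e.\ of finite multisets over $C$ ordered by sub-multiset containment, which is a wqo by Dickson's lemma. For the inductive step I would first handle connected graphs. Let $\cf C_{d+1}^C\subseteq\cf G_{d+1}^C$ be the connected members. For each $G\in\cf C_{d+1}^C$ with at least two vertices, the elimination-forest definition (Definition~\ref{def:tree-depth}) supplies a vertex $r$ with ${\rm td}(G-r)\le d$; encode $G$ by the pair $\big(c(r),\,M_G\big)$, where $M_G$ is the multiset of connected components of $G-r$, each component being recoloured over $C\times\{0,1\}$ so that a vertex $v$ remembers its original colour together with the bit $[\,vr\in E(G)\,]$. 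Each such component lies in $\cf C_d^{C\times\{0,1\}}$, which is a wqo by the induction hypothesis; hence the multisets $M_G$ form a wqo by Higman's lemma, and so does the set of pairs. The point is that this ``encoding quasi-order'' is \emph{contained in} the colour-preserving induced-subgraph order on $\cf C_{d+1}^C$: from $c(r_1)=c(r_2)$ together with an injection of $M_{G_1}$ into $M_{G_2}$ realised by colour-preserving induced embeddings of the individual components, one builds an induced embedding $G_1\hookrightarrow G_2$ by sending $r_1$ to $r_2$ and following the component embeddings --- edges to the root are preserved because they are recorded in the colours, and edges between distinct components are absent on both sides by injectivity. Since a quasi-order coarser than a wqo is itself a wqo, $\cf C_{d+1}^C$ is a wqo.

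Finally, an arbitrary graph of tree-depth $\le d+1$ is the disjoint union of connected graphs of tree-depth $\le d+1$, so $\cf G_{d+1}^C$ may be identified with the set of finite multisets over $\cf C_{d+1}^C$; the multiset embedding order on these is a wqo by Higman's lemma, and, exactly as before, it is contained in the induced-subgraph order on $\cf G_{d+1}^C$ (given component-wise embeddings one simply takes the union of their images), so $\cf G_{d+1}^C$ is a wqo, completing the induction. I expect the delicate part to be purely organisational: arranging that at each step it is the \emph{easy} order (multiset embedding, or the pair-encoding) that sits inside $\le_{\mathrm{ind}}$ --- so that ``wqo'' propagates in the right direction --- and checking that folding the adjacency-to-the-root information into the vertex colours really glues the component-wise embeddings into one induced-subgraph embedding. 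The only external ingredients are Higman's (Dickson's) lemma and the one-step recursion for tree-depth, which is immediate from Definition~\ref{def:tree-depth}.
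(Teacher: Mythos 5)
The paper does not prove this statement at all: it is imported verbatim as Ding's theorem, with only the citation \cite{din92} attached, so there is no in-paper proof to compare yours against. Judged on its own, your argument is correct and self-contained. The reduction is sound: a graph with no $P_m$ subgraph has longest path length at most $m-2$, hence tree-depth at most $m-1$ by Proposition~\ref{prop:td-basics}\,b), so the theorem follows from the (folklore, but here fully proved) fact that $C$-coloured graphs of tree-depth at most $d$ are wqo under colour-preserving induced subgraphs. Your induction is the standard one and the delicate points are handled: the one-vertex-deletion recursion for connected graphs of bounded tree-depth is immediate from Definition~\ref{def:tree-depth} (for connected $G$ the witnessing forest is a tree, and deleting its root drops the height by one); folding the adjacency-to-the-root bit into the colours does make the glued map an \emph{induced} embedding, since non-edges to the root and non-edges between distinct components are automatically preserved; and the direction of the wqo transfer is right, because the encoding order (product and multiset orders built from Dickson/Higman) is contained in $\le_{\mathrm{ind}}$, and any quasi-order containing a wqo is a wqo. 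Two remarks for context: what you prove is exactly the coloured, induced-subgraph instance the paper needs, which is genuinely weaker than Ding's full result (Ding also treats the plain subgraph order and does not pass through tree-depth); and your route has the side benefit of making Corollary~\ref{cor:treemodelWQO} almost immediate in spirit, since it is again a ``bounded-height tree structure plus Higman'' argument.
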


\begin{cor}
\label{cor:treemodelWQO}
Let $\cf S$ be a graph class of bounded shrub-depth, such that the vertices of the graphs in $\cf S$
are coloured from a finite set $C$ of colours.
Then $\cf S$ is well-quasi-ordered under the colour-preserving induced subgraph order.
\end{cor}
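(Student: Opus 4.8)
The plan is to reduce Corollary~\ref{cor:treemodelWQO} to Ding's theorem (Theorem~\ref{thm:Ding}) by showing that a bounded-shrub-depth class, even when its vertices carry extra colours from a finite set $C$, can be re-encoded as a class of graphs that exclude some fixed long path as a \emph{subgraph} and whose vertices are coloured by a (larger but still finite) colour set; and that this re-encoding is a colour-preserving-induced-subgraph embedding in a way that transfers the wqo property back. First I would fix $d,m$ with $\cf S\subseteq\TM dm$ (possible since $\cf S$ has bounded shrub-depth). The key combinatorial input is Theorem~\ref{thm:Path-TM}: every graph in $\TM dm$ contains no path on $\ell+2$ vertices as a subgraph, where $\ell=3\cdot2^m-4$, because $P_{\ell+1}\notin\TM dm$ and $\TM dm$ is closed under taking induced subgraphs while the presence of a path on $\ell+2$ vertices as a \emph{subgraph} forces the induced subgraph on those vertices to still be in $\TM dm$ and to contain a long induced path --- more carefully, one argues that a sufficiently long path subgraph in $G\in\TM dm$ would yield (by a Ramsey-type or direct pigeonhole argument on the colours along the path) a long \emph{induced} path, contradicting Theorem~\ref{thm:Path-TM}. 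So there is a uniform bound $M=M(m)$ such that no graph in $\TM dm$ has $P_M$ as a subgraph.

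Next I would pass from $\cf S$ to the class $\cf S'$ of the same graphs, now regarded as coloured by the product colour set $C'=C\times\{1,\dots,m\}$ (original colour crossed with the tree-model colour). This is still finite, and each graph in $\cf S'$ still has no $P_M$ subgraph. Apply Theorem~\ref{thm:Ding} with this colour set $C'$ and $M$: the class of all $C'$-coloured graphs with no $P_M$ subgraph is wqo under the colour-preserving induced subgraph order, hence so is its subclass $\cf S'$. Finally I would observe that a colour-preserving induced-subgraph embedding with respect to the finer colouring $C'$ is in particular one with respect to the coarser colouring $C$ (forgetting the second coordinate is a monotone map of quasi-orders), so an infinite sequence in $\cf S$ with no $\le$-pair for the $C$-order would lift to such a sequence in $\cf S'$ for the $C'$-order (just pick any tree-model colouring for each graph), contradicting the wqo of $\cf S'$. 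Hence $\cf S$ is wqo.

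An alternative, perhaps cleaner, route avoids the product colouring: simply note that the \emph{underlying uncoloured} graphs of $\cf S$ lie in $\TM dm$, hence exclude $P_M$ as a subgraph, so the $C$-coloured class $\cf S$ is directly a subclass of ``$C$-coloured graphs with no $P_M$ subgraph,'' and Ding's theorem applies verbatim. I would present this shorter version as the main argument, since the tree-model colours are irrelevant to the wqo conclusion once the path-exclusion bound is in hand.

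The main obstacle is the step asserting that membership in $\TM dm$ excludes long paths \emph{as subgraphs} (not merely as induced subgraphs): Theorem~\ref{thm:Path-TM} is stated for induced paths, so one must argue that a long path subgraph inside a $\TM dm$-graph $G$ produces a long induced path inside $G$ --- equivalently, inside a member of $\TM dm$ --- which then contradicts Theorem~\ref{thm:Path-TM}. The cleanest way is: take a path $P$ on $N$ vertices that is a subgraph of $G$; the induced subgraph $G[V(P)]$ is again in $\TM dm$, and a standard greedy/Ramsey argument extracts from any graph containing a spanning path on $N$ vertices an induced path on roughly $\Omega(N^{1/(m^2)})$ vertices when the number of "edge types" is bounded --- or even more simply, since $G[V(P)]\in\TM dm$ one may invoke directly that its longest induced path has $\le\ell+1$ vertices and then bound $N$ by a function of $\ell$ and $m$ via the structure of tree-models (a path subgraph of length $>\ell$ in a graph whose longest induced path has length $\le\ell$ can exist, so one genuinely needs the Ramsey step). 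I expect this to be the part requiring the most care; everything after it is a direct citation of Ding's theorem plus a trivial monotonicity remark.
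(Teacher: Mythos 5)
Your argument hinges, in both versions, on the claim that there is a uniform bound $M=M(m)$ such that no graph in $\TM dm$ contains $P_M$ as a \emph{subgraph}. This claim is false: the clique $K_n$ belongs to $\TM11\subseteq\TM dm$ and contains a Hamiltonian path, i.e.\ a path on $n$ vertices as a subgraph, for every $n$. The paper points this out explicitly just after Theorem~\ref{thm:Path-TM} (``This is obviously false for classes of small shrub-depth since those, in particular, include all cliques and bicliques''). Your proposed repair --- extracting a long \emph{induced} path from a long path subgraph by a Ramsey-type argument --- cannot succeed, because the Ramsey alternative to a long induced path in a graph with a long spanning path is precisely a large clique or a large complete bipartite subgraph, and both of these are fully compatible with membership in $\TM dm$ (they have shrub-depth $1$). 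So Ding's theorem cannot be applied to the graphs of $\cf S$ themselves, and the ``cleaner route'' you single out as the main argument fails at its first step; the product-colouring version fails for the same reason.

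The object to which Theorem~\ref{thm:Ding} should be applied is not the graph $G_i$ but its \emph{tree-model} $T_i$: a rooted tree of height $d$ has diameter at most $2d$ and hence genuinely excludes a fixed path as a subgraph. This is what the paper does: it colours the leaves of $T_i$ by the pair consisting of the tree-model colour and the $C$-colour of the corresponding vertex of $G_i$, applies Ding's theorem to conclude that the sequence $T_1^+,T_2^+,\dots$ is wqo under the (rooted, colour-preserving) subtree order, and then observes that such an embedding of tree-models induces a colour-preserving induced-subgraph embedding of the underlying graphs, since adjacency is determined by leaf colours and the depth of least common ancestors. (To make that last transfer airtight one also wants the two tree-models to share the same signature $S$ and the embedding to preserve leaf depths, but both are arranged by absorbing finitely much information into the colours and passing to an infinite subsequence.)
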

\begin{proof}
Consider an infinite sequence $(G_1,G_2,\dots)\subseteq\cf S$, and the
corresponding tree-models $(T_1,T_2,\dots)$.
Let $T_i^+$, $i=1,2,\dots$, denote the rooted tree with leaf labels composed
of the colours of $T_i$ and the colours of $G_i$.
By Theorem~\ref{thm:Ding},
$T_1^+,T_2^+,\dots$ of bounded diameter is wqo under the 
rooted coloured subtree relation, and, consequently, so are
the coloured graphs $G_1,G_2,\dots$, as desired.
\end{proof}

The advertised result now follows by a simple twist as follows.

\begin{thm}
\label{thm:clforbidden}
For every integers $d,m$, there exists a finite set of graphs $\cf F_{d,m}$
(the forbidden subgraphs)
such that a graph $G$ belongs to $\TM dm$ if and only if $G$ has no induced
subgraph isomorphic to a member of $\cf F_{d,m}$.

Similarly, for every $n$ there exists a finite set of graphs $\cf F'_{n}$
such that $G\in\SC n$ if and only if $G$ has no induced
subgraph isomorphic to one of $\cf F'_{n}$.
\end{thm}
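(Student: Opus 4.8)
The plan is to derive this from the well-quasi-ordering statement of Corollary~\ref{cor:treemodelWQO}. First I would recall the elementary fact that a hereditary graph class $\cf C$ is characterized by forbidding its \emph{obstruction set} $\mathrm{Obs}(\cf C)$, namely the set of isomorphism classes of graphs not in $\cf C$ that are minimal with respect to the induced subgraph order: a graph $G$ lies in $\cf C$ if and only if $G$ has no induced subgraph isomorphic to a member of $\mathrm{Obs}(\cf C)$. Both $\TM dm$ (stated in the text) and $\SC n$ (immediate: restrict the rooted tree and each complement set $X_t$ of Definition~\ref{def:SC-depth} to the retained leaves) are hereditary, so it suffices to show that $\mathrm{Obs}(\TM dm)$ and $\mathrm{Obs}(\SC n)$ are finite. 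Since $\mathrm{Obs}(\cf C)$ is always an antichain under the induced subgraph order, it is enough to prove that it contains no infinite antichain.

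Next I would argue by contradiction. Let $\cf F=\mathrm{Obs}(\TM dm)$ and suppose it is infinite, so it is an infinite antichain $F_1,F_2,\dots$ under the induced subgraph order. Each $F_i$ has at least two vertices (as $K_1\in\TM dm$), and by minimality $F_i-v\in\TM dm$ for every $v\in V(F_i)$. I would pick any vertex $v_i\in V(F_i)$, set $H_i:=F_i-v_i\in\TM dm$, and let $N_i:=N_{F_i}(v_i)\subseteq V(H_i)$, viewing $(H_i,N_i)$ as a graph whose vertices are coloured by two colours according to membership in $N_i$. The class $\{(H_i,N_i):i\ge 1\}$ has bounded shrub-depth, since its underlying uncoloured graphs all lie in $\TM dm$, and it uses a fixed finite colour set; hence by Corollary~\ref{cor:treemodelWQO} it is well-quasi-ordered under the colour-preserving induced subgraph order. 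Therefore there are $i<j$ and a colour-preserving induced-subgraph embedding $\phi\colon (H_i,N_i)\to(H_j,N_j)$. Extending $\phi$ by $v_i\mapsto v_j$ gives an induced-subgraph embedding of $F_i$ into $F_j$: edges within $H_i$ are preserved because $\phi$ is an induced embedding, and $v_iw\in E(F_i)$ iff $w\in N_i$ iff $\phi(w)\in N_j$ iff $v_j\phi(w)\in E(F_j)$. This contradicts that distinct members of the antichain $\cf F$ are incomparable, so $\cf F_{d,m}:=\mathrm{Obs}(\TM dm)$ is finite, proving the first assertion.

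The $\SC n$ part is entirely analogous: $\SC n$ is hereditary, and it has bounded shrub-depth by Theorem~\ref{thm:shrubsc} (namely $\SC n\subseteq\TM n{2^n}$), so the same one-vertex-deletion argument, with Corollary~\ref{cor:treemodelWQO} applied to the two-coloured graphs $(H_i,N_i)$ obtained from minimal obstructions of $\SC n$, shows $\mathrm{Obs}(\SC n)$ is finite; set $\cf F'_{n}:=\mathrm{Obs}(\SC n)$. I expect the only genuine subtlety to be the point just used implicitly: one must \emph{not} attempt to turn an obstruction $F$ itself into a bounded-colour tree-model, because adjoining a single vertex with an arbitrary neighbourhood to a graph of $\TM dm$ need not keep the number of colours, or even the shrub-depth, under control. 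The fix is exactly to leave the deleted vertex outside and encode its neighbourhood as one extra colour on $F-v$, which stays inside the bounded-shrub-depth regime where Corollary~\ref{cor:treemodelWQO} is available. (The polynomial-time membership test then follows, since one only has to check for a fixed finite list of induced subgraphs.)
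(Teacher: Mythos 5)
Your proof is correct and follows essentially the same route as the paper: define the obstruction set of minimal forbidden induced subgraphs, observe it is an antichain, and use the one-vertex-deletion minimality together with Corollary~\ref{cor:treemodelWQO} to conclude it is finite. The only (cosmetic) difference is that you encode the deleted vertex's neighbourhood as a second colour on $H-v$, whereas the paper absorbs the vertex back into the tree-model and shows directly that every obstruction lies in $\TM d{2m+1}$; your closing caveat that this latter step ``need not keep the number of colours, or even the shrub-depth, under control'' is therefore mistaken --- adding one vertex with an arbitrary neighbourhood to a graph of $\TM dm$ always yields a graph of $\TM d{2m+1}$ (give the new vertex a fresh colour and flag its neighbours with one extra bit), though this inaccuracy does not affect the validity of your own argument.
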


\begin{proof}
We let $\cf F_{d,m}$ be the (isomorphism-free) set of graphs $H$ such that
$H\not\in\TM dm$ but $H-v\in\TM dm$ for every $v\in V(H)$.
By this definition, no member of $\cf F_{d,m}$ is a proper induced subgraph
of another member.
Hence it is enough to argue that $\cf F_{d,m}$ is wqo to conclude that 
$\cf F_{d,m}$ is finite.

The latter follows from an easy observation: if $H-v\in\TM dm$ for some
$v\in V(H)$, then $H\in\TM d{2m+1}$.
Indeed, we take a tree-model of $H-v$, add arbitrarily a new leaf
of a unique new colour for~$v$
and annotate with an extra bit the colours of all leaves which are
neighbours of $v$ in~$H$.
The result is a tree-model for $H$ with $2m+1$ colours.
Consequently, $\cf F_{d,m}\subseteq \TM d{2m+1}$ and the wqo property
follows from Corollary~\ref{cor:treemodelWQO}.

The second claim is proved analogously.
We let $\cf F'_{n}$ be the (isomorphism-free) set of graphs $H$ such that
$H\not\in\SC n$ but $H-v\in\SC n$ for every $v\in V(H)$.
By Theorem~\ref{thm:shrubsc},
$\{H-v:H\in\cf F'_{n}\}\subseteq\TM n{2^n}$, and so
$\cf F'_{n}\subseteq\TM n{2^{n+1}+1}$ by the previous paragraph.
The wqo property again follows from Corollary~\ref{cor:treemodelWQO}.
\end{proof}

The ``obstacle'' sets $\cf F_{d,m}$ and $\cf F'_{n}$ of
Theorem~\ref{thm:clforbidden} are not only of mathematical interest, but
also have algorithmic consequences.
Namely, in connection with established algorithms they allow for efficient
membership testing of these classes.
Note, however, that we do not provide an algorithmic construction of the sets
$\cf F_{d,m}$ and $\cf F'_{n}$, and so we only prove an existence of the
respective algorithms for each specific values of $d,m$ and $n$ 
(in parameterized complexity theory this is formally called {\em nonuniform FPT}).

\begin{cor}
The problems to decide, for a given graph $G$, whether $G\in\TM dm$ and
whether $G\in\SC n$, are fixed-parameter tractable with respect to the
parameters $d,m$ and $n$, respectively.
\end{cor}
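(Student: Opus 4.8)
The plan is to read this off directly from Theorem~\ref{thm:clforbidden}, which supplies, for each choice of parameters, a \emph{finite} set of forbidden induced subgraphs. I would treat the $\TM dm$ case first; the $\SC n$ case is word-for-word analogous with $\cf F'_{n}$ in place of $\cf F_{d,m}$.

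Fix $d$ and $m$. By Theorem~\ref{thm:clforbidden} there is a finite set $\cf F_{d,m}$ of graphs such that $G\in\TM dm$ if and only if $G$ has no induced subgraph isomorphic to a member of $\cf F_{d,m}$. Since $d,m$ are fixed, the quantities $N:=|\cf F_{d,m}|$ and $M:=\max\{|V(H)|:H\in\cf F_{d,m}\}$ are constants depending only on $d,m$. The membership test I would use is pure brute force: for each $H\in\cf F_{d,m}$ enumerate all injective maps $V(H)\to V(G)$ and check whether any of them is an induced embedding; the answer ``$G\in\TM dm$'' is ``yes'' precisely if none is found. This runs in time $O\bigl(N\cdot M^2\cdot|V(G)|^{M}\bigr)$, i.e.\ polynomial in $|G|$ with exponent and constant depending only on $d,m$, which is the definition of (nonuniform) fixed-parameter tractability. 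The $\SC n$ case gives, identically, a running time $O\bigl(|\cf F'_{n}|\cdot {M'}^{2}\cdot|V(G)|^{M'}\bigr)$ with $M'=\max\{|V(H)|:H\in\cf F'_{n}\}$.

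There is no real mathematical obstacle here once Theorem~\ref{thm:clforbidden} is in hand --- the one subtlety to flag is that Theorem~\ref{thm:clforbidden} is proved nonconstructively (finiteness of $\cf F_{d,m}$ and $\cf F'_{n}$ comes from the well-quasi-ordering argument via Corollary~\ref{cor:treemodelWQO}, not from an explicit list). Hence the conclusion is only \emph{nonuniform} FPT: for each fixed value of the parameters there exists a polynomial-time membership algorithm, but we do not exhibit a single algorithm taking the parameters as input.
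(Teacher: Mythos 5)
There is a genuine gap: your running time is not of FPT form. Writing $M=\max\{|V(H)|:H\in\cf F_{d,m}\}$, your brute-force search over injective maps runs in time $O\bigl(N\cdot M^2\cdot|V(G)|^{M}\bigr)$, where the \emph{exponent} of $|V(G)|$ is $M$, a quantity that depends on (and, as far as anyone knows, grows with) the parameters $d,m$. Fixed-parameter tractability --- uniform or nonuniform --- requires a running time of the form $f(d,m)\cdot|V(G)|^{c}$ with the exponent $c$ an absolute constant independent of the parameters; the nonuniformity in the corollary refers only to having a different algorithm for each parameter value (because $\cf F_{d,m}$ is obtained nonconstructively), not to letting the polynomial degree drift with the parameter. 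What you have actually shown is that each slice of the problem lies in P, i.e.\ membership in (nonuniform) XP, which is strictly weaker than the claim.

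The paper closes this gap with a win-win argument that you are missing: since $\TM dm$ has clique-width bounded by a function of $m$ alone, one first runs the FPT clique-width approximation algorithm of Hlin\v{e}n\'y and Oum on $G$; either it reports that the clique-width is too large, in which case $G\not\in\TM dm$ and we are done, or it returns an expression of bounded clique-width. In the latter case, testing whether a fixed graph $H$ occurs as an induced subgraph of $G$ is expressible by an \MSOi (indeed first-order) sentence depending only on $H$, so the Courcelle--Makowsky--Rotics model-checking algorithm decides all $N$ such tests in time $g(d,m)\cdot|V(G)|^{c}$ with $c$ a universal constant. This is the step that turns ``finitely many forbidden induced subgraphs'' into a genuine FPT membership test; your proposal needs to be amended to replace the naive $|V(G)|^{M}$ enumeration by such a parameter-independent-degree subroutine.
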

\begin{proof}
We provide a proof for the problem of $G\in\TM dm$, while that of 
$G\in\SC n$ is very similar.
As mentioned before, the class $\TM dm$ is of bounded clique-width
(namely, $2m$ is a trivial upper bound).
Therefore, one can use \cite{ho08} to compute in FPT an approximate
expression of $G$ of clique-width depending only on $m$ or to
correctly conclude that $G\not\in\TM dm$.
In the former case, one can then call the algorithm of \cite{cmr00}
to test whether any member of $\cf F_{d,m}$ is an induced subgraph of~$G$.
Based on the outcome, the correct decision
about $G\in\TM dm$ is easily made.
\end{proof}

\section{Shrub-depth and \MSO Transductions}
\label{sec:transdhier}
%%%%%%%%%%%%%%%%%%%%%%%%%%%%%%%%%%%%%%%%%%%%%%%%%%%%%%%%%%%%%%%%%%%%%%%%%%

While in the previous section we have focused on establishing basic
combinatorial properties of shrub-depth and SC-depth,
now we shift our attention towards their logical aspects.
The final outcome will be the finding that (a slight technical adjustment of)
tree-models of depth $d$ precisely capture the $d$-th finite level of the \MSOi
transduction hierarchy of simple undirected graphs,
for all $d\in\mathbb N$.
For that, we start by showing that shrub-depth indeed goes well with simple
\MSOi interpretations.

\subsection{Stability under interpretations}
\label{sub:shrub-depth-interpretability}

We again turn to classical clique-width for an inspiration:
graph classes of bounded clique-width have \MSOi interpretations into the
class of all coloured rooted trees and, in turn, graph classes having an \MSOi
interpretation into those of bounded clique-width still have bounded
clique-width (although the bound on their clique-width is generally much higher).

In one direction, shrub-depth has been defined using
(Definition~\ref{def:tree-model}) a very special form of a simple \MSOi interpretation.
In the other direction, we can go even further than with clique-width itself
(cf.~also Section~\ref{sub:hierarchy}):
the bound on shrub-depth is preserved exactly (and not only asymptotically)
under any \CMSOi interpretations.
In other words, the precise height of a tree is absolutely essential for
\CMSOi interpretability.
The full formal statement follows.

\begin{thm}
\label{thm:shrub-depth-interpretability}
A class $\cf G$ of graphs has a simple \CMSOi interpretation in a class of
finite coloured rooted trees of height at most~$d$, 
if, and only if, $\cf G$~has shrub-depth at most~$d$.
\end{thm}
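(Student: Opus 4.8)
The plan is to prove the two implications separately, with the ``if'' direction being essentially a repackaging of the definition and the ``only if'' direction carrying the real combinatorial weight.

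For the easy direction, suppose $\cf G$ has shrub-depth at most $d$, so $\cf G\subseteq\TM dm$ for some $m$. Given $G\in\cf G$ with tree-model $(T,S)$, I view $T$ as a finite coloured rooted tree of height exactly $d$ whose leaves carry one of $m$ colours. I then exhibit a single fixed simple \CMSOi interpretation $I=(\nu,\mu)$, depending only on $d,m$ and a uniform choice of signature-encoding, that recovers $G$ from $T$: take $\nu(x)$ to say ``$x$ is a leaf'' and $\mu(x,y)$ to be a finite disjunction, over all $(i,j,\ell)\in S$, of the formula ``$x$ has colour $i$, $y$ has colour $j$, and the common ancestor $x\wedge y$ is at height $d-\ell$'' (``$x\wedge y$ at height $d-\ell$'' is \MSO-definable on trees of bounded height: it is the unique common ancestor of $x$ and $y$ having exactly $\ell$ further ancestors below it toward the leaves, or more simply, $x$ and $y$ agree on their first $d-\ell$ ancestors but differ on the $(d-\ell+1)$-th). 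The catch is that $S$ varies with $G$; I handle this exactly as in the proof of Theorem~\ref{thm:shrubsc} and of Proposition~\ref{pro:td-to-tree-model}, by absorbing the signature into the leaf colours --- replace the colour of each leaf by the pair consisting of its original colour and the full signature $S$ (there are only finitely many possible $S$, so this is still a bounded colour set), after which $\mu$ can be written uniformly. Symmetry of $\mu$ is immediate from condition~\eqref{it:tree-model-edge}. This shows $\cf G$ has a simple \CMSOi (indeed \MSOi) interpretation into coloured rooted trees of height $\le d$.

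For the hard direction, assume $I=(\nu,\mu)$ is a simple \CMSOi interpretation and $\cf T$ a class of finite coloured rooted trees of height at most $d$ with $I(\cf T)\supseteq\cf G$ and $I(T)\in\cf G$ for all $T\in\cf T$; I must produce $m$ with $\cf G\subseteq\TM dm$. The strategy is a Feferman--Vaught / composition argument localized to the tree structure: fix the quantifier rank $q$ of the formulas in $I$ and augment it to account for the counting predicates (bounding the moduli $b$ appearing in $\mu$). For a rooted tree and a node $t$, the \CMSOi $q$-type of the subtree $T_t$ (relative to the coloured leaf structure), together with a bounded amount of ``interface'' information about how leaves of $T_t$ sit relative to $t$, can be computed from the multiset of $q$-types of the subtrees hanging below $t$'s children --- and since there are only finitely many $q$-types over a fixed signature and fixed colour set, and height is bounded by $d$, the relevant local information stabilizes at a bounded amount. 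Concretely, I assign to each leaf $v$ of $T$ a new colour recording: its original colour, plus, for each level $\ell=1,\dots,d$, the $q$-type of the subtree rooted at the ancestor $\mathrm{anc}_{d-\ell}(v)$ ``as seen from $v$'' (i.e.\ with $v$ marked). By the composition lemma, whether $T\models\mu(u,v)$ --- equivalently whether $uv\in E(I(T))$ --- is a function of the $q$-types of the pieces of $T$ obtained by splitting at $u\wedge v$, which are in turn determined by these enriched colours of $u$ and $v$ and by the height of $u\wedge v$. That is exactly condition~\eqref{it:tree-model-edge}, so $(T,S)$ with the enriched colours and the induced signature $S$ is a tree-model of $I(T)$ of depth $d$ and $m := $ (number of enriched colours) colours. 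Finally, restricting to leaves satisfying $\nu$ only deletes leaves and passes to an induced subgraph, which by the remarks after Definition~\ref{def:tree-model} stays in $\TM dm$.

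The main obstacle is the composition/Feferman--Vaught step for \CMSOi over trees and, in particular, doing it with the \emph{depth bound held fixed} rather than merely bounded --- I need that splitting the tree at the least common ancestor $u\wedge v$ yields finitely many pieces (the two hanging subtrees at $u\wedge v$ containing $u$ and $v$ respectively, plus the ``rest'' above) whose bounded-rank \CMSOi theories determine $\mu(u,v)$, and that each piece's theory is recoverable from bounded information attached to the leaves $u$ and $v$. The counting predicates make this slightly more delicate than plain \MSOi: the composition lemma for \CMSOi must track residues modulo the relevant $b$'s of the cardinalities of definable sets in each piece, but since only finitely many moduli occur in $\mu$, this is still a finite (bounded) amount of data, so the colour set stays finite. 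Once that lemma is in hand, the assignment of enriched leaf colours and the verification that the induced signature satisfies~\eqref{it:tree-model-edge} are routine, and the fact that the \emph{same} $d$ (not $d$ plus a constant) suffices falls out because the enrichment only changes colours, never the tree shape.
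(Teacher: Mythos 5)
Your ``if'' direction is fine and essentially matches the paper's one-line argument. In the ``only if'' direction, however, there is a concrete gap: you tacitly assume that the domain $\{x : T\models\nu(x)\}$ of the interpretation consists of \emph{leaves} of $T$. Nothing in the hypothesis forces this --- $\nu$ is an arbitrary \CMSOi formula over the coloured tree, so vertices of $G=I(T)$ may be internal nodes of $T$, possibly at different depths, and one domain element may be an ancestor of another. Your construction only assigns enriched colours to leaves, and your closing sentence (``restricting to leaves satisfying $\nu$ \dots passes to an induced subgraph'') therefore does not produce a tree-model containing all of $V(G)$. This is exactly why the paper's Lemma~\ref{lem:tm-growing-leaves} introduces the ``growing leaves'' operation: each domain node $w$ at depth less than $d$ is represented by a freshly grown leaf at depth exactly $d$, and the colour must additionally record the original depth $h'(w)$ so that the distance in $T$ from $w$ to $w\wedge w'$ can be recovered from distances in the new tree; the degenerate case where $u\wedge v\in\{u,v\}$ also needs separate treatment. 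As written, your proof does not cover these situations.

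Second, the crux of your argument --- that the truth of $\mu(u,v)$ is a function of the enriched colours of $u,v$ and the height of $u\wedge v$ --- is asserted rather than proved. The types of the two subtrees hanging at $z=u\wedge v$ that contain $u$ and $v$ are indeed recorded in your colours, but the type of the third piece (everything outside those two subtrees, marked at $z$) is \emph{not} recorded anywhere and must be derived. This can be done --- the marked type $\mathrm{tp}_q(T_{\mathrm{anc}_i(u)},u)$ determines the unmarked type of $T_{\mathrm{anc}_i(u)}$, and hence, level by level, the threshold- and modulus-counted multiset of sibling subtree types minus the one child you descend through --- but it costs an explicit rank/threshold bookkeeping argument, and it is where all the work of the theorem lives, so it cannot be waved through. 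Modulo these two points, your route (Feferman--Vaught composition with ancestor-type sequences as colours) is a genuinely different and viable alternative to the paper's, which instead reduces $T$ to a bounded-size core $T_0$ via the pumping Lemma~\ref{lem:dtreekern} and then reads the edge relation off the automorphism orbits of $T_0$ using Lemma~\ref{lem:automorphpair}; that detour through $T_0$ lets the paper avoid stating any composition theorem for \CMSOi at all.
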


The `if' direction of Theorem~\ref{thm:shrub-depth-interpretability} follows
immediately from Definition~\ref{def:tree-model}:
for any $m$, the class $\TM d{m}$ has a simple
\MSOi interpretation (or even FO interpretation) in the class
of $m$-coloured tree-models of depth~$d$.
Hence we now give a proof of the `only if' direction of
Theorem~\ref{thm:shrub-depth-interpretability}
consisting of the following sequence of three technical claims.

\begin{lem}[Gajarsk\'y and Hlin\v en\'y \cite{gh15}] 
\label{lem:dtreekern}
There exists a function%
\footnote{Here ${\rm exp}^{(d)}$ stands for the iterated (``tower of height $d$'')
exponential, i.e., ${\rm exp}^{(1)}(x)=2^x$ and ${\rm exp}^{(i+1)}(x)=2^{{\rm exp}^{(i)}(x)}$.}
$R(q,m,d)\leq {\rm exp}^{(d)}\big((q+m)^{\cf O(1)}\big)$
over the positive integers such that the following holds.

Let $T$ be a rooted tree with each vertex assigned one of at most $m$
colours, and let $\phi$ be any \CMSOi sentence with $q$ quantifiers,
such that the least common multiple of the $b$ values of all 
$\!\mod\!_{a,b}$ predicates in $\phi$ equals~$M$.
Take any node $u\in V(T)$ such that the subtree $T_u\subseteq T$ rooted
at~$u$ is of height~$d$, and denote by $U_1,U_2,\dots,U_k$ the connected
components of $T_u-u$ (their roots are thus all the $k$ sons of~$u$).

Assume that there exists a (sufficiently large) 
subset of indices $I\subseteq\{2,\dots,k\}$, where $|I|\geq R(q,m+M,d)+M-1$, 
such that there are colour-preserving isomorphisms from $U_1$ to each $U_i$, $i\in I$.
Choose any $J\subseteq\{1\}\cup I$, $|J|=M$, and take
the subtree $T'=T-\bigcup_{j\in J}V(U_j)$.
Then $T'$ behaves the same with respect to $\phi$ as~$T$, precisely, 
$T\models\phi$ $\iff$ $T'\models\phi$.
\end{lem}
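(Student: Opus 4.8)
The statement is a \emph{kernelization via type composition}, in the spirit of Feferman--Vaught / Shelah and the composition method for tree-decomposable structures. Fix $q$ and $M$. For a rooted tree $A$ whose vertices carry at most $\mu$ colours, let $\mathrm{tp}(A)$ denote its \CMSOi type with respect to sentences using at most $q$ quantifiers and mod-predicates $\!\mod\!_{a,b}$ with $b\mid M$; there are only finitely many such types. Write $N(q,\mu,d)$ for the number of types realised by coloured rooted trees of height at most~$d$ on at most $\mu$ colours, and let $\theta_q$ be the usual quantifier threshold (a function of $q$ alone) past which ``having more than $\theta_q$ substructures of a fixed type'' cannot be detected by a $q$-quantifier formula. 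I would set $R(q,m,d):=N(q,m,d)$; since in our situation $T_u$ is non-trivial we have $d\geq 1$ and hence $R(q,m,d)\geq\theta_q$. The proof then has two parts: (i) the tower bound $N(q,\mu,d)\leq{\rm exp}^{(d)}\big((q+\mu)^{{\cf O}(1)}\big)$, and (ii) the claim that passing from $T$ to $T'$ preserves $\mathrm{tp}(T)$, hence in particular satisfaction of $\phi$.

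For (i) I would induct on~$d$. For $d=0$ (a single coloured vertex) the number of types is bounded by $\mu$ times a function of $q$, which is $\leq{\rm exp}^{(1)}\big((q+\mu)^{{\cf O}(1)}\big)$. The inductive step rests on a \emph{composition lemma for the join operation} that forms a rooted tree from a root $v$ (of some colour) together with child subtrees $B_1,\dots,B_s$: the type of this tree is a function only of the colour of $v$ and, for every type $t$ realised at height $<d$, of the number of indices $i$ with $\mathrm{tp}(B_i)=t$, \emph{counted up to the threshold $\theta_q$ and modulo $M$}. The threshold accounts for the ordinary quantifiers, while the residue mod $M$ accounts for the $\!\mod\!_{a,b}$ predicates: for a quantified set $X$ one has $|X|=[v\in X]+\sum_i|X\cap V(B_i)|$, so membership of $|X|$ in a residue class composes additively over the children. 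Consequently $N(q,\mu,d)\leq\mu\cdot(\theta_q M+1)^{N(q,\mu,d-1)}$, which gives the claimed tower; replacing $\mu$ by $m+M$ only changes the ${\cf O}(1)$ in the exponent because $M$ is a constant determined by~$\phi$.

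For (ii), observe that $T$ is built from $T_u$ by a sequence of join and (harmless) relabelling operations along the path from $u$ to the root of $T$, and $T'$ is built the same way from $T'_u:=T_u-\bigcup_{j\in J}V(U_j)$; so by the composition lemma it suffices to show $\mathrm{tp}(T_u)=\mathrm{tp}(T'_u)$. Now $T_u$ is the join of the root $u$ with child subtrees $U_1,\dots,U_k$, and by hypothesis the $M$ subtrees removed (those indexed by $J\subseteq\{1\}\cup I$) are all colour-preservingly isomorphic to $U_1$, hence all carry the common type $t_1:=\mathrm{tp}(U_1)$. Thus deleting $J$ decreases the number of children of type $t_1$ by exactly $M$ and leaves every other type-count unchanged. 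Before the deletion this count is at least $|\{1\}\cup I|\geq R(q,m+M,d)+M\geq\theta_q+M$, so afterwards it is still at least $\theta_q$, and it is unchanged modulo~$M$. Hence the threshold-and-mod profile of the children of $u$ is the same in $T_u$ and $T'_u$, so by the join composition lemma $\mathrm{tp}(T_u)=\mathrm{tp}(T'_u)$, and therefore $T\models\phi\iff T'\models\phi$.

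\textbf{Main obstacle.} The load-bearing step is the composition lemma in the strong form used above --- that the type of a join depends on the child-type multiplicities only up to threshold and modulo~$M$ --- established for \emph{monadic} (vertex-set) quantification \emph{together with} the counting predicates. The subtlety is that a quantified set $X$ in $T_u$ may be spread over arbitrarily many child subtrees, so the Ehrenfeucht--Fra\"iss\'e strategy for Duplicator must, whenever $T$ and $T'$ agree on the threshold-and-mod profile, answer Spoiler's moves by redistributing the chosen vertices and sets among the many isomorphic copies of $U_1$ while keeping every $\!\mod\!_{a,b}$ constraint satisfied; formalising this bookkeeping (which is exactly what is carried out in~\cite{gh15}) is where the real work lies.
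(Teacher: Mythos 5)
The paper does not actually prove this lemma---it is imported verbatim from \cite{gh15}---so there is no in-paper argument to compare against; your proposal correctly reconstructs the standard proof used there: a Feferman--Vaught/Shelah-style composition of \CMSOi types over the tree join, with child-type multiplicities tracked up to a quantifier threshold and modulo $M$, together with the inductive tower bound on the number of types. The one load-bearing step you defer (the threshold-and-mod composition lemma for monadic quantification with counting predicates) is precisely the content of the cited result, and your reduction to it---including the bookkeeping that the $t_1$-count stays above threshold and is unchanged mod $M$ after deleting the $M$ pairwise isomorphic subtrees, and that the change propagates to the root by context invariance---is sound.
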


Lemma~\ref{lem:dtreekern} and, in particular, the operation of obtaining 
$T'$ from $T$ as in the lemma,
will be useful in the following generalized setting of a reduction.
Assume that $R':\mathbb N\to\mathbb N$ is an arbitrary non-decreasing
function and $M$ is a positive integer
(for use with Lemma~\ref{lem:dtreekern}, we can have~$R'(i):=R(q,m+M,i)$),
and $T$ is a coloured rooted tree of height~$d$.
Inductively for $i=1,2,\dots,d$, we do the following:
For every $w \in V(T)$ such that $T_w$ is of height $i$, 
consider the components of $T_w - w$ partitioned into equivalence
classes according to the existence of a colour-preserving isomorphism.
In each of these classes whose cardinality is at least $R'(i)+M$,
we repeatedly remove $M$-tuples of components until the cardinality reaches
$R'(i)+c$ where $0\leq c<M$.
Let $T''$ be the resulting ``reduced'' subtree of~$T$. 
In such situation we say that $T$ is {\em$R'$-reduced (modulo~$M$) to~$T''$}.
Observe that $T''$ is of bounded size depending only on 
$R'$, $M$ and $d$, and independent of the size of~$T$.

We continue with the technical claims leading to
Theorem~\ref{thm:shrub-depth-interpretability}. 
Imagine a situation in which we have a graph (tree) automorphism taking a vertex $x_1$
to a vertex $x_2$, and similarly an automorphism taking $y_1$ to~$y_2$.
Then it is generally not true that there would exist an automorphism taking
the pair $(x_1,y_1)$ to the pair $(x_2,y_2)$.
The next lemma establishes a simple additional condition under which 
the previous becomes always true.
We need the notion of an orbit.
The binary relation on the vertex set of a graph
defined as `$x_1\sim x_2$ iff there is an automorphism 
taking $x_1$ to~$x_2$' is an equivalence, and its equivalence classes are
called the {\em vertex automorphism orbits}.

Note that all automorphisms in this section are colour-preserving.
\begin{lem}
\label{lem:automorphpair}
Let $T$ be a coloured rooted tree.
Assume that $X,Y$ are vertex automorphism orbits of $T$,
and $x_1,x_2 \in X$ and $y_1,y_2 \in Y$ are chosen arbitrarily.
Let $z_i=x_i\wedge y_i$, $i=1,2$, denote the least common ancestor 
of $x_i,y_i$ in $T$.
If $dist_T(x_1,z_1)=dist_T(x_2,z_2)$ and
$dist_T(y_1,z_1)=dist_T(y_2,z_2)$, then there is an
automorphism of $T$ taking the pair $(x_1,y_1)$ onto $(x_2,y_2)$.
\end{lem}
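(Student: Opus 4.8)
The plan is to build the desired automorphism piece by piece along the two root-to-leaf paths, using the fact that in a rooted tree the least common ancestor partitions the situation cleanly. First I would set $a=dist_T(x_1,z_1)=dist_T(x_2,z_2)$ and $b=dist_T(y_1,z_1)=dist_T(y_2,z_2)$, and observe that $z_1$ and $z_2$ lie at the same height in $T$: indeed, the height of $x_i$ equals (height of $z_i$)$+a$, and since $x_1,x_2$ lie in one orbit $X$ there is an automorphism sending $x_1$ to $x_2$, which forces them to have equal height; hence $z_1$ and $z_2$ have equal height too, and similarly the equal-height condition is consistent from the $y$ side. So $z_1$ and $z_2$ sit at the same level, and the sibling subtrees hanging at every level are permuted by automorphisms in the obvious level-preserving way.

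The key step is to decompose $T$ relative to the ``V-shaped'' skeleton determined by the path from $z_i$ up to the root together with the two downward legs $z_i \rightsquigarrow x_i$ and $z_i \rightsquigarrow y_i$. I would argue as follows. Pick an automorphism $\alpha$ with $\alpha(x_1)=x_2$; then $\alpha$ maps the path from the root to $x_1$ onto the path from the root to $x_2$, and in particular $\alpha(z_1)$ is the ancestor of $x_2$ at distance $a$, i.e.\ $\alpha(z_1)=z_2'$ where $z_2'$ is the unique height-$(\mathrm{ht}(z_2))$ ancestor of $x_2$; but that is exactly $z_2$ only if $z_2$ lies on the root--$x_2$ path, which it does by definition of least common ancestor. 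Wait --- more carefully, $z_2 = x_2 \wedge y_2$ need not equal $\alpha(z_1)$, since $\alpha$ knows nothing about $y$. The clean way around this is to \emph{not} insist on reusing $\alpha$, but to build a fresh automorphism: define a candidate bijection $\beta$ that (i) on the subtree $T_{z_1}$ rooted at $z_1$ acts as a composition of an isomorphism $T_{z_1}\to T_{z_2}$ carrying $x_1\mapsto x_2$ and $y_1\mapsto y_2$, and (ii) off $T_{z_1}$ acts as the identity except along the path from $z_1$'s parent to the root, where it must be patched to be consistent. The existence of the isomorphism in (i) is where the hypotheses are used: inside $T_{z_1}$, the vertices $x_1$ and $y_1$ lie in \emph{different} child-subtrees of $z_1$ (as $z_1$ is their least common ancestor), so I can choose independently an isomorphism between the child-subtree containing $x_1$ and the child-subtree containing $x_2$ that carries $x_1\mapsto x_2$ (this exists because $x_1,x_2$ are in the same orbit of $T$, hence --- restricting attention and using that orbit membership for leaves/vertices is detected locally in a rooted tree --- the subtrees are isomorphic compatibly), and similarly for $y$; the remaining child-subtrees of $z_1$ get matched to those of $z_2$ arbitrarily by the corresponding isomorphism of subtree-multisets.

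The technical heart, and the step I expect to be the main obstacle, is justifying that ``$x_1,x_2$ in the same orbit of the whole tree $T$'' can be \emph{localized}: I need that the rooted subtree hanging below $x_1$ is isomorphic to that below $x_2$ \emph{and} that the ``context'' (the sequence of sibling-subtree-multisets encountered along the path up to the root) matches, so that the local isomorphism extends to a genuine automorphism of $T$. This is a standard rooted-tree fact --- two vertices of a coloured rooted tree are in the same automorphism orbit iff their ``upward-context isomorphism type'' coincides --- and I would state and use it as an auxiliary observation, proved by the obvious induction on height, simultaneously matching the down-subtree and, going up level by level, matching the multiset of the other children at each ancestor. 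Once that localization is available, one assembles $\beta$ from: the isomorphism $T_{z_1}\to T_{z_2}$ (built from the $x$-part and $y$-part as above), together with, for each ancestor $w_1$ of $z_1$ with corresponding ancestor $w_2$ of $z_2$ (at the same height), a matching of the sibling subtrees of $z_1$-side to those of $z_2$-side supplied by the upward-context isomorphism; gluing these along the common root--$z$ spine yields a colour-preserving automorphism of $T$ with $\beta(x_1)=x_2$ and $\beta(y_1)=y_2$, which is exactly what is required. Finally I would note that all the isomorphisms invoked are colour-preserving by construction, consistent with the convention stated just before the lemma.
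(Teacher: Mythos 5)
Your construction is correct in outline, but it takes a genuinely different route from the paper. The paper proves the lemma by induction on $d=dist_T(x_1,z_1)+dist_T(y_1,z_1)$: it passes to the parents $x_1',x_2'$ (noting that the set of parents of the members of an orbit is again an orbit), obtains by induction an automorphism $\tau$ carrying $(x_1',y_1)$ to $(x_2',y_2)$, and then corrects $\tau(x_1)$ to $x_2$ by swapping two isomorphic sibling subtrees below $x_2'$. You instead build the automorphism globally in one shot, splicing together (i) an isomorphism $T_{z_1}\to T_{z_2}$ assembled from the child subtrees of $z_1$ containing $x_1$ and $y_1$ plus an arbitrary matching of the remaining children, and (ii) level-by-level matchings of the off-spine sibling subtrees along the paths from $z_1$ and $z_2$ up to the root. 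Both work; the paper's induction is shorter because each step only has to repair a single vertex, whereas your cut-and-paste argument makes the structure of the resulting automorphism completely explicit. Two remarks on your plan. First, the ``localization'' you flag as the technical heart is cheaper than you fear: any automorphism $\alpha$ with $\alpha(x_1)=x_2$ necessarily satisfies $\alpha(z_1)=z_2$, since $z_2$ is the \emph{unique} ancestor of $x_2$ at distance $a$ --- your subsequent worry that ``$z_2$ need not equal $\alpha(z_1)$'' is unfounded --- so $\alpha$ already hands you the isomorphism of the $x$-side child subtrees, the equality of the child-subtree multisets at $z_1$ and $z_2$, and all the off-spine matchings above $z$; only the $y$-side piece must be imported from a second witnessing automorphism, and the multiset bookkeeping (removing two distinct elements of matching isomorphism types from two equal multisets leaves equal multisets) is what licenses the ``arbitrary'' matching of the leftover children. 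Second, you should dispose separately of the degenerate cases $dist_T(x_1,z_1)=0$ or $dist_T(y_1,z_1)=0$, where $x_1$ and $y_1$ do not lie in distinct child subtrees of $z_1$; there a single witnessing automorphism for the deeper of the two vertices already does the job, since it is forced to send the shallower one correctly.
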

\begin{proof}
We carry on the proof by induction on $d=dist_T(x_1,z_1)+dist_T(y_1,z_1)$.

The base case of $d=0$ is trivial (since $x_1=y_1$ and $x_2=y_2$).
Consider now an induction step from $d$ to $d+1$ where $dist_T(x_1,z_1)\geq1$.
Let $x_1',x_2'$ be the parent nodes of $x_1,x_2$, respectively,
and let $X'$ denote the set of parent nodes of all the members of $X$.
Then $X'$ is a vertex orbit of $T$, too.
By inductive assumption, there is an automorphism $\tau$ of $T$ taking 
the pair $(x_1',y_1)$ onto $(x_2',y_2)$.
If $\tau(x_1)=x_3$, then $x_3$ is a child of $x_2'$,
and the subtree of $T$ rooted at $x_3$ is isomorphic to that of $x_2$
by transitivity.
Therefore, we may without loss of generality assume $x_3=x_2$,
and the induction step is complete.
\end{proof}

\begin{figure}[tb]
\begin{center}
\begin{tikzpicture}[yscale=0.75]
\tikzstyle{every node}=[draw, shape=circle, inner sep=1.2pt, fill=black]
\draw[fill=lightgray] (3,4) -- (1,1) -- (3,1) -- (3,2) -- (4,2) -- (3,4) ;
\draw[fill=lightgray] (1,1) -- (0.5,0) node[fill=white] {} -- (1.2,0) -- (1,1) ;
\draw[fill=lightgray] (3,1) -- (2.5,0) -- (3.2,0) -- (3,1) ;
\draw[fill=lightgray] (4,2) -- (3.5,0) node[fill=white] {} -- (4.5,0) -- (4,2) ;
\draw[thick] (1,1) node[fill=white] {} -- (1.5,0) node {} ;
\draw[thick] (3,1) node[fill=white] {} -- (2.1,0) node {} ;
\draw[thick] (4,2) node[fill=white] {} -- (4.5,1) node[fill=lightgray] {} -- (5,0) node {} ;
\end{tikzpicture}
\caption{An illustration of the operation of growing new leaves (black dots)
	from selected original nodes (white dots) of the depicted rooted tree of height~$d$.
	If a selected original node already is at a distance $d$ from the root
	(the bottom layer), then no new leaf is actually grown.}
\label{fig:growing-leaves}
\end{center}
\end{figure}
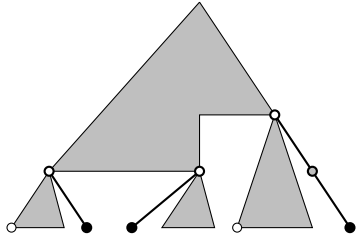

\begin{lem}
\label{lem:tm-growing-leaves}
Assume that a class $\cf G$ of graphs has a simple \CMSOi interpretation
$I$ in a class $\cf T_d$ of finite coloured rooted trees of height at most~$d$.
Then there exists $m$ such that the following holds:
every graph $G\in\cf G$, where $G=I(T)$ for some $T\in\cf T_d$,
has an $m$-coloured tree-model $U$ of depth~$d$.

Furthermore, the rooted tree $U$ is obtained from $T$ by ``growing leaves'' from
those nodes of~$T$ that belong to the domain of~$I$ and have distance less
than~$d$ from the root.
Specially, if the domain of $I$ is a subset of the leaves of $T$
and all leaves of $T$ are at a distance $d$ from the root, then~$U\subseteq T$.
\end{lem}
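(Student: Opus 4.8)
The plan is to construct $U$ by the announced leaf-growing operation and then to equip its leaves with boundedly many colours read off from $T$ by a kernelization argument. Throughout, write $D=\{v\in V(T):T\models\nu(v)\}$ for the domain of $I$ in $T$, so that $V(G)=D$, and let $r$ be the root of $T$.

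First I would pin down the tree $U$. From each $v\in D$ with $dist_T(r,v)<d$ I grow a fresh pendant path ending in a new leaf $v'$ at distance exactly $d$ from the root, and for $v\in D$ already at distance $d$ (which is then necessarily a leaf of $T$) I set $v'=v$; finally I let $U$ be the union of the root-to-$v'$ paths over $v\in D$, which discards the non-domain leaves and the branches that have become superfluous. Then every root-to-leaf path of $U$ has length exactly $d$, the leaves of $U$ are exactly $\{v':v\in D\}$, and $v\mapsto v'$ is a bijection $V(G)\to\mathrm{leaves}(U)$; since the added path hangs below $v$, one checks directly that $dist_U(r,v_1'\wedge v_2')=dist_T(r,v_1\wedge v_2)$ for all $v_1,v_2\in D$, so the depth in $U$ of $v_1'\wedge v_2'$ recovers that of $v_1\wedge v_2$ in $T$. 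In the special case of the statement nothing is grown and nothing is discarded, so $U$ is a subtree of $T$ (equal to $T$ when $D$ is all of its leaves).

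The substance is the colouring. Whether $v_1v_2\in E(G)$ is exactly whether $T\models\mu(v_1,v_2)$, and I want this to become a function of the colours of $v_1',v_2'$ and of $dist_U(r,v_1'\wedge v_2')$. Here I would invoke Lemma~\ref{lem:dtreekern}: taking $q$ to be the number of quantifiers of $\mu$ (increased by a constant to account for the two free variables, which I encode by fresh singleton unary predicates), $M$ the least common multiple of the moduli occurring in $\mu$, and $R'(i):=R(q,m_1,i)$ for a suitable $m_1$ bounding the number of colours involved, the tree $T$ can be $R'$-reduced modulo $M$ to a tree of size bounded solely in terms of $I$ and $d$, and any prescribed node survives the reduction (a marked node lies in a singleton isomorphism class, which is never thinned). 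For $v\in D$ let $T^v$ be the reduced tree with $v$ kept marked, and colour $v'$ by the isomorphism type of the marked tree $(T^v,v)$; since $T^v$ has bounded size only boundedly many colours occur, which yields a single $m$ that works for every $G\in\cf G$.

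It remains to verify that the interpreted adjacency is genuinely determined by this data, i.e.\ that there is a well-defined signature $S$ with $v_1v_2\in E(G)\iff\bigl(\mathrm{colour}(v_1'),\mathrm{colour}(v_2'),dist_U(r,v_1'\wedge v_2')\bigr)\in S$; this is the step I expect to be the real obstacle. The idea is that for a pair $v_1,v_2$ the doubly-marked reduction $(T^{v_1,v_2};v_1,v_2)$ still satisfies $T\models\mu(v_1,v_2)\iff T^{v_1,v_2}\models\mu(v_1,v_2)$ by Lemma~\ref{lem:dtreekern}, and that up to isomorphism $(T^{v_1,v_2};v_1,v_2)$ is determined by the two single-marked types $(T^{v_1},v_1)$ and $(T^{v_2},v_2)$ together with $dist_T(r,v_1\wedge v_2)$ — the two markings only affect the branches through $v_1$ and through $v_2$, which share just the spine above $v_1\wedge v_2$, so one reassembles the doubly-marked tree by gluing the two marked branches onto a common spine of the appropriate height. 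The delicate point is to make this gluing canonical despite the tree's automorphisms and the mod-$M$ bookkeeping of how many isomorphic copies are retained; this is precisely what Lemma~\ref{lem:automorphpair} supplies, since once the two marked vertices lie in matching automorphism orbits and the relevant ancestor distances agree — which equal colours and equal least-common-ancestor depths force — there is an automorphism carrying one marked pair onto the other, so the value of $\mu$, and hence adjacency in $G$, is the same. Putting these facts together shows that $U$ with the described leaf-colouring is an $m$-coloured tree-model of $G=I(T)$ of depth $d$, obtained from $T$ by growing leaves, with $m$ depending only on $I$ and $d$.
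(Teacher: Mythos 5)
Your proposal is correct and follows essentially the same route as the paper's proof: grow leaves on the Steiner tree of the domain to form $U$, kernelize $T$ via Lemma~\ref{lem:dtreekern} to a bounded-size reduced tree, colour each vertex by its marked type in that kernel (the paper equivalently uses the automorphism orbit in a fixed universal kernel $T_0$ together with the depth offset $h'$), and invoke Lemma~\ref{lem:automorphpair} to see that the two colours plus the depth of the least common ancestor determine the pair up to automorphism and hence the truth of $\mu$. The only differences are bookkeeping: the paper reduces the \emph{unmarked} tree with an explicit slack of $M+2$ in $R'$ so that all pairwise kernels $T_0^{u,v}$ are isomorphic to one universal $T_0$, which makes your ``gluing'' step canonical without further argument.
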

Here the operation of {\em growing a leaf\/} from a node $u$
of a rooted tree $T$ of height $d$ means to add a new branch (a path) from
$u$ to a (new) leaf $u'$ such that the distance from the root to $u'$ is exactly $d$.
Only one new leaf is grown from a node~$u$, and only when
the distance of $u$ from the root is less than~$d$ (otherwise~$u'=u$).
See Figure~\ref{fig:growing-leaves}.

\begin{proof}
Let the simple \CMSOi interpretation $I(\cf T_d)=\cf G$
be given by the formulas $I=(\alpha,\beta)$.
Recalling the definition of a simple interpretation,
every $G\in\cf G$ is interpreted in some coloured tree $T_G\in\cf T_d$ as follows:
$V(G)=\{x\in V(T_G): T_G\models\alpha(x)\}$ and
$E(G)=\{xy: x,y\in V(G)\wedge T_G\models\beta(x,y)\}$. 
To assist readers' understanding, we remark that we can evaluate the domain
$V(G)\subseteq V(T_G)$ at the beginning, and so we will not 
deal with $\alpha$ any more.

For the purpose of directly applying Lemma~\ref{lem:dtreekern},
we transform $\beta$ into a closed sentence, 
$\beta'\equiv\exists x,y\big(L(x)\wedge L(y)\wedge \beta(x,y)\big)$,
where $L$ is a new vertex label (which will be later added to existing 
colours of specified nodes).
Since $\beta'$ is a finite formula, it can ``see'' at most some 
$m'$ of the colours of $\cf T_d$ including new~$L$.

Let $G \in \cf G$ be a fixed graph and let $T=T_G$, as above. 
Let $q$ be the number of quantifiers in $\beta'$, and let $M$ be the least
common multiple of the $b$ values of all $\!\mod\!_{a,b}$ predicates occurring in~$\beta'$.
For the function $R$ from Lemma~\ref{lem:dtreekern}, let $R'(i) := R(q,m'+M,i)+M + 2$.  
Choose an arbitrary pair~$u,v\in V(G)$.
We may straightforwardly get a coloured subtree $T_0^{u,v}\subseteq T$
(not unique) such that the tree $T$ is 
$R'$-reduced (modulo~$M$) to $T_0^{u,v}$ and $u,v\in V(T_0^{u,v})$.
If we add the label $L$ precisely to $u$ and~$v$,
we shortly denote the resulting structures by $T[L(u),L(v)]$ and
$T_0^{u,v}[L(u),L(v)]$.

\medskip
The first crucial step of the proof is to observe that
$T[L(u),L(v)]\models\beta'$ $\iff$ $T_0^{u,v}[L(u),L(v)]\models\beta'$.
We can easily show this from Lemma~\ref{lem:dtreekern} along the inductive 
definition of $T$ being $R'$-reduced (modulo~$M$) to $T_0^{u,v}$.
Following the notation of Lemma~\ref{lem:dtreekern}, we assume
an intermediate step $T'$ of the reduction process, and a node $w$ of $T'$
such that there is an isomorphism class $\cf I$ of the components of
$T'_w[L(u),L(v)]-w$ of size at least $R'(i)-2\geq R(q,m'+M,i)+M$.
(Here `$-2$' accounts for the possibility that, in the corresponding
isomorphism class of $T'_w-w$, up to two of the components got the label $L$,
and so they are not a part of $\cf I$ in $T'_w[L(u),L(v)]-w$.)
Let $T''\subseteq T'$ result by removing $M$ components of~$\cf I$.
Then, by Lemma~\ref{lem:dtreekern}, we have
$T'[L(u),L(v)]\models\beta'$ $\iff$ $T''[L(u),L(v)]\models\beta'$
and we finish by induction.

Second, one may observe that for any other pair~$u',v'\in V(G)$,
the tree $T_0^{u',v'}\subseteq T$ is always isomorphic to $T_0^{u,v}$.
Hence we can choose one universal representative, say $T_0:=T_0^{u,v}$.
Note that $T_0$ is of bounded size depending only $q, m'$ and $d$, and
independent of the size of~$T$.
Consequently, for an arbitrary pair $u',v'\in V(G)$, we can
determine whether or not $u'v'$ forms an edge in $G$ by testing if~$T_0$
with the ``right assignment'' of $L$ satisfies $\beta'$.
Here the words ``right assignment of $L$ in $T_0$'' implicitly refer
to the images of $u',v'$ under an isomorphism of $T_0^{u',v'}$ to~$T_0$.
Now it only remains to say how to determine these images in $T_0$
within a sought tree-model $U$ of~$G$ (which is, though, nontrivial).

\medskip
From now on, consider the fixed representative~$T_0$.
For $w\in V(T)$, denote by $h(w)$ the distance from $w$ to the root of~$T$.
For any $v \in V(G)$, choose an arbitrary $u\in V(G)$, denote by $\cf O$ the
vertex automorphism orbit of $v$ in $T_0^{u,v}$ and by
$Or(v)$ the image of $\cf O$ under an isomorphism of $T_0^{u,v}$ to~$T_0$.
Observe that $Or(v)$, as an automorphism orbit, does not depend on our choice
of $u$ and of an isomorphism of $T_0^{u,v}$ to~$T_0$.
Let $S$ be the rooted Steiner tree of $V(G)$ in $T=T_G$,
which is the minimal rooted subtree $S\subseteq T$ containing~$V(G)$.
From $S$ we construct the (uncoloured) tree~$U$ by growing leaves from 
all nodes $w$ of $S$ such that $w\in V(G)$ and~$h(w)<d$, 
in order to literally satisfy Definition~\ref{def:tree-model}.
Each such newly grown leaf $w'$ will now interpret the corresponding vertex 
of~$G$ instead of the original~$w$, that is, we will identify $w\in V(G)$ with
$w'\in V(U)$, and set $h'(w'):=d-h(w)$ and $Or(w'):=Or(w)$.
For $v\in V(G)$, such that no new leaf has been grown from $v$,
we have~$h'(v)=0$.

In the third (and last) step of the proof, we show that $U$ together with 
information carried by $T_0$ and $Or$, $h'$ are sufficient 
to decide whether a pair $u,v\in V(G)$ forms an edge of~$G$.
More precisely, we have $u,v\in V(T)\cap V(G)$ and $u',v'$ are the corresponding
grown leaves in $U$, or $u'=u$ ($v'=v$) if no leaf has been grown from
$u$ ($v$) in~$U$.
Let $z=u\wedge v$ in~$T$ (cf.\ Lemma~\ref{lem:automorphpair}).
We simply determine $dist_T(u,z)=dist_U(u',z)-h'(u')$
and analogously for~$v$.
Then, knowing $dist_T(u,z)$, $dist_T(v,z)$ and $Or(u)=Or(u')$, $Or(v)=Or(v')$,
by Lemma~\ref{lem:automorphpair}, determines uniquely up to an automorphism
a pair $u_0,v_0\in V(T_0)$ such that $(u,v)$ maps to $(u_0,v_0)$ under an
isomorphism of $T_0^{u,v}$ to~$T_0$.
Summarizing all the arguments, we get that
$T_0[L(u_0),L(v_0)]\models\beta'$ $\iff$
 $T_0^{u,v}[L(u),L(v)]\models\beta'$ $\iff$
 $T[L(u),L(v)]\models\beta'$ $\iff$ $T\models\beta(u,v)$,
as desired.

Therefore, we can build a tree-model $U$ of $G$
(Definition~\ref{def:tree-model}) by assigning
the colour $\langle Or(v),h'(v)\rangle$ to each leaf $v$ of~$U$
and giving it the signature determined from $T_0$ by the latter argument.
\end{proof}

\subsection{Stability under transductions}

The first important consequence of
Theorem~\ref{thm:shrub-depth-interpretability}
is that the shrub-depth of a graph class is preserved under non-copying
\CMSOi transductions.

\begin{thm}
\label{thm:nc-transduction}
Let $d\geq1$ be an integer, $\cf G$ be a graph class of shrub-depth~$d$,
and $\tau$ be a non-copying \CMSOi transduction.
Then the shrub-depth of the transduction image $\tau(\cf G)$ 
is at most~$d$.
\end{thm}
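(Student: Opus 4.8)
The plan is to reduce Theorem~\ref{thm:nc-transduction} to Theorem~\ref{thm:shrub-depth-interpretability} by decomposing the non-copying \CMSOi transduction $\tau$ into its constituent parts, following Definition~\ref{def:transduction}. Since $\tau$ is non-copying, we have $\tau=\tau_1\circ\varepsilon$, where $\varepsilon$ is an expansion by $p$ new unary predicates and $\tau_1=(\chi,\nu,\mu)$ is a basic transduction. First I would fix $G\in\cf G$ and any $G'\in\tau(G)$; by definition $G'=\tau_1(G^+)$ for some expansion $G^+$ of $G$ by $p$ unary predicates, with $G^+\models\chi$. The key point is that the coloured graph $G^+$ still lies in a class of bounded shrub-depth: indeed, $\cf G$ has shrub-depth~$d$, so $\cf G\subseteq\TM dm$ for some $m$, and adding $p$ arbitrary unary predicates to the vertices of a graph in $\TM dm$ yields a graph in $\TM d{m\cdot 2^p}$ — we simply refine the leaf colours of the tree-model by the $2^p$ possible patterns of membership in the new predicates, leaving the tree and its signature untouched. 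Hence the class $\cf G^+$ of all such $p$-expansions of graphs in $\cf G$ has shrub-depth at most~$d$.

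Next I would invoke the `if' direction of Theorem~\ref{thm:shrub-depth-interpretability} to obtain, for the class $\cf G^+$ of shrub-depth $\le d$, a simple \MSOi (hence \CMSOi) interpretation $I_0$ of $\cf G^+$ in a class $\cf T_d$ of finite coloured rooted trees of height at most~$d$. Concretely, each $G^+\in\cf G^+$ is realised as $I_0(T)$ for its tree-model $T$, using the standard formulas reading off adjacency from leaf colours and the depth of the least common ancestor. Now composing interpretations: the basic transduction's interpretation part $(\nu,\mu)$, applied after $I_0$, gives a simple \CMSOi interpretation $I=(\nu,\mu)\circ I_0$ of a class $\cf H$ of graphs in $\cf T_d$, where $\cf H$ contains every graph $\tau_1(G^+)=G'$ that arises; the sentence $\chi$ only restricts which trees are "admissible" and does not affect the interpretation structure (one can even fold it into the domain formula, or simply note that $\tau(\cf G)\subseteq\cf H$ suffices for our purposes since shrub-depth is monotone under taking subclasses). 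Composition of simple \CMSOi interpretations is again a simple \CMSOi interpretation: one substitutes $I_0$'s formulas for the atomic $\prebox{edge}$ predicate inside $\nu$ and $\mu$, and symmetry of the resulting edge formula can be forced as in the footnote to Definition~\ref{def:interpretation}.

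Finally I would apply the `only if' direction of Theorem~\ref{thm:shrub-depth-interpretability} to the class $\cf H\supseteq\tau(\cf G)$: since $\cf H$ has a simple \CMSOi interpretation in a class of finite coloured rooted trees of height at most~$d$, the class $\cf H$ has shrub-depth at most~$d$, and therefore so does its subclass $\tau(\cf G)$. This completes the argument. The main obstacle I anticipate is the bookkeeping around composition of interpretations together with the admissibility sentence~$\chi$: one must check that $\chi$, being an \MSOi sentence about $G^+$, pulls back along $I_0$ to an \MSOi sentence about trees in $\cf T_d$, so that restricting $\cf T_d$ to $\chi$-admissible trees keeps us inside a class of coloured rooted trees of height $\le d$ — which is immediate since any subclass of such a class is again such a class. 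A secondary, purely technical point is verifying that refining leaf colours by $2^p$ patterns genuinely produces a valid tree-model (the signature extends by ignoring the new colour coordinates for the adjacency rule), but this is routine.
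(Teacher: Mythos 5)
Your proposal is correct and follows essentially the same route as the paper: encode the $p$ new unary predicates by refining the leaf colours of the tree-models (multiplying the number of colours by $2^p$), compose the resulting interpretation of the expanded class in height-$d$ coloured rooted trees with the interpretation underlying the basic transduction, and then apply Theorem~\ref{thm:shrub-depth-interpretability} to the composed simple \CMSOi interpretation. The paper handles the sentence $\chi$ exactly as you suggest, by observing that $\tau(\cf G)$ is merely \emph{contained} in the image of the composed interpretation and that shrub-depth is monotone under subclasses.
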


\begin{proof}
Let $\cf G\subseteq \TM dm$,
and let $I_1$ denote the corresponding interpretation of $\cf G$ in 
a class of $m$-coloured tree-models of depth $d$.
Assume $\tau=\tau_0\circ\varepsilon$ where $\tau_0$ is a basic
transduction and $\varepsilon$ is an expansion by $p$ unary predicates.
Since each of the $p$ predicates can be encoded by a binary label added to
the above $m$ colours, we have got that
$\varepsilon(\cf G)$ has an interpretation $I_1'$ in a class $\cf T$ of
$(2^pm)$-coloured rooted trees of height $d$.
Let $I_0$ be the simple \CMSOi interpretation underlying~$\tau_0$.
Then $\tau(\cf G)\subseteq I_0\big(I_1'(\cf T)\big)$.
Since $I_0\circ I_1'$ is again a \CMSOi interpretation,
the latter class has shrub-depth at most $d$ by
Theorem~\ref{thm:shrub-depth-interpretability} and the claim follows.
\end{proof}

We now look at the more general case of copying \CMSOi transductions.
One cannot immediately extend Theorem~\ref{thm:nc-transduction}
towards this case since, for example,
a $2$-copying transduction of the class of edge-less graphs
(shrub-depth~$1$) contains all perfect matchings (shrub-depth~$2$).
This is, however, only a technical problem which we resolve simply by
allowing ``copying'' tree-models here.

Informally, a {\em $k$-copied tree-model\/} is a tree-model $T$
as in Definition~\ref{def:tree-model}, with an exception that
every leaf of $T$ holds an ordered $\leq k$-tuple of distinct vertices of~$G$
and the existence of an edge can depend also on the tuple of a vertex
and its index within the tuple.
This is formally stated (with a twist) as follows:

\begin{defi}[$k$-copied tree-model]
\label{def:k-tree-model}
A graph $G$ has a {\em $k$-copied tree-model of $m$ colours and ``depth'' $d$}
if $G$ has an ordinary tree-model $T$ of $m$ colours and depth $d+1$ such
that every node of $T$ at distance $d$ from the root has at most~$k$
descendants (the leaves).
The class of all graphs $G$ having such a $k$-copied tree-model
is denoted by $\TMC dmk$.

A class of graphs $\cf G$ has {\em copying shrub-depth} $d$
if there exist $m,k$ such that $\cf G\subseteq\TMC dmk$,
while for all natural $m',k'$ it is $\cf G\not\subseteq\TMC{d-1}{m'}{k'}$.
\end{defi}

Notice that $\TMC dm1=\TM dm$, but this is not true in general for
higher values of~$k$.
It is not too difficult to observe that every graph class of copying shrub-depth~$d$ is
contained in a suitable $k$-copying transduction of a class of ordinary
shrub-depth~$d$.
We complement this observation with the following:

\begin{thm}
\label{thm:kc-transduction}
Let $d\geq1$, $\cf G$ be a graph class of copying shrub-depth~$d$,
and $\tau$ be a \CMSOi transduction.
Then the copying shrub-depth of $\tau(\cf G)$ is again at most~$d$.
\end{thm}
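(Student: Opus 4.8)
The plan is to funnel the whole transduction through a single simple \CMSOi interpretation into trees of height $d+1$ that retain a boundedness at their bottom level, and then to finish with Lemma~\ref{lem:tm-growing-leaves} applied ``one level deeper'' than usual --- in the same spirit in which Theorem~\ref{thm:nc-transduction} was deduced from Theorem~\ref{thm:shrub-depth-interpretability}, but now carrying the extra structure that distinguishes $\TMC dmk$ from $\TM{d+1}m$.

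First I would restate the hypothesis in interpretation form. By Definition~\ref{def:k-tree-model} we have $\cf G\subseteq\TMC dmk$ for some $m,k$; thus every $G\in\cf G$ is obtained, by the FO interpretation of Definition~\ref{def:tree-model} whose domain is exactly the leaf set, from an $m$-coloured rooted tree $T$ of height $d+1$ in which every node at distance $d$ from the root has at most $k$ children and all of them are leaves. Call such a tree \emph{$(d,k)$-thin}. Writing $\tau=\tau_1\circ\gamma\circ\varepsilon$ as in Definition~\ref{def:transduction}, I absorb the expansion $\varepsilon$ by $p$ predicates into the leaf colours exactly as in the proof of Theorem~\ref{thm:nc-transduction}; at the cost of multiplying the number of colours by $2^p$, this shows that $\varepsilon(\cf G)$ has an FO interpretation, again with domain the leaves, in a class of $(d,k)$-thin trees with boundedly many colours.

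The heart of the argument is to push the $k'$-copy operation $\gamma$ through this picture \emph{without adding a tree level}. Given a structure $H\in\varepsilon(\cf G)$ presented by a $(d,k)$-thin tree $T$, I would first recolour $T$ so that inside each bottom ``bag'' --- the set of $\le k$ leaf children of a distance-$d$ node --- all leaves get pairwise distinct colours; this costs only a bounded factor in the number of colours. Then I replace every bag $\{v_1,\dots,v_s\}$ by the $\le sk'\le kk'$ leaves $(v_t,i)$, $1\le t\le s$, $1\le i\le k'$, hung under the same distance-$d$ node, colouring $(v_t,i)$ by the pair consisting of the (recoloured) colour of $v_t$ and the copy index $i$. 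The resulting tree $T^\gamma$ is $(d,kk')$-thin with boundedly many colours, and the relational structure $H^{\times k'}=\gamma(H)$ has an FO interpretation in $T^\gamma$ with domain the leaves: an edge is present only between two leaves sharing the copy index, in which case its existence is read off from the remaining colour components and the depth of the least common ancestor exactly as in $T$ (for $u\neq v$ the least common ancestor of $(u,i)$ and $(v,i)$ in $T^\gamma$ has the same depth as that of $u$ and $v$ in $T$); each $P_i$ is read off from the colour; and $x\sim y$ holds iff $x$ and $y$ are at distance $\le 2$ in $T^\gamma$ and agree in the colour component inherited from $T$. This last clause is correct precisely because of the distinctness-within-bags recolouring, which is the single place where the argument genuinely uses that the bottom bags are of bounded size; I expect verifying this $\sim$-clause (together with checking that the copy-index condition rules out spurious edges between the $k'$ leaves over the same original vertex) to be the main technical obstacle. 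Finally, composing this FO interpretation with the \CMSOi interpretation and the precondition $\chi$ underlying the basic transduction $\tau_1$ yields one simple \CMSOi interpretation $I^\star$ of $\tau(\cf G)$ in a class $\cf T^\star$ of $(d,kk')$-thin trees of height $d+1$ with boundedly many colours, whose domain is a subset of the leaf set.

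It then remains to apply Lemma~\ref{lem:tm-growing-leaves} to $I^\star$ and $\cf T^\star$, applied with height bound $d+1$ in place of $d$. Since the domain of $I^\star$ is a subset of the leaves and all leaves of the trees in $\cf T^\star$ lie at distance $d+1$ from the root, the lemma yields, for each graph in $\tau(\cf G)$, an $m^\star$-coloured tree-model $U$ of depth $d+1$ that is moreover a rooted subtree $U\subseteq T^\gamma$ of the corresponding tree. A rooted subtree of a $(d,kk')$-thin tree is again $(d,kk')$-thin, and all its leaves still lie at distance $d+1$ from the root, so $U$ witnesses that the graph lies in $\TMC d{m^\star}{kk'}$. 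Hence $\tau(\cf G)\subseteq\TMC d{m^\star}{kk'}$, and the copying shrub-depth of $\tau(\cf G)$ is at most $d$, as claimed.
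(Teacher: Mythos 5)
Your proof is correct, but it routes around the paper's argument in one significant place. The paper first invokes the observation that every class of copying shrub-depth~$d$ is contained in a $k$-copying transduction of a class of ordinary shrub-depth~$d$, and then uses transitivity (composability) of \CMSOi transductions to reduce to the case $\cf G\subseteq\TM d{m_1}$; the single copy operation $\gamma$ is then realized by growing, under each leaf of the depth-$d$ tree-model, $k$ new leaves coloured $(c,1),\dots,(c,k)$, i.e.\ the extra level of Definition~\ref{def:k-tree-model} is created from scratch, and $\sim$ becomes simply ``having the same parent''. You instead keep the $k$-copied tree-models throughout and absorb $\gamma$ into the already existing bottom level, widening each bag from $\le k$ to $\le kk'$ leaves; the price is the within-bag recolouring needed to make $\sim$ definable (two distinct vertices in the same bag must be separated by colour), which is precisely where the boundedness of the bags enters your argument, and your verification of the least-common-ancestor depths and of the exclusion of cross-copy edges checks out. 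Both arguments then finish identically: compose with the interpretation underlying the basic transduction and apply Lemma~\ref{lem:tm-growing-leaves}, using that the domain sits on the leaves so that the produced tree-model is a rooted subtree of the input tree and hence inherits the bounded-bag property at distance~$d$. What your route buys is independence from the composition of \CMSOi transductions and from the paper's unproved preliminary observation that copying shrub-depth~$d$ classes arise as copying transductions of ordinary ones; what the paper's route buys is a cleaner one-to-one correspondence between the copy operation and the copying level of a $k$-copied tree-model, and a trivial interpretation of~$\sim$.
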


\begin{proof}
Let $\tau=\tau_0\circ\gamma\circ\varepsilon$ where 
$\tau_0$ is a basic \CMSOi transduction, $\gamma$ is a $k$-copy operation
and $\varepsilon$ is an expansion by $p$ unary predicates.

We remark that, thanks to transitivity of transductions,
it is enough to prove this statement in the case that $\cf G$ is
of ordinary shrub-depth~$d$.
So, as in the proof of Theorem~\ref{thm:nc-transduction},
$\cf G\subseteq \TM d{m_1}$,
and let $I_1$ denote an \MSOi interpretation of $\cf G$ in 
a suitable class $\cf U$ of $m_1$-coloured tree-models of depth $d$.
Then, again as before, we may say that
$\varepsilon(\cf G)$ has an interpretation $I_1'$ in the corresponding 
class $\cf U'$ of $(2^pm_1)$-coloured tree-models of depth~$d$, that is,
$\varepsilon(\cf G)=I_1'(\cf U')$.

In the next step, we aim to show that the class $\cf K=\gamma(I_1'(\cf U'))$ 
of relational structures (see Definition~\ref{def:transduction} for~$\gamma$)
actually has a simple interpretation in a suitable class $U^+$ of trees.
Here it is important that the domain of $I_1'$ (which is to be copied by~$\gamma$) 
is restricted to leaves of the trees of~$\cf U'$.
For $U\in\cf U'$, let $U^+$ be the $(k2^pm_1)$-coloured tree-model of  
depth~$d+1$ constructed as follows:
for each leaf $u$ of $U$ of colour $c$, add $k$ new descendant leaves with
the parent $u$ and of distinct colours $(c,1),\ldots,(c,k)$.
Actually, $U^+$ is also a $k$-copied tree-model of ``depth''~$d$ 
according to Definition~\ref{def:k-tree-model}.
Let $\cf U^+=\{U^+:U\in\cf U'\}$.

From the definition of $\gamma$, one can easily come up with a simple \MSOi
interpretation $I_2=(\nu_2,\mu_2)$ defining in $\cf U^+$ the underlying
graphs of the structures of~$\cf K$, and an \MSOi formula $\sigma_2$
defining the binary relation $\sim$ of $\gamma$ (while the unary relations
$P_i$ of $\gamma$ are already encoded in the colours of $\cf U^+$).
We hence give a simple interpretation $I^+=(\nu_2,\mu_2,\sigma_2)$
(naturally extending Definition~\ref{def:interpretation} for~$I^+$) such that
$I^+(\cf U^+)$ equals~$\cf K$.

Finally, let $I_0$ be the simple \CMSOi graph interpretation underlying~$\tau_0$.
We summarize that $\tau(\cf G)=\tau_0(\cf K)$ is contained in 
$(I_0\circ I^+)(\cf U^+)$.
By Lemma~\ref{lem:tm-growing-leaves}, there then exists $m$ such that
every graph $H$, where $H=(I_0\circ I^+)(U_1)$ for some $U_1\in\cf U^+$,
has an $m$-coloured tree-model $U_2$ of depth~$d+1$.
Moreover, we have $U_2\subseteq U_1$ since the domain of $I_0\circ I^+$ 
is restricted to the leaves of~$U_1$ (recall~$I_1'$),
and so $U_2$ is also a $k$-copied tree-model of
$m$ colours and ``depth''~$d$ of the graph~$H$.
Consequently, $\tau(\cf G)$ has copying shrub-depth at most~$d$.
\end{proof}

\subsection{On \MSOi-transduction hierarchy}
\label{sub:hierarchy}

The second interesting consequence of Theorems
\ref{thm:shrub-depth-interpretability} and~\ref{thm:kc-transduction}
claims that every graph class of bounded shrub-depth ``falls under'' 
precisely one of the integer values of copying
shrub-depth according to transduction equivalence (both \MSOi and \CMSOi).
This coming result is close to the main result of Blumensath and
Courcelle \cite[Theorem~6.4]{bc10} completely characterizing
the related \MSOii-transduction hierarchy (precisely, the \MSO transduction hierarchy
of the vertex-edge incidence structures of undirected graphs).

We begin with some necessary technical terms.
Fix a logical language of transductions (such as \MSOi or \CMSOi of
simple undirected graphs).
For two classes of relational structures (graphs in our case)
$\cf K,\cf L$, we write $\cf K\sqsubseteq\cf L$ if there exists a
transduction $\tau$ such that $\cf K\subseteq\tau(\cf L)$.
Similarly we write $\cf K\sqsubsetneq\cf L$ if 
$\cf K\sqsubseteq\cf L$ but $\cf L\not\sqsubseteq\cf K$,
and $\cf K\equiv\cf L$ if both $\cf K\sqsubseteq\cf L$ and
$\cf L\sqsubseteq\cf K$ hold true.

The relation $\sqsubseteq$ forms a quasi-ordering on the considered 
classes of structures, as can be easily seen~\cite{bc10}.
The research question here is to describe the underlying ordering of
$\sqsubseteq$, i.e., the {\em transduction hierarchy}.
Unlike for the aforementioned completely solved case of \MSOii-transduction hierarchy,
only a weaker partial outcome has been known regarding \MSOi transductions:
\begin{thm}[Blumensath and Courcelle \cite{bc10}]
\label{thm:hierarchy0}
Let $\cf T_d$ denote the class of all finite rooted trees of height at most~$d$.
In the scope of either \MSOi or \CMSOi transductions, the following holds
$$\cf T_1 \sqsubsetneq \cf T_2 \sqsubsetneq \cf T_3 \sqsubsetneq
	\ldots \sqsubsetneq \cf T_d \ldots \,.$$
\end{thm}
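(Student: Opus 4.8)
The plan is to treat the non-strict and the strict links of the chain separately. The non-strict inclusions are immediate: $\cf T_d\subseteq\cf T_{d+1}$ as graph classes, so the identity transduction already witnesses $\cf T_d\sqsubseteq\cf T_{d+1}$. Thus the whole content lies in the strict separations $\cf T_{d+1}\not\sqsubseteq\cf T_d$ for every $d\ge 1$, which I would derive from the stability results of the previous subsections.

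First I would observe that each $\cf T_d$, viewed as a class of connected graphs, has shrub-depth at most $d$; concretely $\cf T_d\subseteq\TM d{2^{d+1}}$. Indeed, a rooted tree of height at most $d$ has tree-depth at most $d+1$ and is connected, so the connected case of Proposition~\ref{pro:td-to-tree-model} puts it into $\TM d{2^{d+1}}$, once one notes the trivial monotonicities $\TM am\subseteq\TM b{m'}$ for $a\le b$ and $m\le m'$ (pad the tree-model by a path inserted above the root---this changes no pairwise leaf distance---and ignore the extra colours). Consequently, for an arbitrary \CMSOi transduction $\tau$, Theorem~\ref{thm:kc-transduction} applied to $\cf T_d$ yields that $\tau(\cf T_d)$ is of copying shrub-depth at most $d$, i.e.\ $\tau(\cf T_d)\subseteq\TMC dmk$ for suitable $m,k$. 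Hence $\cf T_{d+1}\not\sqsubseteq\cf T_d$ reduces to the purely combinatorial claim that $\cf T_{d+1}$ is \emph{not} of copying shrub-depth at most $d$, that is, $\cf T_{d+1}\not\subseteq\TMC dmk$ for all $m,k$.

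Proving this combinatorial claim is the core of the argument, and I would do it by exhibiting, for each $m,k$, a tree $B\in\cf T_{d+1}$ with $B\notin\TMC dmk$, by recursion on $d$. For the base case $d=1$, let $B$ have root $r$, branch vertices $b_1,\dots,b_n$ each adjacent to $r$, and $i$ pendant leaves attached to $b_i$. Suppose $B\in\TMC 1mk$, witnessed (Definition~\ref{def:k-tree-model}) by an ordinary tree-model $M$ of depth $2$ in which every depth-$1$ node---call it a \emph{bag}---has at most $k$ leaf children; note all vertices of $B$ are leaves of $M$, each in some bag. A branch vertex $b_i$ with $i\ge 2k$ can share its bag with at most $k-1$ of its $\ge 2k$ leaves, so at least $k+1$ of them lie in other bags and thus occupy at least two further bags; each such ``far'' leaf $\ell$ of $b_i$ is adjacent to $b_i$ across bags, so the signature of $M$ makes the colour pair $\big(c(\ell),c(b_i)\big)$ (at the appropriate distance in $M$) an edge. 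Now if two branch vertices $b_i,b_j$ with $i\ne j$, both of leaf-degree $\ge 2k$, get the same colour, pick a far leaf $\ell$ of $b_i$ whose bag differs from the bags of both $b_i$ and $b_j$ (possible as $b_i$ has far leaves in at least two bags and $b_j$ occupies only one); then the same colour pair also forces $\ell$ adjacent to $b_j$, contradicting that $\ell b_j\notin E(B)$. For $n$ sufficiently large (say $n>m+2k$) such a repeated colour exists among the heavy branch vertices by pigeonhole, so $B\notin\TMC 1mk$. The recursive step attaches below each $b_i$, instead of plain pendant leaves, a rooted copy of a sufficiently rich height-$d$ witness obtained at the previous level (so the height stays $d+1$), and then runs the same pigeonhole on the top-level colours, now combined with the inductive lower bound applied inside the branches and with the bounded bag size at the bottom layer.

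The hard part will be exactly this recursive lower bound: a purported $k$-copied tree-model $M$ of $B$ need not respect the branch structure of $B$ at all---the centre $r$ may lie deep in $M$, a single branch may be scattered over many bags---so the collapse has to be argued intrinsically, through constraints on which vertices of $B$ can share a bag and through the colour classes occurring along $M$, rather than by matching $M$ against the shape of $B$. Granting the combinatorial claim, the theorem follows: for every \CMSOi (in particular every \MSOi) transduction $\tau$ one has $\tau(\cf T_d)\subseteq\TMC dmk$ for some $m,k$, whereas $\cf T_{d+1}\not\subseteq\TMC dmk$ for all $m,k$; hence $\cf T_{d+1}\not\subseteq\tau(\cf T_d)$, i.e.\ $\cf T_{d+1}\not\sqsubseteq\cf T_d$. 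Together with $\cf T_d\sqsubseteq\cf T_{d+1}$ this gives the strict chain, for both \MSOi and \CMSOi transductions.
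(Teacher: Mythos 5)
The paper itself does not prove this statement: it is quoted from Blumensath and Courcelle \cite{bc10}, and the machinery of Section~\ref{sec:transdhier} is used only to \emph{complete and extend} it (Theorem~\ref{thm:hierarchy}), not to re-derive the strictness of the chain. Your attempt is therefore an independent proof, and its architecture is sound and consistent with the paper's toolkit: the non-strict inclusions are indeed trivial; $\cf T_d\subseteq\TM d{2^{d+1}}$ does follow from the connected case of Proposition~\ref{pro:td-to-tree-model} (plus the harmless padding/monotonicity you mention); Theorem~\ref{thm:kc-transduction} correctly caps the copying shrub-depth of $\tau(\cf T_d)$ at $d$ for any \CMSOi (hence any \MSOi) transduction $\tau$; and your base case $d=1$ is correct --- the ``far leaf in a bag distinct from both $b_i$ and $b_j$'' argument genuinely forces distinct colours on all branch vertices of degree at least $2k$, so pigeonhole finishes it.

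The gap is exactly where you flag it: the inductive step of the core claim $\cf T_{d+1}\not\subseteq\TMC dmk$ is only gestured at, and ``run the same pigeonhole on the top-level colours combined with the inductive lower bound inside the branches'' does not yet constitute a proof. A $k$-copied tree-model $M$ of your tree $B$ need not place the subtrees hanging below the $b_i$'s into disjoint subtrees of $M$, need not put $r$ or the $b_i$'s anywhere in particular, and may realize the adjacencies within a single branch at any of the $d+1$ possible least-common-ancestor depths; so the inductive hypothesis, which quantifies over tree-models of an entire height-$d$ witness, cannot be applied to ``the restriction of $M$ to one branch'' without first proving a structural collapse of $M$ onto the branch decomposition of $B$. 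That collapse is the actual mathematical content of the separation, and it is precisely what \cite{bc10} supplies (and what this paper's Lemma~\ref{lem:depth-down} and Lemma~\ref{lem:get-tree} echo in the other direction). As written, your argument proves the case $d=1$ of the strict chain and correctly reduces the general case to a precise combinatorial statement, but it does not establish that statement for $d\ge 2$.
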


Here we provide a full solution of \MSOi-transduction hierarchy
for graph classes of bounded shrub depth (thus completing and extending
Theorem~\ref{thm:hierarchy0}):

\begin{thm}
\label{thm:hierarchy}
For any graph class $\cf G$ of bounded shrub-depth there is 
an integer $d$ such that $\cf G\equiv\cf T_d$ in the scope of either 
\MSOi or \CMSOi transductions.
\end{thm}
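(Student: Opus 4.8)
The plan is to set $d$ to be the \emph{copying} shrub-depth of $\cf G$ and to prove the two inclusions $\cf G\sqsubseteq\cf T_d$ and $\cf T_d\sqsubseteq\cf G$. Before that I would record that $\cf T_d$ itself has copying shrub-depth exactly~$d$: the bound ``$\le d$'' is Proposition~\ref{pro:td-to-tree-model} applied to the (connected) trees of height~$\le d$, so that $\cf T_d\subseteq\TM d{2^{d+1}}=\TMC d{2^{d+1}}1$; the bound ``$\ge d$'' follows from Theorem~\ref{thm:hierarchy0}, since once the inclusion $\cf G\sqsubseteq\cf T_d$ below is available it also gives ``copying shrub-depth $\le d-1$ implies $\sqsubseteq\cf T_{d-1}$'', and copying shrub-depth of $\cf T_d$ being $\le d-1$ would then force $\cf T_d\sqsubseteq\cf T_{d-1}$, contradicting strictness. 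Together with Theorem~\ref{thm:kc-transduction} (copying shrub-depth is a transduction invariant) this pins down $d$ uniquely, so the theorem reduces to the two inclusions.

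For $\cf G\sqsubseteq\cf T_d$ I would use the observation stated just before Theorem~\ref{thm:kc-transduction}: a class of copying shrub-depth~$d$ is contained in a $k$-copying transduction of some class $\cf H$ of ordinary shrub-depth~$d$, so $\cf H\subseteq\TM dm$ for some~$m$. By the easy ``if'' direction of Theorem~\ref{thm:shrub-depth-interpretability}, $\cf H$ has a simple (even FO) interpretation in the class $\cf D$ of all $m$-coloured rooted trees of height~$d$; and $\cf D\sqsubseteq\cf T_d$ trivially by expanding uncoloured trees with the colours. Composing, $\cf G\sqsubseteq\cf H\sqsubseteq\cf D\sqsubseteq\cf T_d$ by transitivity of $\sqsubseteq$.

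The inclusion $\cf T_d\sqsubseteq\cf G$ I would prove by induction on~$d$. The base case $d=1$ is immediate: any infinite class transduces all stars, since from a member on at least $r+1$ vertices one restricts the domain to $r+1$ of them and lets $\mu(x,y)$ assert that exactly one of $x,y$ is a marked vertex~$\rho$, which yields $K_{1,r}$. For the step, let $\cf C$ be the class of all graphs induced on the leaves of a root-subtree of a $\TMC dmk$-tree-model of a member of~$\cf G$. Then $\cf C\subseteq\TMC{d-1}mk$, and the key sublemma is that $\cf C$ has copying shrub-depth \emph{exactly} $d-1$: if it were smaller, or if the root-branching of the tree-models of members of $\cf G$ were bounded, then in each such model one could identify the roots of all its root-subtrees into a single new root, lowering the model depth by one and correcting the signature (distinguishing, by extra colour bits, pairs inside one former root-subtree from pairs across two of them), which would place $\cf G$ in some $\TMC{d-1}{\cdot}{\cdot}$ — a contradiction. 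Thus by the induction hypothesis $\cf T_{d-1}\sqsubseteq\cf C$, and also $\cf C\sqsubseteq\cf G$ (expand a member of $\cf G$ by a predicate marking the leaves of one root-subtree and restrict). It now suffices to transduce from $\cf G$ the uniform family ``a fresh root $\rho$ with $n$ children $\rho_1,\dots,\rho_n$, each $\rho_\ell$ carrying a disjoint copy of the complete $n$-ary tree $R_n$ of height $d-1$'', because every tree of height $\le d$ embeds into such a tree and hence this family is $\equiv\cf T_d$. I would do this by producing, via the $\cf C$-transduction, a copy of $R_n$ on each of $n$ chosen root-subtrees of a single large $G\in\cf G$ (making $\le k$ copies of $G$ if one copy has fewer than $n$ usable root-subtrees), and then re-attaching the root: take $\rho$ to be a spare vertex (from an extra copy), and add the $n$ edges $\rho\rho_\ell$ by an expansion marking the $n$ roots. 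All of this is a single CMSO$_1$ transduction, since distances up to the fixed depth~$d$ and ``being in the same copy'' (via the relation $\sim$) are expressible.

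The main obstacle is the uniformisation in this last step, i.e.\ passing from ``$\cf C$ transduces every $R_n$'' (each from possibly a different member, with possibly different witnessing expansions) to ``one member of $\cf G$ contains arbitrarily many root-subtrees that a single fixed transduction simultaneously sends to $R_n$''. My plan is to handle it by pigeonholing on a finite set of bounded-size ``reduced types'' of root-subtrees — the set is finite because reductions of the kind in Lemma~\ref{lem:dtreekern} (equivalently, the well-quasi-ordering of Corollary~\ref{cor:treemodelWQO}) let one replace each root-subtree by a bounded representative of its type without affecting the relevant CMSO$_1$ behaviour — and then invoking the key sublemma once more: the type(s) actually needed to realise $R_n$ must occur with unbounded multiplicity across members of~$\cf G$, for otherwise the ``merge the roots'' trick would again drop the copying shrub-depth below~$d$. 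Making this quantitative (which multiplicity is ``enough'', uniformly in the transduction) is the delicate point that the full proof has to carry out carefully.
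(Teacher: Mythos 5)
Your direction $\cf G\sqsubseteq\cf T_d$ is fine and matches what the paper does implicitly (copying shrub-depth $d$ gives a $k$-copying transduction of a class in some $\TM dm$, which interprets in $m$-coloured trees of height $d$, which are an expansion of $\cf T_d$), and your normalisation of $d$ as the copying shrub-depth is exactly the paper's choice. The route you take for the hard direction $\cf T_d\sqsubseteq\cf G$ — induction on $d$ through the class $\cf C$ of root-subtree graphs — is genuinely different from the paper's, which is non-inductive: the paper combines Lemma~\ref{lem:depth-down} (if $\cf G$ does not have copying shrub-depth $\le d-1$, then for every $r$ some $G_r\in\cf G$ has the property that \emph{every} $m$-coloured depth-$d$ tree-model of $G_r$ contains $\overline T_d^{\,r}$ as a rooted subtree) with Lemma~\ref{lem:get-tree} (a single non-copying \MSOi transduction $\sigma_{d,m}$, built from an \emph{unsplittable} tree-model and FO-definable equivalences $\approx_1,\dots,\approx_d$ identifying the subtrees of each height, that extracts $\overline T_d^{\,r-1}$ from any such $G_r$). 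That combination is precisely a uniformisation device: it produces the whole complete $r$-ary tree at once, inside one graph, by one formula.

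This is where your proposal has a genuine gap, and you correctly identify it yourself but do not close it. The induction hypothesis $\cf T_{d-1}\sqsubseteq\cf C$ only tells you that for each $n$ \emph{some} member of $\cf C$, with \emph{some} choice of parameter expansion, transduces $\overline T_{d-1}^{\,n}$. To build $\overline T_d^{\,n}$ you need $n$ pairwise disjoint root-subtrees of a \emph{single} $G\in\cf G$, each simultaneously transducing $\overline T_{d-1}^{\,n}$ under one common expansion of $G$ and one fixed formula. Your pigeonhole-on-reduced-types argument shows at best that the relevant types occur with unbounded multiplicity \emph{across} members of $\cf G$, which does not yield unbounded multiplicity \emph{within} one member (consider a class where each graph has one ``deep'' root-subtree and many trivial ones: $\cf C$ still transduces $\cf T_{d-1}$, yet no single member supplies two useful root-subtrees). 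Ruling this scenario out is exactly the content of Lemma~\ref{lem:depth-down}, whose conclusion is about containment of the complete tree $\overline T_d^{\,r}$ as a rooted subtree of every tree-model — a recursive condition on all levels simultaneously, not the ``bounded root-branching or smaller depth of $\cf C$'' dichotomy your sublemma uses. Without an argument linking ``root-subtrees useful for realising $R_n$ via the inductive transduction'' to ``branches of a $\overline T_d^{\,r}$ present in every tree-model'', and without verifying that the $n$ locally chosen parameter sets can coexist in one expansion, the last step of your induction does not go through; the paper's unsplittable-model construction in Lemma~\ref{lem:get-tree} is what supplies both missing pieces.
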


Before getting to the proof, we first establish two supplementary lemmas.
Let $\overline T_d^{\,r}$ denote the complete rooted $r$-ary tree of height~$d$.

\begin{lem}
\label{lem:depth-down}
Let $\cf G$ be a graph class of bounded shrub-depth.
If there exist integers $d,m,r$ such that every graph $G\in\cf G$
has an $m$-coloured tree-model of depth $d$ not containing $\overline T_{d}^{\,r}$ 
as a rooted subtree, then the copying shrub-depth of $\cf G$ is at most~$d-1$.
\end{lem}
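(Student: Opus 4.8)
The statement says that if every $G\in\cf G$ has an $m$-coloured tree-model of depth $d$ avoiding the complete $r$-ary tree $\overline T_d^{\,r}$ as a rooted subtree, then the copying shrub-depth of $\cf G$ drops to $\le d-1$. The natural strategy is induction on $d$. If a tree-model $T$ of depth $d$ omits $\overline T_d^{\,r}$, then the root of $T$ has some number $s$ of children, but we cannot bound $s$ directly — however, the subtrees hanging below the root are themselves tree-models of depth $d-1$, and each of them must omit $\overline T_{d-1}^{\,r}$ (otherwise $T$ would contain $\overline T_d^{\,r}$ once we also knew the root has $\ge r$ such children, which need not hold — so the real content is at the root). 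So I would peel off the top level: group the $s$ root-subtrees into isomorphism classes of coloured tree-models of depth $d-1$; by the inductive hypothesis each such subtree already has a $k$-copied tree-model of ``depth'' $d-2$, equivalently an ordinary tree-model of depth $d-1$ with $\le k$ leaves below each depth-$(d-1)$ node. The point of the $\overline T_d^{\,r}$-avoidance is that it forces the number of distinct \emph{isomorphism types} of depth-$(d-1)$ root-subtrees — combined with multiplicities — to be controlled in the right way to re-encode one full level into colours and a bounded copy-number.

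**Key steps, in order.** First I would set up the base case $d=1$: a tree-model of depth $1$ avoiding $\overline T_1^{\,r} = K_{1,r}$ as a rooted subtree means the root has fewer than $r$ children, i.e. $G$ has fewer than $r$ vertices; such a single finite graph has copying shrub-depth $0$ (indeed shrub-depth $0$), matching $d-1=0$. Second, for the inductive step, take $T$ of depth $d$ for $G\in\cf G$. Consider the root $\rho$ and let $T_1,\dots,T_s$ be the subtrees rooted at the children of $\rho$. Each $T_i$ is a tree-model of depth $d-1$ of the induced subgraph $G_i := G[\,V(G)\cap \mathrm{leaves}(T_i)\,]$, and $T_i$ cannot contain $\overline T_{d-1}^{\,r}$ (a rooted copy of $\overline T_{d-1}^{\,r}$ inside some $T_i$, together with $r-1$ further children of $\rho$, would give $\overline T_d^{\,r}$ in $T$ — here one does need that $s\ge r$, which is exactly the situation we must handle separately: if $s<r$ we are already ``flat'' at the root and handle it by the copying bookkeeping below, if $s\ge r$ the avoidance is inherited by every subtree). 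Third, apply the inductive hypothesis (to the class $\{G_i\}$, which has bounded shrub-depth) to get a uniform $m'$ and $k'$ and, for each $i$, a $k'$-copied tree-model $U_i$ of ``depth'' $d-2$ of $G_i$, i.e. an ordinary tree-model of depth $d-1$ in which every depth-$(d-1)$ node has $\le k'$ leaf-descendants. Fourth — the re-assembly — I reattach all the $U_i$ under a common root $\rho$ to form $U$ of depth $d$, and I claim $U$ is a $k'$-copied tree-model of ``depth'' $d-1$ of $G$: every node of $U$ at distance $d-1$ from $\rho$ is a node of some $U_i$ at distance $d-2$ from the root of $U_i$, hence has $\le k'$ leaf descendants, exactly as Definition~\ref{def:k-tree-model} requires. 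The edges between leaves in distinct $U_i$ are governed by the depth-$d$ entry of the signature of $T$, which depends only on the two colours, so the signature of $U$ is obtained by combining the signatures of the $U_i$ (for within-block edges) with that top-level rule; colours across blocks must be made disjoint, which multiplies $m'$ by the number of isomorphism types of the $U_i$, and \emph{this} is where $\overline T_d^{\,r}$-avoidance is used to bound that number of types uniformly over $\cf G$ (the subtrees live in a fixed class of trees of bounded size, by the inductive hypothesis $U_i$ has size bounded in terms of $m',k',d$, so there are only finitely many iso types — no avoidance even needed for \emph{that}; avoidance is really what let the induction start, by bounding the ``width'' needed at depth $d$ relative to depth $d-1$). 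Finally conclude: $\cf G\subseteq\TMC{d-1}{m}{k'}$ for the resulting $m$, hence copying shrub-depth $\le d-1$.

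**Main obstacle.** The delicate point is the interaction between the root's branching $s$ and the avoidance hypothesis: if $s\ge r$ then avoidance of $\overline T_d^{\,r}$ forces each root-subtree to avoid $\overline T_{d-1}^{\,r}$ and the induction runs cleanly; but if $s<r$, the hypothesis gives nothing about the subtrees, and one must instead argue that a tree-model whose root has $<r$ children is ``essentially'' a tree-model of depth $d-1$ enriched with at most $r$ extra leaf-groups — i.e. absorb the single top level into the copying parameter $k$ (boosting $k$ by a factor depending on $r$ and on the sizes of the subtrees), which is legitimate precisely because ``copying'' tolerates a bounded number of copies per leaf. Getting the bookkeeping of $m$ and $k$ right in both cases simultaneously — and making sure the bounds are uniform over the whole class $\cf G$ and do not accumulate unboundedly through the $d$ levels of induction — is the technical heart of the argument; everything else is routine re-encoding of signatures.
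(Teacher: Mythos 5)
Your inductive strategy founders on the inheritance step. You claim that (when $s\ge r$) a copy of $\overline T_{d-1}^{\,r}$ inside a single root-subtree $T_i$, ``together with $r-1$ further children of $\rho$'', yields $\overline T_{d}^{\,r}$ in $T$. It does not: $\overline T_{d}^{\,r}$ requires $r$ children of the root \emph{each} carrying a copy of $\overline T_{d-1}^{\,r}$. Avoidance of $\overline T_{d}^{\,r}$ therefore only guarantees that \emph{at most $r-1$} of the root-subtrees contain $\overline T_{d-1}^{\,r}$, regardless of $s$; each of those ``fat'' subtrees may contain $\overline T_{d-1}^{\,r'}$ for arbitrarily large $r'$ (take $d=2$, $r=2$: the root may have one child with $n$ leaf-children and many children with a single leaf each). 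The inductive hypothesis does not apply to the fat subtrees, and reattaching them under $\rho$ unchanged violates the required bound on the number of leaves below depth-$(d-1)$ nodes, so the reassembly step fails exactly where the content of the lemma lies. Your fallback for $s<r$ --- absorbing the top level into $k$ ``by a factor depending on the sizes of the subtrees'' --- is not a uniform bound either, since those sizes are unbounded over $\cf G$. (What one should do with a fat subtree is contract its root into $\rho$ and subdivide its leaf edges, recording the lost information in a label --- i.e.\ absorb a level of the \emph{tree}, not of the copying parameter --- but the ``fat above, thin below'' pattern can then recur at every level along different branches, so a root-peeling induction does not compose cleanly.)

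The paper's proof avoids induction altogether. It takes the least set $R\subseteq V(U)$ containing all leaves and closed under ``has at least $r$ children in $R$''; avoidance of $\overline T_{d}^{\,r}$ is precisely the statement that the root is not in $R$, so every root-to-leaf path crosses the cut $F$ of edges going from $V(U)\setminus R$ down into $R$, while every node outside $R$ is the parent end of at most $r-1$ edges of $F$. Contracting each non-leaf edge of $F$ and subdividing the leaf edges below it, with at most $(r-1)d$ extra labels recording what was contracted, turns $U$ in one stroke into an $(r-1)$-copied tree-model of depth $d-1$. If you wish to keep a top-down viewpoint, you must locate this ``thinning level'' separately on each branch; the set $R$ is exactly the device that does so.
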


\begin{proof}
Let $G$ have a tree-model with the underlying rooted tree $U$ of height
$d$ such that $\overline T_{d}^{\,r}\not\subseteq U$.
We have borrowed the following high-level proof idea from \cite[Lemma~4.12]{bc10}.

Let $R\subseteq V(U)$ be the minimal (by inclusion) set of nodes
such that $R$ contains all the leaves of $U$, 
and $R$ contains every internal node of $U$ which has at least $r$ of its children in~$R$.
Let $F\subseteq E(U)$ be the set of edges having the child end in $R$ and the
parent end in $V(U)\setminus R$.
The root of $U$ is not in~$R$ since $\overline T_{d}^{\,r}\not\subseteq U$.
So, every root-to-leaf path in $U$ contains an edge from~$F$.
Moreover, every internal node of $U$ that is not in $R$, has at most $r-1$
of its child edges in~$F$ (or it would be added to~$R$).

Now, to every non-leaf edge $f\in F$ with the parent end $v$ we assign a label
$\ell_f=(i,j)$, where $0\leq i\leq d-2$ is the distance of $v$ from the root and 
$1\leq j<r$ is the index of $f$ among all $F$-edges incident with~$v$
(in an arbitrary fixed ordering of the children).
Then, in the subtree $U_f$ below~$f$ in~$U$,
we subdivide all the leaf edges of $U_f$ and
we add the label $\ell_f$ to (the colours of) the leaves of $U_f$.
Then we contract $f$.
Let $U'$ denote the resulting labelled tree (which is again of height~$d$).
One can routinely verify that information additionally provided by the added
labels ($\ell_f$) is sufficient for $U'$ to be a tree-model of~$G$, too.
Furthermore, our construction of~$U'$ guarantees that $U'$ actually is
an $(r-1)$-copied tree-model of depth~$d-1$, as in Definition~\ref{def:k-tree-model}.
Since this holds, with the same $d,r$, for every $G\in\cf G$,
the copying shrub-depth of $\cf G$ is at most~$d-1$.
\end{proof}

\begin{lem}
\label{lem:get-tree}
For every integers $d,m\geq1$ there exists a non-copying 
\MSOi transduction $\sigma_{d,m}$ such that the following holds:
if, for an integer $r$ and a graph $G\in\TM dm$,
every $m$-coloured tree-model of depth $d$ of $G$ contains
the tree $\overline T_{d}^{\,r}$ as a rooted subtree,
then $\overline T_{d}^{\,r-1}\in \sigma_{d,m}(G)$.
\end{lem}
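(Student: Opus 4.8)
The plan is to produce, for a given $G\in\TM dm$ satisfying the hypothesis, one suitable ``guess'' of $\sigma_{d,m}$ witnessing $\overline T_{d}^{\,r-1}\in\sigma_{d,m}(G)$. First I would fix any $m$-coloured tree-model $T$ of depth $d$ of $G$, with signature $S$; by the hypothesis $T$ contains a complete rooted $r$-ary subtree $B$ of height $d$ sharing the root of $T$, whose $r^d$ leaves I denote by $L\subseteq V(G)$. The idea is to carve a copy of $\overline T_{d}^{\,r-1}$ out of $B$ in a distance-preserving way: recursively, at every internal node $w$ of $B$ keep the first $r-1$ of its $r$ child-subtrees as ``branches'' and reserve the last one. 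This assigns to every node $v$ of the abstract tree $\overline T_{d}^{\,r-1}$ a subtree $B_v\subseteq B$ (with $B_{\mathrm{root}}=B$ and the $r-1$ children of $v$ receiving the $r-1$ kept branches of $B_v$), and lets me choose for each $v$ a representative leaf $f(v)\in L$, namely a leaf of the reserved branch of $B_v$ (and $f(v)$ the unique leaf of $B_v$ when $v$ is a leaf of $\overline T_{d}^{\,r-1}$). One checks that the leaves $f(v)$, over $v\in V(\overline T_{d}^{\,r-1})$, are pairwise distinct, and that for all $v,v'$ the $T$-depth of $f(v)\wedge f(v')$ equals the depth of $v\wedge v'$ in $\overline T_{d}^{\,r-1}$. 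This ``reservoir'' trick is what keeps the loss in arity down to exactly one; a Ramsey-type uniformisation of $B$ would be far too wasteful.

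Next I would use the expansion part of $\sigma_{d,m}$, by boundedly many (in terms of $d,m$) unary predicates, to decorate $V(G)$ with exactly the data the interpretation needs: the set $W'=\{f(v):v\in V(\overline T_{d}^{\,r-1})\}$; and, on each $w=f(v)\in W'$, the depth of $v$ in $\overline T_{d}^{\,r-1}$ (an element of $\{0,\dots,d\}$), the colour of $w$ in $T$, and the tuple of $T$-colours of the representatives $f(u)$ of the ancestors $u$ of $v$ (a word of length $\le d+1$ over $\{1,\dots,m\}$). The precondition $\chi$ checks that this annotation is internally consistent. The interpretation then takes $\nu$ to select $W'$ and $\mu$ to define the parent relation of $\overline T_{d}^{\,r-1}$ on $W'$: declare $w$ to be the parent of $w'$ exactly when their marked depths are some $j,j+1$ and the $T$-depth of $w\wedge w'$ equals $j$; the root is then the unique $W'$-vertex of marked depth $0$. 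Since the marked depths already force the $T$-depth of $w\wedge w'$ to be at most $j$, what remains is to decide, from inside $G$, whether this bound is tight.

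That decision is the crux, and I expect it to be the main obstacle: the graph $G$ exposes the $T$-depth of $w\wedge w'$ only through the truth value of ``$(i,j,\ell)\in S$'', which need not be injective in $\ell$, while the formula $\mu$ is fixed and has no access to $S$. The plan to circumvent this is to exploit the hypothesis that \emph{every} $m$-coloured depth-$d$ tree-model of $G$ contains $\overline T_{d}^{\,r}$: if on the leaves of $B$ the $G$-edge relation collapsed too many distinct $T$-distances, then a large complete subtree of $B$ would induce a ``homogeneous'' piece of $G$ that could be folded to a single node of $T$, producing a strictly thinner tree-model of $G$ and contradicting the hypothesis. Hence the witnessing $T$ and $B$ can be chosen so that, with the marked colours and ancestor-colour tuples localising $w\wedge w'$ to within a bounded set of candidate $T$-depths, $\mu$ can decide tightness by a bounded case analysis on the marked data together with $G$-adjacency. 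Once $\mu$ is verified to define exactly the parent relation of $\overline T_{d}^{\,r-1}$ on $W'$, this guess of $\sigma_{d,m}$ on $G$ outputs $\overline T_{d}^{\,r-1}$, and the lemma follows.
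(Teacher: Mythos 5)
Your overall architecture is the same as the paper's: mark a bounded amount of data by unary predicates, choose one representative leaf per node of the tree-model (sacrificing one branch per node, which is exactly why the arity drops from $r$ to $r-1$), and interpret the parent relation on these representatives. However, the step you yourself flag as the crux --- deciding, inside $G$, whether the $T$-depth of $f(v)\wedge f(v')$ is exactly $j$ rather than smaller --- is a genuine gap, and the fix you sketch does not close it. The hypothesis that \emph{every} $m$-coloured depth-$d$ tree-model of $G$ contains $\overline T_{d}^{\,r}$ does not imply that the signature $S$ is injective in the depth coordinate for the colour pairs of your representatives: the subtree $B$ can be forced to be bushy entirely by adjacencies between its leaves and leaves \emph{outside} $B$, while the adjacency pattern among the representatives $f(v)$ themselves is completely homogeneous. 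Your proposed ``folding'' contradiction fails for the same reason: merging two sibling subtrees that look homogeneous from inside $B$ changes their least-common-ancestor depths with outside leaves, so the folded tree need not be a tree-model of $G$ at all; and even when it is, collapsing one homogeneous piece does not destroy every copy of $\overline T_{d}^{\,r}$ in the model, so no contradiction with the hypothesis arises. What you actually need is homogeneity with respect to \emph{all} other vertices (a module/twin condition), and detecting its failure is precisely the missing ingredient.

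The paper resolves this differently. It first passes to an \emph{unsplittable} tree-model $U$ of $G$ (one in which no subtree can be split into two sibling copies, each restricted to a part of its leaves, while still representing $G$); unsplittability is exactly the guarantee that any two leaves lying in different subtrees below a node are distinguished by their adjacencies to some other vertices. It then takes a bounded ``reduced'' subtree $U_0\subseteq U$, labels its constantly many leaves individually, and invokes \cite[Lemma~5.10]{gh15} to obtain an FO-definable equivalence $\sim$ expressing ``$x,y$ lie in the same component of $U-V(U_0)$''. From $\sim$ one builds, level by level, FO-definable equivalences $\approx_i$ capturing ``$x,y$ lie below a common node of height $i$'', and the tree is then interpreted on recursively chosen representatives of the $\approx_i$-classes. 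So the least-common-ancestor structure is recovered globally from definable equivalences backed by unsplittability, not from the direct adjacency of the two representatives; without this (or an equivalent substitute) your $\mu$ cannot be written down, and the proof does not go through.
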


We remark that Blumensath and Courcelle [personal communication]
have established a statement similar to Lemma~\ref{lem:get-tree}, 
but it has not been published. 
For the sake of completeness, we give our independent proof here.

\begin{proof}
Our strategy is to construct a very specific tree-model $U$ of~$G$,
such that we can interpret in suitably labelled $G$ 
a tree $U'\subseteq U$ which is ``nearly~$U$'' in the sense that 
only one child of each node of the underlying tree of $U$ is missing 
(it is used to represent this node instead).
From the assumption $\overline T_{d}^{\,r}\subseteq U$ we can then conclude
that $\overline T_{d}^{\,r-1}$ will be contained in
the respective non-copying transduction image $\sigma_{d,m}(G)$ of this
interpretation.

We use technical terms from \cite{gh15}.
Assume $T$ is a tree-model of~$G$, with an internal node $u$, and 
let $W$ be the set of leaves of $T_u$.
We say that a tree-model $T'$ is obtained from $T$ by 
{splitting $T_u$ along} $X\subseteq W$ if
a disjoint copy $T_u'$ of $T_u$ with the same parent is added into $T$, 
and then $T_u$ is restricted to the leaves in $W\sem X$
while $T_u'$ is restricted to those in the corresponding copy of~$X$.
A tree-model $T$ is {\em unsplittable} if no such splitting $T'$ of $T$
represents the same graph $G$ as $T$ does.

Fix now an unsplittable tree-model $U$ of $G$ (which obviously exists, by
repeated splits).
Let $U$ be $Q$-reduced to $U_0\subseteq U$ for $Q\equiv2$
(cf. Section~\ref{sub:shrub-depth-interpretability} for ``reduced''),
where $U_0$ is of constant size depending on $d,m$.
We colour the vertices of $G$ by their colours in $T$,
and additionally give individual distinguishing labels to those 
(constantly many) vertices which are the leaves of~$U_0$.
\begin{itemize}
\item By \cite[Lemma~5.10]{gh15},
there exists an FO-definable relation $\sim$ (depending on~$U_0$)
on the vertices not in $U_0$
such that $G\models x\sim y$ if, and only if,
$x,y$ are leaves of the same component (subtree) of $U-V(U_0)$.
\end{itemize}

From $\sim$ one can recursively construct FO-definable relations $\approx_i$
on $V(G)$, for $i=1,\dots,d$, such that the following holds:
$G\models x\approx_i y$ if, and only if, there is a node $w$ of $U$
such that $U_w$ is of height $i$ and $x,y\in V(U_w)$.
The precise technical details are analogous to the proof of
\cite[Theorem~5.14]{gh15}, and we refrain from repeating them here.

Finally, from each equivalence class of $\approx_1$ we choose an arbitrary
representative, and give all these representatives a new label $\nu_1$.
Recursively, from each equivalence class of $\approx_i$, $i\geq2$, 
we choose a representative among those labelled $\nu_{i-1}$,
and give them an additional label $\nu_i$.
We can now easily interpret the desired tree $U'$ in $G$ using the
relations $\approx_i$ and the labels $\nu_i$.
Consequently, since $\overline T_{d}^{\,r-1}\subseteq U'$,
this gives $\sigma_{d,m}$ such that 
$\overline T_{d}^{\,r-1}\in\sigma_{d,m}(G)$. 
\end{proof}

We can now finish the main result.

\begin{proof}[Proof of Theorem~\ref{thm:hierarchy}]
Let us consider any graph class $\cf G$ of copying shrub-depth~$d$,
and fix $m$ be such that $\cf G\subseteq\TM dm$.
Since the copying shrub-depth of $\cf G$ is not $d-1$, by
Lemma~\ref{lem:depth-down},
we obtain that for every integer $r$ there exists $G_r\in\cf G$
such that, every $m$-coloured tree-model of $G_r$ of depth $d$
contains $\overline T_{d}^{\,r}$ as a rooted subtree.
Then, by Lemma~\ref{lem:get-tree}, there is an \MSOi transduction
$\sigma_{d,m}$ (independent of~$r$) such that
$\overline T_{d}^{\,r-1}\in \sigma_{d,m}(G_r)$.
Hence, $\overline{\cf T}_d:=\big\{
 \overline T_{d}^{\,s}: s\in\mathbb N \big\}
 \subseteq\sigma_{d,m}(\cf G)$.
Since $\cf T_d$ is easily a transduction of $\overline{\cf T}_d$,
we conclude that $\cf G\equiv\cf T_d$.
\end{proof}

\section{Concluding notes}
%%%%%%%%%%%%%%%%%%%%%%%%%%%%%%%%%%%%%%%%%%%%%%%%%%%%%%%%%%%%%%%%%%%%%%%%
\label{sec:conclu}

The structural properties of classes of bounded shrub-depth,
in Section~\ref{sec:shrub}, leave one important question widely open:
what is a nice asymptotic structural characterization of graph 
classes of unbounded shrub-depth?
There are indications, related to matroid theory and to the notion
of rank-depth by DeVos, Kwon and Oum, that the following might be 
the ultimate answer:
\begin{itemize}
\item[] \cite[\bf Conjecture~6.3]{hkoo16}
A class $\cf C$ of graphs is of bounded shrub-depth if, and only if,
there exists an integer $t$ such that no graph $G\in\cf C$ 
contains a path of length $t$ as a vertex-minor.
\end{itemize}

On the other hand, in relation to the transduction hierarchy studied
in Section~\ref{sec:transdhier}, the following seems a plausible conjecture:
\begin{conj}
A class $\cf G$ of graphs is of bounded shrub-depth if, and only if,
for every \CMSOi transduction $\tau$ there exists an integer $t$ such that
$P_t\not\in\tau(\cf G)$.
\end{conj}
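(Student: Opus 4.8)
The plan is to prove the two implications of the conjecture separately, and the forward one is essentially a corollary of the results already established. I would argue its contrapositive: if a \CMSOi transduction $\tau$ satisfies $P_t\in\tau(\cf G)$ for every integer $t$, then $\cf G$ has unbounded shrub-depth. Indeed, bounded shrub-depth of $\cf G$ implies bounded copying shrub-depth (as $\TM dm=\TMC dm1$), so by Theorem~\ref{thm:kc-transduction} one gets $\tau(\cf G)\subseteq\TMC dmk$ for some $d,m,k$, hence $\tau(\cf G)\subseteq\TM{d+1}m$ by Definition~\ref{def:k-tree-model}; but then Theorem~\ref{thm:Path-TM} bounds the length of every path in $\tau(\cf G)$ by $3\cdot2^m-4$, contradicting $P_{3\cdot2^m-3}\in\tau(\cf G)$. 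One could equally route this through Theorem~\ref{thm:hierarchy}: $\cf G\equiv\cf T_d$ for some $d$, and $\cf T_d$ has bounded shrub-depth, so it cannot transduce all paths.

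All the difficulty lies in the converse, which is precisely why the statement is a conjecture and not a theorem. The plan here is to go through vertex-minors and the framework of \cite{hkoo16}. Assume $\cf G$ has unbounded shrub-depth. The first step is to establish --- this is Conjecture~6.3 of \cite{hkoo16}, itself still open --- that $\cf G$ then contains paths of unbounded length as vertex-minors; the most promising route seems to be the branch-depth/rank-depth theory of DeVos, Kwon and Oum mentioned in the introduction, since rank-depth is asymptotically equivalent to shrub-depth and a long path is the canonical obstruction to bounded branch-depth. The second step is to realise the operation ``take a long-path vertex-minor'' by a single \MSOi transduction: a local complementation at a vertex marked by the expansion is a non-copying \MSOi transduction, a composition of a \emph{bounded} number of these is again one (transductions compose), and an induced-subgraph restriction is trivially a transduction. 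So the crux is to show that arbitrarily long path vertex-minors of members of $\cf G$ can be reached using only a \emph{bounded} number $b=b(\cf G)$ of local complementations, or, failing that, to encode the entire (unbounded) complementation sequence statically in the vertex labels read by one transduction --- in the same spirit in which bounded-depth NLC relabellings are absorbed into the leaf colours in Definition~\ref{def:tree-model}.

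The main obstacle is exactly this ``bounded number of local complementations'' question, compounded by the absence of any structural description of classes of unbounded shrub-depth (the open problem opening Section~\ref{sec:conclu}). I would attack it first in the sub-case where $\cf G$ additionally has bounded clique-width, equivalently bounded rank-width: there a bounded-width decomposition is available in which one can hope to define the complementation sequence level by level, and this sub-case already covers the motivating examples such as paths and half-graphs. The complementary sub-case, $\cf G$ of unbounded rank-width, would be handled separately, plausibly by first isolating and proving the natural companion statement that every graph class of unbounded rank-width admits a \CMSOi transduction onto the class of all paths.
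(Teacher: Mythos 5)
The statement you are addressing is stated in the paper only as a \emph{conjecture}; the paper proves nothing beyond the remark immediately following it, namely that the ``only if'' half follows from Theorems~\ref{thm:Path-TM} and~\ref{thm:kc-transduction}. Your argument for that half is correct and is essentially the paper's: bounded shrub-depth gives bounded copying shrub-depth, Theorem~\ref{thm:kc-transduction} keeps $\tau(\cf G)$ inside some $\TMC dmk\subseteq\TM{d+1}m$, and Theorem~\ref{thm:Path-TM} then excludes $P_t$ for $t=3\cdot 2^m-3$ (the bound there depends only on the number of colours, not on the depth, which is exactly what makes this work).

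For the ``if'' half you offer a programme, not a proof --- and so does the paper, which observes precisely what you observe: this direction is a weakened form of \cite[Conjecture~6.3]{hkoo16}, which is open. One place where your plan is more pessimistic than necessary: the step ``realise a long-path vertex-minor by a single transduction'' is not the obstacle. By the work of Courcelle and Oum on vertex-minors and isotropic systems, the set of all vertex-minors of a graph is the image of a single non-copying \CMSOi transduction --- the counting-modulo-$2$ predicate absorbs an arbitrarily long sequence of local complementations, in the spirit of your ``encode the sequence statically in the labels'' fallback --- which is exactly why the paper can assert that the ``if'' direction is \emph{weaker} than \cite[Conjecture~6.3]{hkoo16}. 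So there is no need to bound the number of local complementations, and no need for the case split on rank-width. The genuine remaining gap is the purely structural statement you name first: that unbounded shrub-depth forces arbitrarily long paths as vertex-minors. That is open, and until it (or some substitute) is proved, neither you nor the paper has a proof of the conjecture; your write-up is honest about this, which is the right way to present it.
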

While the `only if' direction follows from Theorems~\ref{thm:Path-TM}
and~\ref{thm:kc-transduction}, the `if' direction can be seen as a weaker
form of \cite[Conjecture~6.3]{hkoo16} since a vertex-minor can be captured
by a non-copying \CMSOi transduction.

\smallskip
Finally, we briefly mention a natural extension of the shrub-depth
notion to general relational structures (e.g., digraphs).
Regarding Definition~\ref{def:tree-model} of a tree-model,
the extension is straightforward.
For any (finite) signature of a relational structure~$\cf S$,
the domain of $\cf S$ is again the set of leaves of~$T$,
and we consider (one of) its $k$-ary relational symbols $R$.
We state that, for an ordered $k$-tuple $x_1,\dots,x_k$ from the domain,
$R(x_1,\dots,x_k)$ depends only on the colours of $x_1,\dots,x_k$
and the shape of the rooted Steiner tree of the leaves $x_1,\dots,x_k$.
Hence we can define the shrub-depth as in Definition~\ref{def:shrub-depth}
for any class of relational structures of a given finite signature.
(Notice, though, that SC-depth does not extend this way.)

With the previous definition, one may readily extend
Theorem~\ref{thm:shrub-depth-interpretability} to classes of relational
structures of a fixed finite signature.
In fact, it is enough to provide a corresponding extension of
the technical Lemma~\ref{lem:automorphpair},
and the rest of the arguments go through smoothly.
The question of the lower levels of the \MSO-transduction hierarchy,
as in Theorem~\ref{thm:hierarchy}, of such classes is left for future
investigation.

\bibliographystyle{abbrv}
\bibliography{gtbib}

\end{document}